
\documentclass[envcountsame]{llncs}

\usepackage{amssymb, amsmath,graphicx,graphics,enumerate, tkz-graph}
\usepackage[hypertexnames=false]{hyperref}
\usepackage[T1]{fontenc}
\usepackage[utf8]{inputenc}
\usepackage{lmodern}
\usepackage{microtype}

\oddsidemargin=1.4cm
\evensidemargin=1.4cm
\textwidth=13.1cm
\textheight=22cm
\topmargin=-1cm

\DeclareMathOperator{\net}{net}
\DeclareMathOperator{\bull}{bull}

\DeclareMathOperator{\house}{house}

\DeclareMathOperator{\tw}{tw}
\DeclareMathOperator{\cw}{cw}

\newcommand{\ssi}{\subseteq_i}

\newcommand{\NP}{{\sf NP}}

\newcounter{ctrclaim}[theorem]
\newcounter{ctrcase}[theorem]
\newcounter{ctrsubcase}[ctrcase]
\renewcommand{\thectrsubcase}{\thectrcase\alph{ctrsubcase}}
\makeatletter
\@addtoreset{ctrclaim}{lemma} 
\@addtoreset{ctrcase}{lemma}
\makeatother
\newcounter{ctrfake}
\makeatletter
\@addtoreset{ctrclaim}{ctrfake} 
\@addtoreset{ctrcase}{ctrfake}
\makeatother

\newenvironment{enumeratei}{\begin{enumerate}[(i)]}{\end{enumerate}}

\newcommand\faketheorem[1]{\refstepcounter{ctrfake}{\bf #1}}
\newcommand\displaycase[1]{{\bf #1}}

\newcommand{\clm}[1]{\medskip\phantomsection\refstepcounter{ctrclaim}\noindent\displaycase{Claim \thectrclaim. }{\em #1}\\}
\newcommand{\thmcase}[1]{\medskip\phantomsection\refstepcounter{ctrcase}\noindent\displaycase{Case \thectrcase: }{\em #1}\\}
\newcommand{\thmsubcase}[1]{\medskip\phantomsection\refstepcounter{ctrsubcase}\noindent\displaycase{Case \thectrsubcase: }{\em #1}\\}

\interfootnotelinepenalty=10000

\bibliographystyle{abbrv}
\pagestyle{plain}
\title{Bounding the Clique-Width of\\ ${H}$-free Chordal Graphs\thanks{
An extended abstract of this paper appeared in the proceedings of MFCS 2015~\cite{BDHP15}.
The research in this paper was supported by EPSRC (EP/K025090/1).
The third author is grateful for the generous support of the Graduate (International) Research Travel Award from Simon Fraser University and Dr. Pavol Hell's NSERC Discovery Grant.}}
\author{Andreas Brandst{\"a}dt\inst{1} \and Konrad K. Dabrowski\inst{2} \and\\ Shenwei Huang\inst{3} \and Dani\"el Paulusma\inst{2}}
\institute{
Institute of Computer Science, Universität Rostock,\\
Albert-Einstein-Straße 22, 18059 Rostock, Germany\\
\texttt{ab@informatik.uni-rostock.de}
\and
School of Engineering and Computing Sciences, Durham University,\\
Science Laboratories, South Road,
Durham DH1 3LE, United Kingdom
\texttt{\{konrad.dabrowski,daniel.paulusma\}@durham.ac.uk}
\and
School of Computing Science, Simon Fraser University,\\
8888 University Drive, Burnaby B.C., V5A 1S6, Canada\\
\texttt{shenweih@sfu.ca}
}

\begin{document}
\maketitle
\setcounter{footnote}{0}

\begin{abstract}
A graph is $H$-free if it has no induced subgraph isomorphic to~$H$.
Brandst{\"a}dt, Engelfriet, Le and Lozin proved that the class of chordal graphs with independence number at most~$3$ has unbounded clique-width.
Brandst{\"a}dt, Le and Mosca erroneously claimed that the gem and the co-gem are the only two 1-vertex $P_4$-extensions~$H$
for which the class of $H$-free chordal graphs has bounded clique-width. 
In fact we prove that bull-free chordal and co-chair-free chordal graphs have clique-width at most~$3$ and~$4$, respectively.
In particular, we find four new classes of $H$-free chordal graphs of bounded clique-width.
Our main result, obtained by combining new and known results, provides a classification of all but two stubborn cases, that is, with two potential exceptions we determine {\em all} graphs~$H$ for which the class of $H$-free chordal graphs has bounded clique-width.
We illustrate the usefulness of this classification for classifying other types of graph classes by proving that the class of $(2P_1+\nobreak P_3,\allowbreak K_4)$-free graphs has bounded clique-width via a reduction to
$K_4$-free chordal graphs.
Finally, we give a complete classification of the (un)boundedness of clique-width of $H$-free weakly chordal graphs.
\end{abstract}

\section{Introduction}\label{sec:intro}
\begin{sloppypar}
Clique-width is a well-studied graph parameter; see for example the surveys of Gurski~\cite{Gu07} and Kami\'nski, Lozin and Milani\v{c}~\cite{KLM09}.
In particular, there are numerous graph classes, such as those that can be characterized by
one or more forbidden induced subgraphs,\footnote{See also the Information System on Graph Classes and their Inclusions~\cite{isgci}, which keeps a record of graph classes for which (un)boundedness of clique-width is known.} for which it has been determined
whether or not the class is of {\em bounded clique-width}
(i.e. whether there is a constant~$c$ such that the clique-width of every graph in the class is at most~$c$).
Similar research has been done for variants of clique-width, such as linear clique-width~\cite{HMP12} and power-bounded clique-width~\cite{BGMS14}.
Clique-width is also closely related to other graph width parameters. For instance, it is known that every graph class of bounded treewidth has bounded clique-width but the reverse is not true~\cite{CR05}. Moreover, for any graph class, having bounded clique-width is equivalent to having bounded rank-width~\cite{OS06} and also equivalent to having bounded NLC-width~\cite{Johansson98}.
\end{sloppypar}

Clique-width is a very difficult graph parameter to deal with and our understanding of it is still very limited. 
We do know that computing clique-width is \NP-hard~\cite{FRRS09} but 
we do not know if there exist polynomial-time algorithms for computing the clique-width of even very
restricted graph classes, such as unit interval graphs. Also the problem of
deciding whether a graph has clique-width at most~$c$ for some fixed constant~$c$ is only known to be polynomial-time solvable if
$c\leq 3$~\cite{CHLRR12} and is a long-standing open problem for $c\geq 4$.
Identifying more graph classes of bounded clique-width and determining what kinds of structural properties ensure that a graph class has bounded clique-width increases our understanding of this parameter.
Another important reason for studying these types of questions is that certain classes of \NP-complete problems
become polynomial-time solvable on any graph class~${\cal G}$ of bounded clique-width.\footnote{This follows from results~\cite{CMR00,EGW01,KR03b,Ra07} that assume the existence of a so-called $c$-expression of the input graph $G\in {\cal G}$ 
combined with a result~\cite{Oum08} that such a $c$-expression can be obtained in cubic time for some 
$c\leq 8^{\cw(G)}-1$, where~$\cw(G)$ is the clique-width of the graph~$G$.}
Examples of such problems are
those definable in Monadic Second Order Logic using quantifiers on vertices but not on edges.

In this paper we primarily focus on chordal graphs.
The class of chordal graphs has unbounded clique-width, as it contains the class of proper interval graphs and the class of split graphs, both of which have unbounded
clique-width as shown by Golumbic and Rotics~\cite{GR99b} and Makowsky and Rotics~\cite{MR99}, respectively.
We study the clique-width of subclasses of chordal graphs, but before going into more detail we first give some necessary terminology and notation.

\subsection{Notation}
The {\em disjoint union} $(V(G)\cup V(H), E(G)\cup E(H))$ of two vertex-disjoint graphs~$G$ and~$H$ is denoted by~$G+\nobreak H$ and the disjoint union of~$r$ copies of a graph~$G$ is denoted by~$rG$. The {\em complement} of a graph~$G$, denoted by~$\overline{G}$, has vertex set $V(\overline{G})=\nobreak V(G)$ and an edge between two distinct vertices
if and only if these vertices are not adjacent in~$G$. 
For two graphs~$G$ and~$H$ we write $H\ssi G$ to indicate that~$H$ is an induced subgraph of~$G$.
The graphs $C_r,K_r,K_{1,r-1}$ and~$P_r$ denote the cycle, complete graph, star and path on~$r$ vertices, respectively.
The graph~$S_{h,i,j}$, for $1\leq h\leq i\leq j$, denotes the {\em subdivided claw}, that is
the tree that has only one vertex~$x$ of degree~$3$ and exactly three leaves, which are of distance~$h$,~$i$ and~$j$ from~$x$, respectively.
For a set of graphs $\{H_1,\ldots,H_p\}$, a graph~$G$ is {\em $(H_1,\ldots,H_p)$-free} if it has no induced subgraph isomorphic to a graph in $\{H_1,\ldots,H_p\}$.
A graph~$G$ is {\em chordal} if it is $(C_4,C_5,\ldots)$-free and {\em weakly chordal} if both~$G$ and~$\overline{G}$
are $(C_5,C_6,\ldots)$-free. 
Every chordal graph is weakly chordal.

\subsection{Research Goal and Motivation}
We want to determine all graphs~$H$ for which the class of $H$-free chordal graphs has {\em bounded} clique-width. Our motivation for this research is threefold.

\medskip
\noindent
{\em 1. Generate more graph classes for which a number of \NP-complete problems can be solved in polynomial time.}

\medskip
\noindent
Although many such \NP-complete
problems, such as the {\sc Colouring} problem~\cite{Go80}, are polynomial-time solvable on chordal graphs, many others 
continue to be
\NP-complete for graphs in this class.
To give an example, the well-known {\sc Hamilton Cycle} problem is such a problem.
It is \NP-complete even for strongly chordal split graphs~\cite{Mu96}, but becomes polynomial-time solvable on any graph class of bounded clique-width~\cite{EGW01,Wa94}.
Of course, in order to find new ``islands of tractability'', one may want to consider superclasses of $H$-free chordal graphs instead.
However, already when one considers $H$-free weakly chordal graphs, one does not obtain
any new tractable graph classes.
Indeed, the clique-width of the class of $H$-free graphs is bounded if and only if~$H$ is an induced subgraph of~$P_4$~\cite{DP15}, and as we prove later,
the induced subgraphs of~$P_4$ are also the only graphs~$H$ for which the class of $H$-free weakly chordal graphs has bounded clique-width.
The same classification therefore also follows for superclasses, such as $(H,C_5,C_6,\ldots)$-free
graphs (or $H$-free perfect graphs, to give another example).
Since
forests, or equivalently, 
$(C_3,C_4,\ldots)$-free graphs have bounded clique-width
(see also Lemma~\ref{lem:tree})
it follows that the
class of $(H,C_3,C_4,\dots)$-free graphs has bounded clique-width for every
graph~$H$.
It is therefore a natural question to ask for which graphs~$H$ the
class of $(H,C_4,C_5,\dots)$-free (i.e. $H$-free chordal) graphs has bounded
clique-width.

\medskip
\noindent
{\em 2. Classify the boundedness of the clique-width of $(H_1,H_2)$-free graphs.}

\medskip
\noindent
Classifying the boundedness of clique-width for $H$-free chordal graphs turns
out to be useful for determining the (un)boundedness of the clique-width of
graph classes characterized by two forbidden induced subgraphs~$H_1$ and~$H_2$,
just as the full classification for $H$-free bipartite graphs~\cite{DP14} has
proven to be~\cite{DHP0,DLRR12,DP15}.
To demonstrate this, we will prove that the class of $(2P_1+\nobreak
P_3,\allowbreak K_4)$-free graphs has bounded clique-width via a reduction to
$K_4$-free chordal graphs.
We note that
reducing from a target graph class to another class already known to have bounded clique-width
is an important
technique, which has also been used by others; for instance by Brandst{\"a}dt
et al.~\cite{BELL06} who proved that the class of $(C_4,K_{1,3},4P_1)$-free
graphs has bounded clique-width by reducing these graphs to
$(K_{1,3},4P_1)$-free chordal graphs.
Moreover, in a previous paper~\cite{DHP0} we used it 
for showing the boundedness of the clique-width of three other graph classes of $(H_1,H_2)$-free graphs~\cite{DHP0}.
In that paper we reduced each of these classes to some known subclass of perfect graphs of bounded clique-width
(perfect graphs form a superclass of chordal graphs). In particular, we reduced one of these three classes, namely the class of $(\overline{2P_1+P_2},2P_1+P_3)$-free graphs to
to the class of $\overline{2P_1+P_2}$-free chordal graphs, also known as diamond-free chordal graphs
(the diamond is the graph $\overline{2P_1+P_2}$, see also \figurename~\ref{fig:diamond}), which has bounded clique-width~\cite{GR99b}.

Our new result for the class of $(2P_1+\nobreak P_3,\allowbreak K_4)$-free graphs and the three results of~\cite{DHP0} belong to a line of research, in which
we try to extend
results~\cite{BL02,BELL06,BKM06,BK05,BLM04b,BLM04,BM02,DGP14,DLRR12,MR99} on the clique-width of classes of $(H_1,H_2)$-free graphs in order to try to determine the boundedness or unboundedness of the clique-width of every such graph class~\cite{DHP0,DP15}. 
Including our new result for the case $(2P_1+\nobreak P_3,\allowbreak K_4)$, this
led to a classification of all 
but~$13$ open cases (up to some equivalence relation, see~\cite{DP15}).
\begin{figure}
\begin{center}
\scalebox{0.7}{
{\begin{tikzpicture}[scale=1]
\GraphInit[vstyle=Simple]
\SetVertexSimple[MinSize=6pt]
\Vertex[x=0,y=0]{a}
\Vertex[a=30,d=1]{b}
\Vertex[a=-30,d=1]{e}
\Vertex[x=1.73205080757,y=0]{d}
\Edges(e,a,b,e,d,b)
\end{tikzpicture}}}
\end{center}
\caption{The graph $\overline{2P_1+P_2}$, also known as the diamond.}
\label{fig:diamond}
\end{figure}
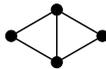

\medskip
\noindent
{\em 3. Complete a line of research on $H$-free chordal graphs.}

\medskip
\noindent
A classification of those graphs~$H$ for which the clique-width of $H$-free chordal graphs is bounded would complete a line of research in the literature, which we feel is
an interesting goal on its own. As a start, using a result of Corneil and Rotics~\cite{CR05} on the relationship between treewidth and clique-width it follows that
the clique-width of the class of $K_r$-free chordal graphs is bounded for all $r\geq 1$.
Brandst{\"a}dt, Engelfriet, Le and Lozin~\cite{BELL06} proved that the class of $4P_1$-free chordal graphs has unbounded clique-width.
Brandst{\"a}dt, Le and Mosca~\cite{BLM04} considered forbidding the graphs $\overline{P_1+P_4}$ (gem) and $P_1+\nobreak P_4$ (co-gem) as induced subgraphs (see also \figurename~\ref{fig:chordal-bounded}). They showed that $(P_1+\nobreak P_4)$-free chordal graphs have clique-width at most~$8$ and also observed that $\overline{P_1+ P_4}$-free chordal graphs belong to the class of distance-hereditary graphs, which have clique-width at most~$3$ (as shown by Golumbic and Rotics~\cite{GR99b}).
Moreover, the same authors~\cite{BLM04} erroneously claimed
that the gem and co-gem are the only two 1-vertex $P_4$-extensions~$H$
for which the class of $H$-free chordal graphs has bounded clique-width.
We prove that bull-free chordal graphs have clique-width at most~$3$, improving a known bound of~$8$, which was shown by Le~\cite{Le03}. We also prove that $\overline{S_{1,1,2}}$-free chordal graphs have clique-width at most~$4$, which Le posed as an open problem.
Results~\cite{BDHP15b,GR99b,MR99} for split graphs and proper interval graphs 
lead to other classes of $H$-free chordal graphs of unbounded clique-width, as we shall discuss
in Section~\ref{sec:prelim}. However, in order to obtain our almost-full dichotomy for $H$-free chordal graphs new results also need to be proved.

\begin{figure}
\begin{center}
\begin{tabular}{cccc}
\begin{minipage}{0.20\textwidth}
\centering
\scalebox{0.7}{
{\begin{tikzpicture}[scale=1,rotate=45]
\GraphInit[vstyle=Simple]
\SetVertexSimple[MinSize=6pt]
\Vertex[x=0,y=0]{a}
\Vertex[a=30,d=1]{b}
\Vertex[a=-30,d=1]{e}
\Vertex[x=1.73205080757,y=0]{d}
\Vertex[x=2.73205080757,y=0]{c}
\Edges(e,a,b,e,d,b)
\Edges(c,d)
\end{tikzpicture}}}
\end{minipage}
&
\begin{minipage}{0.20\textwidth}
\centering
\scalebox{0.7}{
{\begin{tikzpicture}[scale=1,rotate=162]
\GraphInit[vstyle=Simple]
\SetVertexSimple[MinSize=6pt]
\Vertices{circle}{a,b,c,d,e}
\Vertex[a=180,d=1.67504239778]{f}
\Edges(a,b,c,d,e,a)
\Edges(a,c,e,b,d,a)
\Edges(c,f,d)
\end{tikzpicture}}}
\end{minipage}
&
\begin{minipage}{0.20\textwidth}
\centering
\scalebox{0.7}{
{\begin{tikzpicture}[scale=1,rotate=90]
\GraphInit[vstyle=Simple]
\SetVertexSimple[MinSize=6pt]
\Vertices{circle}{a,b,c,d,e}
\Edges(e,a,b,e,d)
\end{tikzpicture}}}
\end{minipage}
&
\begin{minipage}{0.20\textwidth}
\centering
\scalebox{0.7}{
{\begin{tikzpicture}[scale=1,rotate=90]
\GraphInit[vstyle=Simple]
\SetVertexSimple[MinSize=6pt]
\Vertices{circle}{a,b,c,d,e}
\Edges(e,a,b,e,d,b)
\end{tikzpicture}}}
\end{minipage}\\
& & &\\
$\overline{S_{1,1,2}}$ & $\overline{K_{1,3}+2P_1}$ & $P_1+\overline{P_1+P_3}$ & $P_1+\overline{2P_1+P_2}$\\
& & &\\
\begin{minipage}{0.20\textwidth}
\centering
\scalebox{0.7}{
{\begin{tikzpicture}[scale=1,rotate=90]
\GraphInit[vstyle=Simple]
\SetVertexSimple[MinSize=6pt]
\Vertex[x=0,y=0]{a}
\Vertex[a=30,d=1]{b}
\Vertex[a=30,d=2]{c}
\Vertex[a=-30,d=1]{d}
\Vertex[a=-30,d=2]{e}
\Edges(c,b,a,d,e)
\Edges(b,d)
\end{tikzpicture}}}
\end{minipage}
&
\begin{minipage}{0.20\textwidth}
\centering
\scalebox{0.7}{
{\begin{tikzpicture}[scale=1,rotate=90]
\GraphInit[vstyle=Simple]
\SetVertexSimple[MinSize=6pt]
\Vertices{circle}{a,b,c,d,e}
\Edges(a,b,c,d,e,a)
\Edges(a,c,e,b,d,a)
\end{tikzpicture}}}
\end{minipage}
&
\begin{minipage}{0.20\textwidth}
\centering
\scalebox{0.7}{
{\begin{tikzpicture}[scale=1,rotate=90]
\GraphInit[vstyle=Simple]
\SetVertexSimple[MinSize=6pt]
\Vertices{circle}{a,b,c,d,e}
\Edges(b,c,d,e)
\end{tikzpicture}}}
\end{minipage}
&
\begin{minipage}{0.20\textwidth}
\centering
\scalebox{0.7}{
{\begin{tikzpicture}[scale=1,rotate=90]
\GraphInit[vstyle=Simple]
\SetVertexSimple[MinSize=6pt]
\Vertices{circle}{a,b,c,d,e}
\Edges(a,b,c,d,e,a)
\Edges(c,a,d)
\end{tikzpicture}}}
\end{minipage}\\
& & &\\
bull & $K_r$~($r=\nobreak 5$~shown) & $P_1+P_4$ & $\overline{P_1+P_4}$\\
\end{tabular}
\end{center}
\caption{The graphs~$H$ for which the class of $H$-free chordal graphs has bounded clique-width;
the four graphs at the top are new cases proved in this paper.}\label{fig:chordal-bounded}
\end{figure}
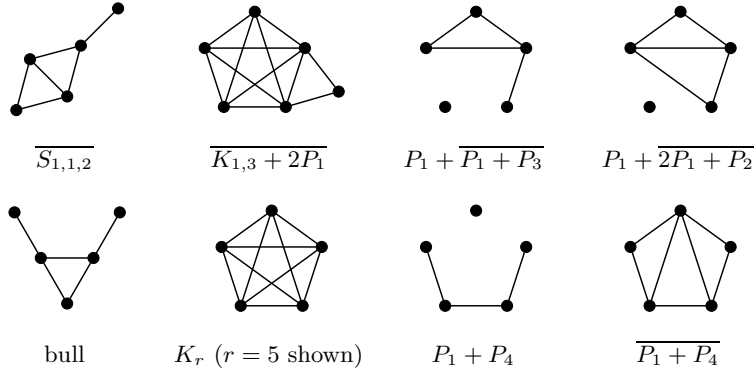

\begin{figure}
\begin{center}
\begin{tabular}{ccc}
\begin{minipage}{0.25\textwidth}
\centering
\scalebox{0.7}{
{\begin{tikzpicture}[scale=1,rotate=135]
\GraphInit[vstyle=Simple]
\SetVertexSimple[MinSize=6pt]
\Vertices{circle}{a,b,c,d}
\Vertex[a=45,d=1.57313218497]{e}
\Vertex[a=225,d=1.57313218497]{f}
\Edges(a,b,c,d,a,c)
\Edges(b,d)
\Edges(a,e)
\Edges(f,d)
\end{tikzpicture}}}
\end{minipage}
&
\begin{minipage}{0.25\textwidth}
\centering
\scalebox{0.7}{
{\begin{tikzpicture}[scale=1,rotate=135]
\GraphInit[vstyle=Simple]
\SetVertexSimple[MinSize=6pt]
\Vertices{circle}{a,b,c,d}
\Vertex[a=45,d=1.57313218497]{e}
\Vertex[a=225,d=1.57313218497]{f}
\Edges(a,b,c,d,a,c)
\Edges(b,d)
\Edges(a,e)
\Edges(c,f,d)
\end{tikzpicture}}}
\end{minipage}\\
&\\
$F_1$ & $F_2$ \\
\end{tabular}
\end{center}
\caption{The graphs~$H$ for which the boundedness of clique-width of the class of $H$-free chordal graphs is open.}
\label{fig:open-chordal}
\end{figure}
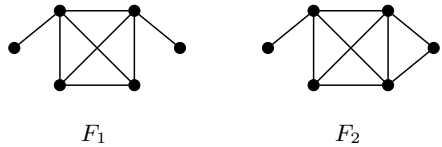

\subsection{Our Results}
In Section~\ref{sec:prelim}, we collect all previously known results for $H$-free chordal graphs and use a result of Olariu~\cite{Olariu91} to prove that bull-free chordal graphs have clique-width at most~$3$.
In Section~\ref{sec:bounded} we present four new classes of $H$-free chordal graphs of bounded
clique-width,\footnote{In Theorems~\ref{thm:co-k13+2p1-chordal},~\ref{thm:p1+paw-chordal} and~\ref{thm:p1+diamond-chordal}, we do not specify our upper bounds as this would
complicate our proofs for negligible gain. In our proofs we repeatedly apply
graph operations that exponentially increase the upper bound on the
clique-width, which means that the bounds that could be obtained from our
proofs would be very large and far from being tight.
Furthermore, we make use of other results that do not give explicit bounds.
We use different
techniques to prove Lemma~\ref{lem:bull-chordal} and Theorem~\ref{thm:co-chair}, and these allow us to give good bounds for these cases.} namely when $H\in \{\overline{K_{1,3}+2P_1},\allowbreak P_1+\nobreak \overline{P_1+P_3},\allowbreak P_1+\nobreak \overline{2P_1+P_2},\allowbreak \overline{S_{1,1,2}}\}$ (see also \figurename~\ref{fig:chordal-bounded}).
In particular, we show that~$\overline{S_{1,1,2}}$-free graphs have clique-width at most~$4$.
One of the algorithmic consequences of these results is that we have identified four new graph classes for which {\sc Hamilton Cycle} is polynomial-time solvable.
In Section~\ref{sec:chordal-classification} we combine all these results with previously known results~\cite{BDHP15b,BELL06,BLM04,GR99b,Le03} to obtain an almost-complete classification for $H$-free chordal graphs (see also \figurename~\ref{fig:chordal-bounded}), leaving only two open cases (see also \figurename~\ref{fig:open-chordal}):
\begin{theorem}\label{thm:chordal-classification}
Let~$H$ be a graph with $H\notin \{F_1,F_2\}$. The class of $H$-free chordal graphs has bounded clique-width if and only if
\begin{itemize}
\item $H=K_r$ for some $r\geq 1$;
\item $H\ssi \bull$;
\item $H\ssi P_1+P_4$;
\item $H\ssi \overline{P_1+P_4}$;
\item $H\ssi \overline{K_{1,3}+2P_1}$;
\item $H\ssi P_1+\overline{P_1+P_3}$;
\item $H\ssi P_1+\overline{2P_1+P_2}$ or
\item $H\ssi \overline{S_{1,1,2}}$.
\end{itemize}
\end{theorem}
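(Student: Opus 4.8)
The plan is to prove this as a classification theorem that assembles the positive direction from the individual upper-bound results (both the ones established in this paper and those cited from the literature), and the negative direction from a finite list of minimal obstructions together with a monotonicity argument. For the \emph{if} direction, suppose $H$ is one of the listed graphs. If $H=K_r$, then $K_r$-free chordal graphs have bounded treewidth (every clique has size at most $r-1$), hence bounded clique-width by the Corneil--Rotics result. For $H\ssi \bull$, Lemma~\ref{lem:bull-chordal} gives clique-width at most~$3$; for $H\ssi \overline{S_{1,1,2}}$, Theorem~\ref{thm:co-chair} gives clique-width at most~$4$; for $H\ssi P_1+P_4$ and $H\ssi \overline{P_1+P_4}$ we invoke the Brandst{\"a}dt--Le--Mosca and Golumbic--Rotics bounds; and for the remaining three cases $H\in\{\overline{K_{1,3}+2P_1},\,P_1+\overline{P_1+P_3},\,P_1+\overline{2P_1+P_2}\}$ we use Theorems~\ref{thm:co-k13+2p1-chordal},~\ref{thm:p1+paw-chordal} and~\ref{thm:p1+diamond-chordal} from Section~\ref{sec:bounded}. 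In each case the key observation is that if $H'\ssi H$ then every $H$-free graph is also $H'$-free, so it suffices to verify boundedness for the maximal $H$ in each bullet, which is exactly what the quoted theorems provide.

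For the \emph{only if} direction, I would argue the contrapositive: if $H$ is not an induced subgraph of any graph in the list and $H\notin\{F_1,F_2\}$, then the class of $H$-free chordal graphs has unbounded clique-width. The engine here is the observation that $H$-freeness is monotone under taking induced subgraphs of $H$: it is enough to exhibit, for each such $H$, some known class $\mathcal{C}$ of chordal graphs of unbounded clique-width with $H\notin\mathcal{C}$-forbidden-induced-subgraph-set in the sense that every graph in $\mathcal{C}$ is $H$-free. The candidate witness classes are split graphs and proper interval graphs (unbounded clique-width by Makowsky--Rotics~\cite{MR99} and Golumbic--Rotics~\cite{GR99b}), $4P_1$-free chordal graphs (unbounded by Brandst{\"a}dt--Engelfriet--Le--Lozin~\cite{BELL06}), and the further classes coming from~\cite{BDHP15b}. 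So the core of this direction is a structural/case analysis lemma: \emph{if $H$ is not in the list (and is not $F_1$ or $F_2$), then $H$ contains as an induced subgraph at least one of a small set of ``bad'' graphs} — for instance a long induced path or cycle, a large independent set like $4P_1$, or one of the specific minimal obstructions identified via the split-graph and proper-interval-graph constructions — and conversely each ``bad'' graph is absent from one of the unbounded witness classes, so forbidding $H$ still leaves that whole unbounded class intact.

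The main obstacle, and the bulk of the work, is precisely this finite case analysis on $H$. One has to carve up the space of all graphs $H$ not covered by the eight bullets: by connectivity of $H$ and $\overline H$, by whether $H$ contains a large clique or large independent set, by whether $H$ or $\overline H$ contains an induced $P_4$, and so on, and in each branch either place $H$ inside one of the listed families or produce a forbidden obstruction that certifies unboundedness. The delicate part is that the boundary between the bounded and unbounded regimes is genuinely intricate — that is exactly why $F_1$ and $F_2$ remain open — so the case analysis must be sharp enough to isolate *only* those two graphs as exceptions. I expect to lean heavily on the fact that for the unbounded direction we never need to forbid $F_1$ or $F_2$ themselves: any $H$ that is not $F_1$, not $F_2$, and not in the list must ``overshoot'' one of the known maximal bounded cases in a way detectable by a concrete small induced subgraph, and matching each overshoot to the right unbounded witness class is the routine-but-lengthy heart of the proof.

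Finally, I would combine the two directions: for a given $H\notin\{F_1,F_2\}$, either $H$ embeds into one of the eight listed graphs — in which case the first part gives bounded clique-width — or it does not, in which case the case analysis of the second part produces an unbounded $H$-free subclass of chordal graphs. Since these alternatives are exhaustive and mutually exclusive, the stated equivalence follows.
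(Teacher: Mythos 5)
Your ``if'' direction is correct and matches the paper: each bullet is discharged by Lemma~\ref{lem:clique-chordal}, Lemmas~\ref{lem:bull-chordal},~\ref{lem:cogem-chordal},~\ref{lem:gem-chordal}, and Theorems~\ref{thm:co-k13+2p1-chordal},~\ref{thm:p1+paw-chordal},~\ref{thm:p1+diamond-chordal},~\ref{thm:co-chair}, together with the observation that $H'\ssi H$ makes $H$-free a subclass of $H'$-free. The problem is the ``only if'' direction, where your proposal names the right witness classes but defers the entire argument to an unexecuted ``routine-but-lengthy'' case analysis over all graphs $H$, organised by connectivity, clique size, and $P_4$-containment. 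As stated, that analysis is not finite and there is no argument that it isolates exactly $F_1$ and $F_2$ as the surviving exceptions; this is a genuine gap, not a routine omission.

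The missing idea is the reduction through split graphs. Since every split graph is chordal, boundedness for $H$-free chordal graphs forces boundedness for $H$-free split graphs, and Lemma~\ref{lem:split-classification} then pins $H$ down to being a clique, an independent set, or an induced subgraph of one of the four specific graphs $F_4,\overline{F_4},F_5,\overline{F_5}$. This single step converts your open-ended case analysis into a finite check. The independent-set case is killed beyond $3P_1$ by Lemma~\ref{lem:4p1-chordal}, and the remaining work is to enumerate the induced subgraphs of $F_4,\overline{F_4},F_5,\overline{F_5}$ that are additionally $4P_1$-free and $K_{1,3}$-free (using Lemmas~\ref{lem:4p1-chordal} and~\ref{lem:claw-chordal}), and to verify that each of them other than $F_1$ and $F_2$ embeds into one of the eight listed graphs --- which the paper does via two auxiliary claims listing the maximal $K_{1,3}$-free induced subgraphs of a handful of six-vertex graphs. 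Without identifying the split-graph classification as the engine, your ``only if'' direction does not go through.
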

In Section~\ref{sec:chordal-classification} 
we also show (using only previously known results) our aforementioned classification for $H$-free weakly chordal graphs.
\begin{theorem}\label{t-weakly-chordal}
Let~$H$ be a graph. The class of
$H$-free weakly chordal graphs has bounded clique-width if and only if~$H$ is an induced subgraph of~$P_4$.
\end{theorem}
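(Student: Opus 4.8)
The plan splits along the two implications, and the ``if'' direction is immediate: if $H$ is an induced subgraph of $P_4$, then by~\cite{DP15} the class of \emph{all} $H$-free graphs has bounded clique-width, so in particular the class of $H$-free weakly chordal graphs does. Hence all the work is in the ``only if'' direction. Assume $H$ is not an induced subgraph of $P_4$; I want to locate, inside the $H$-free weakly chordal graphs, a subclass of unbounded clique-width. The strategy is a short case analysis on which small obstruction $H$ contains as an induced subgraph, where in each case I exhibit a \emph{known} class of weakly chordal graphs of unbounded clique-width that is automatically $H$-free. I will use three such classes: split graphs, which have unbounded clique-width by~\cite{MR99}; chordal bipartite graphs, which have unbounded clique-width by a known result; and chordal graphs, which have unbounded clique-width since they contain all split graphs. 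I will also use the elementary facts that split graphs and chordal bipartite graphs are weakly chordal (split graphs are chordal, and a chordal bipartite graph has no induced odd cycle, no induced cycle of length at least~$6$, while its complement is covered by two cliques and so has no induced $C_k$ with $k\ge 5$), that the class of weakly chordal graphs is hereditary and self-complementary, and that clique-width changes by at most a factor~$2$ under complementation.

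The cases are as follows. If $H$ contains an induced $C_4$ or an induced $2P_2=\overline{C_4}$, then, since every split graph is both $C_4$-free and $2P_2$-free, the $H$-free weakly chordal graphs contain all split graphs and hence have unbounded clique-width. If $H$ contains an induced $K_3$, then, since every chordal bipartite graph is triangle-free, the $H$-free weakly chordal graphs contain all chordal bipartite graphs and again have unbounded clique-width. If $H$ contains an induced $3P_1$, then $\overline H$ contains an induced $\overline{3P_1}=K_3$; as $G$ is $H$-free and weakly chordal exactly when $\overline G$ is $\overline H$-free and weakly chordal, the previous case together with the factor-$2$ complement bound gives that the $H$-free weakly chordal graphs have unbounded clique-width.

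It remains to handle $H$ that is simultaneously $C_4$-free, $2P_2$-free, $K_3$-free and $3P_1$-free. Such an $H$ has clique number and independence number both at most~$2$, so by Ramsey's theorem ($R(3,3)=6$) it has at most five vertices; inspecting the few graphs on at most five vertices with no induced $K_3$ and no induced $3P_1$, the only one that is in addition $C_4$-free and $2P_2$-free and is not an induced subgraph of $P_4$ is $C_5$. But $C_5$ is an induced cycle of length~$5$, hence not weakly chordal, and weak chordality is hereditary, so no weakly chordal graph has an induced $C_5$. Therefore the class of $C_5$-free weakly chordal graphs equals the class of all weakly chordal graphs, which contains all chordal graphs and so has unbounded clique-width. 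This covers every graph $H$ that is not an induced subgraph of $P_4$, completing the argument.

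The single non-routine ingredient is the known unboundedness of the clique-width of chordal bipartite graphs, which powers the $K_3$ case and, via complementation, the $3P_1$ case; since the triangle-free weakly chordal graphs are precisely the chordal bipartite graphs, that case genuinely needs an external input, and the main point of care is to cite the correct prior result. Everything else uses only split graphs, plain chordal graphs, and routine hereditary and complementation bookkeeping, so I expect no real obstacle there.
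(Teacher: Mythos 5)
Your proof is correct, and it runs on the same three engines as the paper's: split graphs for the $C_4$/$2P_2$ obstruction, a triangle-free weakly chordal class of unbounded clique-width for the $K_3$ obstruction, and complementation for $3P_1$. The organization of the case analysis is genuinely different, though. The paper first asks whether~$H$ is a split graph (so any~$H$ containing $C_4$, $C_5$ or $2P_2$ is killed at once by split graphs via Lemma~\ref{lem:split}), then notes that a split graph containing a cycle contains~$C_3$, and finally gives a short structural argument that an acyclic $(C_4,C_5,2P_2)$-free graph not contained in~$P_4$ has an induced~$3P_1$. You instead case-split on the four small obstructions up front and close with Ramsey's theorem plus a finite enumeration, in which~$C_5$ is the unique survivor; your dispatch of~$C_5$ --- that $C_5$-freeness is vacuous for weakly chordal graphs --- is a neat shortcut the paper does not need, since it absorbs the $C_5$ case into the split-graph case. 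The one external input you flag, unboundedness of clique-width for chordal bipartite graphs, is available in exactly the form the paper already quotes: bipartite permutation graphs (Lemma~\ref{l-bippermut}) are triangle-free and chordal bipartite, so that lemma serves as the citation you are missing; your verification that chordal bipartite graphs are weakly chordal (no long induced cycles in~$G$, and two cliques cannot cover an induced $C_k$ with $k\geq 5$ in~$\overline{G}$) is correct and is the same small check the paper leaves implicit for bipartite permutation graphs. Net effect: the two proofs are of comparable length, with yours trading the structural argument about acyclic split graphs for a Ramsey-based finite check.
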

In Section~\ref{sec:consec} we illustrate the usefulness of having a classification for $H$-free chordal graphs by proving that the
class of $(2P_1+\nobreak P_3,\allowbreak K_4)$-free graphs has bounded clique-width via a reduction to $K_4$-free chordal graphs.
As such the number of (non-equivalent) pairs $(H_1,H_2)$ for which we do not know whether the clique-width of the class of $(H_1,H_2)$-free graphs is bounded is~13. These remaining cases are given in Section~\ref{sec:conclusions} (see also~\cite{DP15}).
In Section~\ref{sec:conclusions}, we mention a number of future research directions.

\section{Preliminaries}\label{sec:prelim}

All graphs considered in this paper are finite, undirected and have neither multiple edges nor self-loops.
In this section we first define some more standard graph terminology, some additional notation and 
give some structural lemmas. We refer to the textbook of Diestel~\cite{Di12} for any undefined terminology.
Afterwards, we give the definition of clique-width and present a number of known results on clique-width that we will use as lemmas for proving our results.

Let $G=(V,E)$ be a graph.
For $S\subseteq V$, we let~$G[S]$ denote the {\em induced} subgraph of~$G$, which has vertex set~$S$ and edge set $\{uv\; |\; u,v\in S, uv\in E\}$.
If $S=\{s_1,\ldots,s_r\}$ then, to simplify notation, we may also write $G[s_1,\ldots,s_r]$ instead of $G[\{s_1,\ldots,s_r\}]$. For some set $T\subseteq V$ we may write $G-T=G[V\setminus T]$.
Recall that for two graphs~$G$ and~$H$ we write $H\ssi G$ to indicate that~$H$ is an induced subgraph of~$G$.

Let $G=(V,E)$ be a graph.
The set $N(u)=\{v\in V\; |\; uv\in E\}$ is the {\em neighbourhood} of $u\in V$.
The {\em degree} of a vertex $u\in V$ in~$G$ is the size~$|N(u)|$ of its neighbourhood.
The {\em maximum degree} of~$G$ is the maximum vertex degree.
Let $S\subseteq V$.
For a vertex $u\in V$ we write $N_S(u)=N(u) \cap S$.

Let~$S$ and~$T$ be two vertex subsets of a graph $G=(V,E)$ with $S\cap T=\emptyset$.
We say that~$S$ {\em dominates}~$T$ if every vertex of~$T$ is adjacent to at least one vertex of~$S$.
We say that~$S$ is a {\em dominating set} of~$G$ or that~$S$ {\em dominates}~$G$ if every vertex in $V\setminus S$ is adjacent to at least one vertex in~$S$.
We say that~$S$ is {\em complete} to~$T$ if every vertex in~$S$ is adjacent to every vertex in~$T$, and
we say that~$S$ is {\em anti-complete} to~$T$ if every vertex in~$S$ is non-adjacent to every vertex in~$T$.
Similarly, a vertex $v\in V\setminus T$ is {\em complete} or {\em anti-complete} to~$T$ if it is adjacent or non-adjacent, respectively, to every vertex of~$T$.
A set of vertices~$M$ is a {\em module} if every vertex not in~$M$ is either
complete or anti-complete to~$M$. A module in a graph is {\em trivial} if it
contains zero, one or all vertices of the graph, otherwise it is {\em
non-trivial}. We say that~$G$ is {\em prime} if every module in~$G$ is trivial.
We say that a vertex~$v$ {\em distinguishes} two vertices~$x$ and~$y$ if~$v$ is adjacent to precisely one of~$x$ and~$y$. Note that if a set~$M \subseteq V$ is not a module then there must be vertices $x,y \in M$ and a vertex $v \in V \setminus M$ such that~$v$ distinguishes~$x$ and~$y$.

The following two structural lemmas, both of which we need for the proofs of
our results, are about prime graphs containing some specific induced
subgraph~$H$. They
are examples of the well-developed technique of {\em prime extension}, that is, 
they 
show us that such prime graphs must also contain (as an
induced subgraph) at least one of a list of possible extensions of~$H$.
The first prime extension lemma
is due to Brandst\"adt, and the second one is due to Brandst\"adt, Le and de Ridder.

\begin{lemma}[\cite{Brand04}]\label{lem:fig:diamond-prime-ext}
If a prime graph~$G$ contains an induced $\overline{2P_1+P_2}$ then it contains an induced $\overline{P_1+P_4}$, $d$-$\mathbb{A}$ or $d$-domino (see also \figurename~\ref{fig:diamond-prime-ext}).
\end{lemma}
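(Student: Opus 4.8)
The plan is to treat primeness as an engine for growing the diamond one vertex at a time: whenever the induced subgraph built so far possesses a non-trivial module, that module cannot be a module of $G$, so a fresh vertex ``distinguishing'' two of its elements may be adjoined, and a bounded case analysis then shows that after one or two such steps one of the three listed graphs has appeared. Fix notation by writing the diamond on $\{a,b,c,d\}$ with $cd\notin E(G)$ and every other pair an edge, so that $\{a,b\}$ (the two vertices of degree~$3$) and $\{c,d\}$ (the two vertices of degree~$2$) are precisely the non-trivial modules of $G[\{a,b,c,d\}]$. Since $G$ is prime, $\{c,d\}$ is not a module of $G$; being a module of $G[\{a,b,c,d\}]$, it cannot be separated by any vertex of $\{a,b\}$, so there is a vertex $v\notin\{a,b,c,d\}$ with, say, $vc\in E(G)$ and $vd\notin E(G)$.

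First I would branch on $N(v)\cap\{a,b\}$, which is cleanest to see through the complement $\overline{G[a,b,c,d,v]}$, whose only possible edges lie among $cd,vd,va,vb$. If $v$ is adjacent to exactly one of $a,b$ then this complement is $P_1+P_4$, so $G[a,b,c,d,v]$ is the gem $\overline{P_1+P_4}$ and we are done. If $v$ is adjacent to both, $G[a,b,c,d,v]\cong\overline{2P_1+P_3}$; if $v$ is adjacent to neither, $G[a,b,c,d,v]\cong\overline{S_{1,1,2}}$. In each of these two remaining cases $\{a,b\}$ is still a non-trivial module of the five-vertex graph, so there is a vertex $w\notin\{a,b,c,d,v\}$ with, say, $wa\in E(G)$ and $wb\notin E(G)$.

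Next I would branch on $N(w)\cap\{c,d,v\}$. If $w$ is adjacent to exactly one of $c,d$, then $w$ plays the role of $v$ in the first branch (it separates $c$ from $d$ and is adjacent to exactly one of $a,b$), so $G[a,b,c,d,w]$ is the gem; hence we may assume $w$ is adjacent to both or to neither of $c,d$. When $w$ is adjacent to neither, $G[a,b,c,d,v,w]$ is a diamond carrying a pendant at the degree-$3$ vertex $a$ and a pendant at the degree-$2$ vertex $c$ (together with possibly the edge $vw$), which is one of $d$-$\mathbb{A}$ and $d$-domino according to whether $vw\in E(G)$; matching degree sequences and reconstructing the rest settles the isomorphism. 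The case $G[a,b,c,d,v]\cong\overline{2P_1+P_3}$, and the subcases where $w$ is adjacent to both of $c,d$, go the same way, except that $\overline{2P_1+P_3}$ has several non-trivial modules and a handful of its subcases become prime only after one further distinguishing vertex is adjoined; each of them nonetheless terminates on the gem, $d$-$\mathbb{A}$ or $d$-domino within six or seven vertices. (Equivalently, one may assume from the outset that $G$ contains no induced $d$-$\mathbb{A}$ and no induced $d$-domino and prove that $G$ then contains the gem, which prunes the tree.)

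I expect the main obstacle to be organisational rather than conceptual. The case tree branches quickly --- three top-level cases, up to eight subcases each, a few of them splitting once more --- and the work lies in (i) making the enumeration provably exhaustive, (ii) keeping the residual symmetries in play (the automorphism of the diamond, and in the $\overline{2P_1+P_3}$ branch the interchangeability of some of $c,d,v$) so that the number of genuinely distinct subcases stays small, and (iii) checking that no branch escapes into an unbounded chain of new vertices, i.e.\ that every leaf really does contain one of exactly these three graphs and not a fourth prime extension of the diamond. Each individual check is routine, especially after passing to complements, but ensuring completeness of the analysis is where the care goes.
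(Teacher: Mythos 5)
The paper does not prove this lemma at all --- it is imported verbatim from~\cite{Brand04} --- so your proposal has to stand on its own. Its first two layers are sound: $\{c,d\}$ and then $\{a,b\}$ are non-trivial modules of the subgraph built so far, primeness supplies distinguishing vertices $v$ and then $w$, the three five-vertex outcomes (gem, $\overline{2P_1+P_3}$, $\overline{S_{1,1,2}}$) are correctly identified, and in the $\overline{S_{1,1,2}}$ branch a vertex $w$ adjacent to $a$ but to none of $b,c,d$ really does yield $d$-$\mathbb{A}$ or $d$-domino according to the edge $vw$. The gap is exactly where you say ``the care goes'': the assertion that every remaining branch terminates on one of the three targets ``within six or seven vertices'' is neither enumerated nor proved, and it already fails at six vertices. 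Concretely, in the $\overline{2P_1+P_3}$ branch ($v$ adjacent to $a,b,c$ and not $d$), take $w$ adjacent to $a$ only and non-adjacent to $v$. The resulting six-vertex graph has only two induced diamonds, both with degree-$3$ pair $\{a,b\}$, and neither admits a fifth vertex adjacent to exactly one degree-$3$ and one degree-$2$ vertex, so it contains no gem; its degree sequence $(5,4,3,3,2,1)$ rules out $d$-$\mathbb{A}$ and $d$-domino; and $\{c,v\}$ is a non-trivial module, so a seventh vertex must be adjoined, with unconstrained adjacency to $a,b,d,w$, and some of the resulting seven-vertex graphs (e.g.\ new vertex adjacent to $a,b,c$ only) again contain a non-trivial module and none of the three targets. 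The same phenomenon occurs in the $\overline{S_{1,1,2}}$ branch when $w$ is adjacent to $a,c,d$ but not to $b$ or $v$: there $\{b,w\}$ is a module and the six-vertex graph contains no target either.

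So the ``keep adjoining distinguishing vertices'' recursion does not visibly bottom out, and nothing in your write-up bounds its depth; the claim that the case tree is finite and exhaustive \emph{is} the lemma. To close the gap you would need either a smarter invariant --- for instance, after each step re-rooting the argument on a freshly created induced diamond so that the next distinguishing vertex can be taken to be a pendant, which is essentially how the minimal prime extensions of the diamond are determined in~\cite{Brand04} --- or an appeal to a general finiteness result for minimal prime extensions, which is itself the nontrivial content being cited. As written, the proposal is a correct opening followed by an unproved termination claim that my two branches above show is false in the naive form stated.
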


\begin{figure}
\begin{center}
\begin{tabular}{ccc}
\begin{minipage}{0.15\textwidth}
\centering
\scalebox{0.7}{
\begin{tikzpicture}[scale=1,rotate=90]
\GraphInit[vstyle=Simple]
\SetVertexSimple[MinSize=6pt]
\Vertex[x=0,y=0]{x00}
\Vertex[x=1,y=0]{x01}
\Vertex[x=2,y=0]{x02}
\Vertex[x=0,y=1]{x10}
\Vertex[x=1,y=1]{x11}
\Vertex[x=2,y=1]{x12}
\Edges(x00,x01,x02,x12,x11,x10)
\Edges(x01,x11,x02)
\end{tikzpicture}}
\end{minipage}
&
\begin{minipage}{0.15\textwidth}
\centering
\scalebox{0.7}{
{\begin{tikzpicture}[scale=1,rotate=90]
\GraphInit[vstyle=Simple]
\SetVertexSimple[MinSize=6pt]
\Vertices{circle}{a,b,c,d,e}
\Edges(a,b,c,d,e,a)
\Edges(c,a,d)
\end{tikzpicture}}}
\end{minipage}
&
\begin{minipage}{0.15\textwidth}
\centering
\scalebox{0.7}{
\begin{tikzpicture}[scale=1, rotate=90]
\GraphInit[vstyle=Simple]
\SetVertexSimple[MinSize=6pt]
\Vertex[x=0,y=0]{x00}
\Vertex[x=1,y=0]{x01}
\Vertex[x=2,y=0]{x02}
\Vertex[x=0,y=1]{x10}
\Vertex[x=1,y=1]{x11}
\Vertex[x=2,y=1]{x12}
\Edges(x00,x01,x02,x12,x11,x10,x00)
\Edges(x01,x11,x02)
\end{tikzpicture}}
\end{minipage}\\
& &\\
$d$-$\mathbb{A}$ & $\overline{P_1+P_4}$ & $d$-domino
\end{tabular}
\end{center}
\caption{The minimal prime extensions of $\overline{2P_1+P_2}$.}
\label{fig:diamond-prime-ext}
\end{figure}
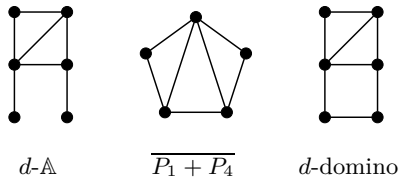

\begin{lemma}[\cite{BLR04}]\label{lem:cogemprimeext}
If a prime graph~$G$ contains an induced subgraph isomorphic to $P_1+\nobreak P_4$ then it contains one of the graphs in \figurename~\ref{fig:co-gem-prime-ext} as an induced subgraph.
\end{lemma}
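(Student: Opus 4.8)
The plan is to pass to a minimal prime induced subgraph containing $P_1+P_4$ and then to enumerate all such graphs by a case analysis on how the ``extra'' vertices attach to a fixed copy of $P_1+P_4$. Observe first that $P_1+P_4$ is not itself prime: writing $a$ for its isolated vertex and $b_1b_2b_3b_4$ for its $P_4$ (with endpoints $b_1,b_4$), the set $\{b_1,b_2,b_3,b_4\}$ is a non-trivial module. So a prime graph containing an induced $P_1+P_4$ must have further vertices. Among all induced subgraphs of $G$ that are prime and contain an induced $P_1+P_4$, pick one, $H$, of minimum order; such an $H$ exists since $G$ itself qualifies. If some proper induced subgraph $H'$ of $H$ were prime and contained an induced $P_1+P_4$, then $H'\ssi G$ would contradict the choice of $H$; hence $H$ is a \emph{minimal} prime graph containing $P_1+P_4$. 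As $H\ssi G$, it suffices to show that every minimal prime graph containing $P_1+P_4$ appears in \figurename~\ref{fig:co-gem-prime-ext}.

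Next I would bound the order of $H$. The key input is that a $5$-vertex graph has only finitely many minimal prime extensions, each of bounded order (at most seven vertices); informally, whenever the current vertex set carries a non-trivial module one can find a vertex of $G$ resolving it, and a reduction theorem for prime graphs — every sufficiently large prime graph has a prime induced subgraph obtained by deleting at most two vertices — prevents this process from producing an unboundedly large \emph{minimal} extension. With such a bound fixed, the rest is a finite check. Fix the copy $a+b_1b_2b_3b_4$ inside $H$ and classify each $v\in V(H)\setminus\{a,b_1,b_2,b_3,b_4\}$ by the pair $\bigl(N(v)\cap\{b_1,b_2,b_3,b_4\},\,\text{adjacency of }v\text{ to }a\bigr)$, giving at most $2^5$ types. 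Now prune: $H$ is twin-free (being prime), every vertex of $H$ must be needed either to break some module or to keep some induced $P_1+P_4$ alive (otherwise deleting it contradicts minimality), and any two vertices of $H$ are distinguished by a third. These constraints eliminate most of the $2^5$ attachment types, and the order bound restricts us to adding one, two or three extra vertices; checking each surviving configuration for primality and for containing an induced $P_1+P_4$ leaves exactly the graphs of \figurename~\ref{fig:co-gem-prime-ext}. The converse direction — that each of those graphs is prime and does contain an induced $P_1+P_4$ — is routine to verify by inspection.

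The step I expect to be the main obstacle is the enumeration itself: even after the twin-freeness and minimality reductions, the bookkeeping for the combinations of two or three extra vertices is sizeable and easy to get wrong, which is why in \cite{BLR04} it is done by an exhaustive computer search through all prime graphs up to the bounding order. A more delicate technical point is making the order bound rigorous, i.e.\ ruling out an unbounded cascade of forced vertices; this is precisely where one needs a reduction theorem for prime graphs rather than a naive ``add one distinguisher at a time'' argument.
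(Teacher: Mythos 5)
This statement is not proved in the paper at all: Lemma~\ref{lem:cogemprimeext} is imported verbatim from~\cite{BLR04} and used as a black box, so there is no in-paper argument to compare yours against. Judged on its own terms, your proposal has the right framing in its first paragraph (reduce to minimal prime extensions of $P_1+P_4$; that reduction is correct and standard), but the substance of the lemma is left unproved.

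The genuine gap is the ``key input'' you invoke: that a $5$-vertex graph has only finitely many minimal prime extensions, all of order at most seven. That is not a theorem you can cite --- finiteness of the set of minimal prime extensions is a nontrivial property of the base graph and is known to fail for some small graphs, so establishing it for $P_1+P_4$ specifically \emph{is} the content of the lemma, not a lemma you may assume. Your attempt to derive the bound from a Schmerl--Trotter-type reduction theorem does not close this: that theorem gives a prime induced subgraph on $n-1$ or $n-2$ vertices, but nothing forces that smaller prime graph to still contain an induced $P_1+P_4$, so minimality of the extension is not contradicted and no bound on $n$ follows. (Concretely, the graph $X_2$ in \figurename~\ref{fig:co-gem-prime-ext} has seven vertices, so any honest argument must at least reach order seven, and you give no mechanism that stops the cascade there.) Finally, even granting an order bound, you do not carry out the case analysis; you explicitly defer it to an exhaustive search in~\cite{BLR04}. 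A plan whose two load-bearing steps are, respectively, asserted without justification and outsourced to the reference being reproved is not a proof; if your intent is only to verify the lemma rather than reprove it, the honest course is to cite~\cite{BLR04} as the paper does.
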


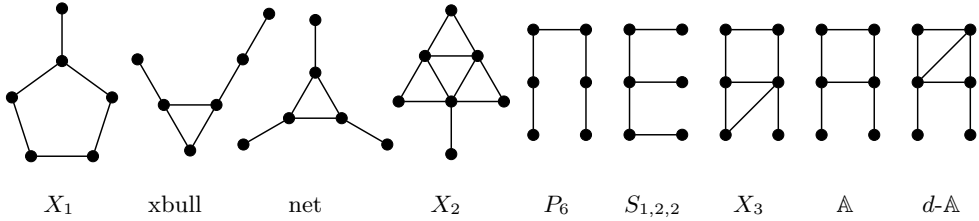
\begin{figure}
\begin{center}
\begin{tabular}{ccccccccc}
\begin{minipage}{0.12\textwidth}
\centering
\scalebox{0.7}{
{\begin{tikzpicture}[scale=1,rotate=90]
\GraphInit[vstyle=Simple]
\SetVertexSimple[MinSize=6pt]
\Vertices{circle}{a,b,c,d,e}
\Vertex[a=0,d=2]{f}
\Edges(a,b,c,d,e,a,f)
\end{tikzpicture}}}
\end{minipage}
&
\begin{minipage}{0.10\textwidth}
\centering
\scalebox{0.7}{
\begin{tikzpicture}[scale=1,rotate=60]
\GraphInit[vstyle=Simple]
\SetVertexSimple[MinSize=6pt]
\Vertex[a=60,d=2]{a}
\Vertex[a=60,d=1]{b}
\Vertex[a=0,d=1]{c}
\Vertex[a=0,d=2]{d}
\Vertex[a=0,d=3]{e}
\Vertex[x=0,y=0]{z}
\Edges(a,b,c,d,e)
\Edges(b,z,c)
\end{tikzpicture}}
\end{minipage}
&
\begin{minipage}{0.15\textwidth}
\centering
\scalebox{0.7}{
{\begin{tikzpicture}[scale=1,rotate=90]
\GraphInit[vstyle=Simple]
\SetVertexSimple[MinSize=6pt]
\Vertex[a=0,d=0.57735026919]{a}
\Vertex[a=120,d=0.57735026919]{b}
\Vertex[a=240,d=0.57735026919]{c}
\Vertex[a=0,d=1.57735026919]{d}
\Vertex[a=120,d=1.57735026919]{e}
\Vertex[a=240,d=1.57735026919]{f}
\Edges(a,b,c,a,d)
\Edges(b,e)
\Edges(c,f)
\end{tikzpicture}}}
\end{minipage}
&
\begin{minipage}{0.12\textwidth}
\centering
\scalebox{0.7}{
{\begin{tikzpicture}[scale=1,rotate=-90]
\GraphInit[vstyle=Simple]
\SetVertexSimple[MinSize=6pt]
\Vertex[a=0,d=0.57735026919]{a}
\Vertex[a=120,d=0.57735026919]{b}
\Vertex[a=240,d=0.57735026919]{c}
\Vertex[a=60,d=1.15470053838]{d}
\Vertex[a=180,d=1.15470053838]{e}
\Vertex[a=300,d=1.15470053838]{f}
\Vertex[a=0,d=1.57735026919]{z}
\Edges(a,b,c,a,d,b,e,c,f,a)
\Edges(a,z)
\end{tikzpicture}}}
\end{minipage}
&
\begin{minipage}{0.09\textwidth}
\centering
\scalebox{0.7}{
\begin{tikzpicture}[scale=1,rotate=90]
\GraphInit[vstyle=Simple]
\SetVertexSimple[MinSize=6pt]
\Vertex[x=0,y=0]{x00}
\Vertex[x=1,y=0]{x01}
\Vertex[x=2,y=0]{x02}
\Vertex[x=0,y=1]{x10}
\Vertex[x=1,y=1]{x11}
\Vertex[x=2,y=1]{x12}
\Edges(x00,x01,x02,x12,x11,x10)
\end{tikzpicture}}
\end{minipage}
&
\begin{minipage}{0.09\textwidth}
\centering
\scalebox{0.7}{
\begin{tikzpicture}[scale=1,rotate=270]
\GraphInit[vstyle=Simple]
\SetVertexSimple[MinSize=6pt]
\Vertex[x=0,y=0]{x00}
\Vertex[x=1,y=0]{x01}
\Vertex[x=2,y=0]{x02}
\Vertex[x=0,y=1]{x10}
\Vertex[x=1,y=1]{x11}
\Vertex[x=2,y=1]{x12}
\Edges(x10,x00,x01,x02,x12)
\Edges(x01,x11)
\end{tikzpicture}}
\end{minipage}
&
\begin{minipage}{0.09\textwidth}
\centering
\scalebox{0.7}{
\begin{tikzpicture}[scale=1,rotate=90]
\GraphInit[vstyle=Simple]
\SetVertexSimple[MinSize=6pt]
\Vertex[x=0,y=0]{x00}
\Vertex[x=1,y=0]{x01}
\Vertex[x=2,y=0]{x02}
\Vertex[x=0,y=1]{x10}
\Vertex[x=1,y=1]{x11}
\Vertex[x=2,y=1]{x12}
\Edges(x00,x01,x02,x12,x11,x10,x01)
\Edge(x01)(x11)
\end{tikzpicture}}
\end{minipage}
&
\begin{minipage}{0.09\textwidth}
\centering
\scalebox{0.7}{
\begin{tikzpicture}[scale=1,rotate=90]
\GraphInit[vstyle=Simple]
\SetVertexSimple[MinSize=6pt]
\Vertex[x=0,y=0]{x00}
\Vertex[x=1,y=0]{x01}
\Vertex[x=2,y=0]{x02}
\Vertex[x=0,y=1]{x10}
\Vertex[x=1,y=1]{x11}
\Vertex[x=2,y=1]{x12}
\Edges(x00,x01,x02,x12,x11,x10)
\Edge(x01)(x11)
\end{tikzpicture}}
\end{minipage}
&
\begin{minipage}{0.09\textwidth}
\centering
\scalebox{0.7}{
\begin{tikzpicture}[scale=1,rotate=90]
\GraphInit[vstyle=Simple]
\SetVertexSimple[MinSize=6pt]
\Vertex[x=0,y=0]{x00}
\Vertex[x=1,y=0]{x01}
\Vertex[x=2,y=0]{x02}
\Vertex[x=0,y=1]{x10}
\Vertex[x=1,y=1]{x11}
\Vertex[x=2,y=1]{x12}
\Edges(x00,x01,x02,x12,x11,x10)
\Edges(x01,x11,x02)
\end{tikzpicture}}
\end{minipage}\\
& & & & & & & &\\
$X_1$ & xbull & $\net$ & $X_2$ & $P_6$ & $S_{1,2,2}$ & $X_3$ & $\mathbb{A}$ & $d$-$\mathbb{A}$
\end{tabular}
\end{center}
\caption{The minimal prime extensions of $P_1+\nobreak P_4$.}
\label{fig:co-gem-prime-ext}
\end{figure}

We also use the following structural lemma due to Olariu.

\begin{lemma}[\cite{Olariu91}]\label{lem:bull-house-prime}
Every prime $(\bull,\house)$-free graph (see also \figurename~\ref{fig:bull-and-house}) is either $K_3$-free or the complement of a $2P_2$-free bipartite graph.
\end{lemma}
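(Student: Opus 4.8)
The plan is to dispose of the easy alternative first and then show that a prime $(\bull,\house)$-free graph $G$ that \emph{does} contain a triangle must be $C_4$-free and have its vertex set partitioned into two cliques; since the complement of a $2P_2$-free bipartite graph is precisely a $C_4$-free graph whose vertices split into two cliques, this is exactly the second alternative in the statement. So from now on assume $G$ is prime and $(\bull,\house)$-free and that $T=\{a,b,c\}$ induces a triangle in $G$.

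The engine of the proof is a local analysis of how an arbitrary vertex $v\notin T$ attaches to $T$, i.e.\ of the set $N_T(v)\subseteq\{a,b,c\}$. The key observation is that two vertices attached to \emph{distinct single} vertices of $T$ cannot coexist: if $N_T(v)=\{a\}$ and $N_T(w)=\{b\}$, then $G[\{a,b,c,v,w\}]$ is an induced $\bull$ when $vw\notin E$ and an induced $\house$ when $vw\in E$. Hence at most one of the three ``pendant types'' $\{a\},\{b\},\{c\}$ is realised, say only $\{a\}$. I would then record a handful of further forbidden configurations of the same flavour — analysing pairs of vertices of the doubleton types $\{a,b\},\{a,c\},\{b,c\}$, of the type $\{a\}$, and of the anticomplete type, and invoking primeness to produce a distinguishing vertex whenever a candidate set is not yet a module — to pin the possible adjacency patterns around $T$ down tightly enough to bootstrap to global structure.

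From there I would derive three global facts: $G$ has no induced $3P_1$, no induced $C_5$, and no induced $C_4$. For each, one starts from a putative induced copy, uses connectedness and primeness of $G$ (prime graphs are connected, and every non-module set is distinguished by an outside vertex) to find a vertex linking that copy to the triangle $T$, and then exhibits an induced $\bull$ or $\house$ on a small vertex set. Once $G$ is $(3P_1,C_5,C_4,\house)$-free, co-bipartiteness is automatic: a shortest odd cycle of $\overline G$ is chordless, and $\overline{C_3}=3P_1$ and $\overline{C_5}=C_5$ are forbidden while for $k\ge 6$ the graph $C_k$ contains an induced $P_5$, so $\overline{C_k}$ contains $\overline{P_5}=\house$, which is forbidden; hence $\overline G$ has no odd cycle and is bipartite. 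Combined with $C_4$-freeness, this places $G$ in the second alternative.

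The main obstacle is precisely the middle step: converting the finitely many local forbidden patterns around $T$ into the global non-existence of $3P_1$, $C_5$ and $C_4$. The delicate situation is when the offending induced subgraph is ``far'' from $T$ — sharing no vertex and few neighbours with it — and one must use primeness repeatedly to manufacture the extra vertices needed to complete a $\bull$ or a $\house$, while keeping the resulting case analysis under control. This is essentially the bookkeeping in Olariu's original argument; organising the target as ``$(3P_1,C_5,C_4)$-free together with $\house$-freeness yields co-bipartite and $C_4$-free'' is what keeps it manageable.
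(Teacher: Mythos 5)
First, a point of reference: the paper does not prove this lemma at all --- it is imported verbatim from Olariu's 1991 paper as Lemma~\ref{lem:bull-house-prime} --- so there is no in-paper argument to compare your route against; the only question is whether your proposal actually constitutes a proof.

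It does not: there is a genuine gap, and you name it yourself. Your framing is sound --- the reduction of the second alternative to ``co-bipartite and $C_4$-free'', the observation that two vertices attached to distinct single vertices of a triangle force a $\bull$ or a $\house$, and the closing step that a $(3P_1,C_4,C_5,\house)$-free graph is the complement of a $2P_2$-free bipartite graph (via the shortest-odd-cycle argument in $\overline{G}$, using $\overline{C_3}=3P_1$, $\overline{C_5}=C_5$ and $\overline{P_5}=\house$) are all correct. But the entire combinatorial core --- showing that a \emph{prime} $(\bull,\house)$-free graph containing a triangle has no induced $3P_1$, $C_4$ or $C_5$ --- is left as ``record a handful of further forbidden configurations \dots to bootstrap to global structure'', and you explicitly flag this as ``the main obstacle'' that is ``essentially the bookkeeping in Olariu's original argument''. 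That bookkeeping is the theorem: it is exactly where primeness must be invoked (a triangle-containing $(\bull,\house)$-free graph that is not prime, such as $K_3+K_3$ with a pendant module attached, need not be handled by purely local analysis around one triangle), and the local-to-global step when the offending $3P_1$, $C_4$ or $C_5$ shares no vertex with $T$ requires manufacturing distinguishing vertices via non-trivial modules and then controlling a multi-way case analysis. None of that is carried out, so what you have is a correct plan with verified endpoints rather than a proof.
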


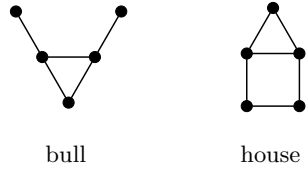
\begin{figure}
\begin{center}
\begin{tabular}{cc}
\begin{minipage}{0.20\textwidth}
\centering
\scalebox{0.7}{
{\begin{tikzpicture}[scale=1,rotate=90]
\GraphInit[vstyle=Simple]
\SetVertexSimple[MinSize=6pt]
\Vertex[x=0,y=0]{a}
\Vertex[a=30,d=1]{b}
\Vertex[a=30,d=2]{c}
\Vertex[a=-30,d=1]{d}
\Vertex[a=-30,d=2]{e}
\Edges(c,b,a,d,e)
\Edges(b,d)
\end{tikzpicture}}}
\end{minipage}
&
\begin{minipage}{0.2\textwidth}
\centering
\scalebox{0.7}{
\begin{tikzpicture}[scale=1]
\GraphInit[vstyle=Simple]
\SetVertexSimple[MinSize=6pt]
\Vertex[x=0,y=0]{x00}
\Vertex[x=1,y=0]{x01}
\Vertex[x=0,y=1]{x10}
\Vertex[x=1,y=1]{x11}
\Vertex[x=0.5,y=1.86602540378]{t}
\Edges(x00,x01,x11,x10,x00)
\Edges(x11,t,x10)
\end{tikzpicture}}
\end{minipage}\\
\\
bull & house 
\end{tabular}
\end{center}
\caption{The graphs bull and house.}\label{fig:bull-and-house}
\end{figure}

Let $G=(V,E)$ be a connected graph.
An edge $e \in E$ is a {\em bridge} if deleting it would make~$G$ disconnected.
A vertex $v\in V$ is a {\em cut-vertex} if $G[V\setminus \{v\}]$ is disconnected.
If~$G$ has at least three vertices, but no cut-vertices then it is {\em $2$-connected}.
For any two vertices~$u$ and~$v$ in a 2-connected graph, there are two paths from~$u$ to~$v$ that are internally vertex-disjoint (by Menger's Theorem, see e.g.~\cite{Di12}).
A {\em block} of~$G$ is a maximal 2-connected subgraph, a bridge or a single vertex.
Note that two blocks of~$G$ have at most one common vertex, which must be a cut-vertex of~$G$.

Recall that~$K_{1,r}$ denotes the $(r+\nobreak 1)$-vertex star.
In this graph the vertex of degree~$r$ is called the {\em central vertex}.
A {\em double-star} is the graph formed from
two stars~$K_{1,s}$ and~$K_{1,r}$ by joining the central vertices of each star
with an edge.

Let $G=(V,E)$ be a graph.
A set $S\subseteq V$ is {\em independent} if~$G[S]$ contains no edges.
The {\em independence number} of~$G$ is the size of a largest independent set of~$G$.
If~$V$ can be partitioned into two (possibly empty) independent sets then~$G$ is {\em bipartite}.
We say that~$G$ is \emph{complete multipartite} if~$V$ can be partitioned into~$k$ independent sets $V_1,\ldots,V_k$ (called {\em partition classes}) for some integer~$k$, such that two vertices are adjacent if and only if they belong to two different sets~$V_i$ and~$V_j$.

The next result, which we will use later on, is due to Olariu~\cite{Olariu88} (note that the graph $\overline{P_1+P_3}$ is also called the {\em paw}).

\begin{lemma}[\cite{Olariu88}]\label{lem:paw}
Every connected $(\overline{P_1+P_3})$-free graph is either complete multipartite or $K_3$-free.
\end{lemma}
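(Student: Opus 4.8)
The statement is Olariu's characterisation of paw-free graphs; recall that the paw $\overline{P_1+P_3}$ is a triangle with one pendant vertex attached. If $G$ is $K_3$-free there is nothing to prove, so the plan is to assume $G$ contains a triangle and then exhibit an explicit partition of $V(G)$ witnessing that $G$ is complete multipartite. The natural object to build this partition around is a \emph{maximal} clique: first I would extend a triangle of $G$ to a maximal clique $K=\{k_1,\dots,k_m\}$, noting $m\ge 3$.

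The key step is a rigidity observation that uses only paw-freeness: every vertex $u\in V(G)\setminus K$ is either anti-complete to $K$, or adjacent to all vertices of $K$ except exactly one. Indeed, $u$ cannot be complete to $K$ by maximality; and if $u$ had a neighbour $a\in K$ together with two non-neighbours $b,c\in K$, then $G[\{a,b,c,u\}]$ would be a paw (the triangle $abc$ with the pendant edge $ua$), a contradiction. Accordingly I would split $V(G)\setminus K$ into the set $Z$ of vertices anti-complete to $K$ and the set $U$ of vertices missing exactly one vertex of $K$, and for $u\in U$ let $f(u)$ denote the unique vertex of $K$ non-adjacent to $u$.

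Next I would use connectivity to prove $Z=\emptyset$ — this is the one place where connectivity is genuinely needed, since $K_3+\nobreak P_1$ shows that paw-freeness alone does not suffice. If $Z\neq\emptyset$, then because $G$ is connected and no vertex of $Z$ has a neighbour in $K$, there is an edge $xy$ with $x\in U$ and $y\in Z$. Writing $f(x)=k_1$ and using $m\ge 3$ to choose distinct $k_2,k_3\in K\setminus\{k_1\}$, the set $\{x,k_2,k_3,y\}$ induces a paw (triangle $xk_2k_3$, pendant edge $xy$), a contradiction; hence $V(G)=K\cup U$.

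Finally I would check that the sets $P_i=\{k_i\}\cup f^{-1}(k_i)$ for $i=1,\dots,m$ form a complete multipartite partition of $G$. They clearly partition $V(G)$. Each $P_i$ is independent: $k_i$ is non-adjacent to every $u$ with $f(u)=k_i$ by definition, and if two vertices $u,u'\in f^{-1}(k_i)$ were adjacent then, picking any $k_j$ with $j\neq i$, the set $\{u,u',k_j,k_i\}$ would induce a paw. For $i\neq j$, every vertex of $P_i$ is adjacent to every vertex of $P_j$: the adjacencies $k_ik_j$ hold since $K$ is a clique, $k_iu$ with $f(u)=k_j$ holds because $i\neq j$, and the only remaining case, an alleged non-edge between $u\in f^{-1}(k_i)$ and $u'\in f^{-1}(k_j)$, would together with a vertex $k_l$, $l\notin\{i,j\}$ (available since $m\ge 3$), again induce a paw. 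Thus $G$ is complete multipartite. The proof is essentially routine once one commits to working with a maximal clique, as that choice forces the extremely rigid adjacency pattern of the second paragraph; the only real subtlety, and hence the ``main obstacle'', is remembering that connectivity must be invoked — precisely to eliminate the vertices anti-complete to $K$ — and then organising the final verification cleanly so that each case reduces to spotting an induced paw.
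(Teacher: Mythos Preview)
Your argument is correct. The paper does not supply its own proof of this lemma; it simply cites Olariu's result, so there is nothing to compare against beyond noting that your proof is essentially the standard one.

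One small expository point: in the final case (a purported non-edge between $u\in f^{-1}(k_i)$ and $u'\in f^{-1}(k_j)$) you say that ``together with a vertex $k_l$, $l\notin\{i,j\}$'' one obtains a paw, but you do not name the four vertices. It is worth being explicit: for instance, $\{u',k_i,k_l,u\}$ works, since $u',k_i,k_l$ form a triangle while $u$ is adjacent to $k_l$ but to neither $k_i$ (because $f(u)=k_i$) nor $u'$ (by assumption). This is not a gap in the logic, only in the write-up.
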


Let $G=(V,E)$ be a graph. A vertex $v\in V$ is {\em simplicial} if~$G[N(v)]$ is complete.
The following lemma is well known (see e.g.~\cite{Go80}).

\begin{lemma}\label{l-simplicial}
Every chordal graph has a simplicial vertex.
\end{lemma}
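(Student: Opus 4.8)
The plan is to prove, by induction on $|V(G)|$, the stronger statement (Dirac's lemma): \emph{every chordal graph is either complete or contains two non-adjacent simplicial vertices}. This immediately implies the lemma, since in a complete graph every vertex is simplicial. The base case $|V(G)|\le 1$ and the case where $G$ is complete are trivial, so suppose $G$ is chordal and not complete; then some pair $a,b$ is non-adjacent and $V\setminus\{a,b\}$ separates $a$ from $b$, so a minimal $a$--$b$ separator exists. Assume the statement for all chordal graphs on fewer vertices.

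The one place where chordality is genuinely used is the standard fact that \emph{every inclusion-minimal vertex separator of a chordal graph induces a clique}, which I would establish first. Let $S$ separate non-adjacent vertices $a$ and $b$, and let $A$ and $B$ be the components of $G-S$ containing $a$ and $b$. By minimality of $S$, every vertex of $S$ has a neighbour in $A$ and a neighbour in $B$ (otherwise it could be deleted from $S$ without reconnecting $a$ to $b$). If some $x,y\in S$ were non-adjacent, then taking a shortest $x$--$y$ path with interior in $A$ and a shortest $x$--$y$ path with interior in $B$ and concatenating them yields a cycle on at least four vertices with no chord: chords within an interior are excluded by shortestness, chords between the two interiors by the fact that $A$ and $B$ are anti-complete, and $xy$ is a non-edge by assumption. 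This contradicts $G$ being chordal.

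For the inductive step, fix non-adjacent $a,b$, a minimal $a$--$b$ separator $S$, and components $A\ni a$, $B\ni b$ of $G-S$. Apply the induction hypothesis to $G_A:=G[A\cup S]$, which is chordal and strictly smaller than $G$ since $b\notin A\cup S$. If $G_A$ is complete, then since $N_G(a)\subseteq A\cup S=V(G_A)$, the set $N_G(a)$ is a clique, so $a$ itself is simplicial in $G$; put $v:=a\in A$. Otherwise $G_A$ has two non-adjacent simplicial vertices, and since $S$ is a clique they cannot both lie in $S$, so one of them, $v$, lies in $A$; then $N_G(v)\subseteq A\cup S=V(G_A)$ and $v$ is simplicial in $G_A$, hence in $G$. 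In all cases we obtain a simplicial vertex $v$ of $G$ with $v\in A$. Running the symmetric argument on $G[B\cup S]$ produces a simplicial vertex $w$ of $G$ with $w\in B$, and since $A$ and $B$ lie in distinct components of $G-S$, the vertices $v$ and $w$ are non-adjacent, which closes the induction.

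The main obstacle — indeed essentially the only subtlety — is the bookkeeping that forces the strengthened induction hypothesis: applying induction directly to $G[A\cup S]$ only yields \emph{some} simplicial vertex of that subgraph, which might sit on the separator $S$ and then fail to be simplicial in $G$, because its $G$-neighbourhood can also reach into $B$. Carrying ``two non-adjacent simplicial vertices'' through the induction, combined with the clique-separator fact (so that $S$ can contain at most one of them), is exactly what rules this out; everything else is routine.
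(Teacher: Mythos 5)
Your proof is correct: it is the classical Dirac argument (minimal separators of chordal graphs are cliques, plus induction on the strengthened statement that a non-complete chordal graph has two non-adjacent simplicial vertices), and all the delicate points — non-emptiness of the path interiors, the separator possibly being empty, and why at most one of the two simplicial vertices of $G[A\cup S]$ can lie in $S$ — are handled properly. The paper offers no proof of its own, merely citing Golumbic's textbook, and the standard proof given there is exactly the one you have reproduced.
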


Let $G=(V,E)$ be a graph.
A set $S\subseteq V$ is said to be a {\em clique} if~$G[S]$ is a complete graph.
The {\em clique number} of~$G$ is the size of a largest clique of~$G$.
The {\em chromatic number} of~$G$ is the minimum number~$k$ for which~$G$ has a {\em $k$-colouring}, that is, for which there exists a mapping $c:V\to \{1,\ldots,k\}$ such
that $c(u)\neq c(v)$ whenever~$u$ and~$v$ are adjacent.
We say that~$G$ is {\em perfect} if, for every induced subgraph $H\ssi G$, the chromatic number of~$H$ equals its clique number.
The graph~$G$ is a {\em split graph} if it has a {\em split partition}, that is, a partition of~$V$ into two (possibly empty) sets~$K$ and~$I$, where~$K$ is a clique and~$I$ is an independent set; if~$K$ and~$I$ are complete to each other then~$G$ is said to be a {\em complete} split graph.

It is well known that every split graph is chordal
and that every chordal graph is perfect (see~\cite{Go80}).
The first inclusion also follows from the next lemma, which is due to F\"oldes and Hammer~\cite{FH77}.
\begin{lemma}[\cite{FH77}]\label{lem:split}
A graph is split if and only if it is $(C_4,C_5,2P_2)$-free.
\end{lemma}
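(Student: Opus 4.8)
The plan is to prove the two directions separately, with essentially all the work in the sufficiency direction.

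For necessity, I would argue directly from a split partition $(K,I)$, using that each of $C_4$, $C_5$ and $2P_2$ has both clique number and independence number equal to~$2$. If some induced copy lived on a vertex set $S$, then $|S\cap K|\le 2$ and $|S\cap I|\le 2$; for $C_5$ this already contradicts $|S|=5$. For $C_4$ and $2P_2$ (both on four vertices) we get $|S\cap K|=|S\cap I|=2$, and then a one-line check shows that any choice of an independent $2$-subset forces the complementary $2$-subset (which must be a clique) to be a non-edge, a contradiction.

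For sufficiency, I would first observe that $(C_4,C_5,2P_2)$-freeness already forces chordality, since $C_k$ contains an induced $2P_2$ for every $k\ge 6$; this lets me invoke Lemma~\ref{l-simplicial} if needed. Next I would reduce to $G$ connected: if two distinct components each contained an edge, those two edges would form an induced $2P_2$, so $G$ has at most one non-trivial component, and a split partition of that component extends to all of $G$ by placing the remaining (isolated) vertices on the independent side. Then I would pick a maximum independent set $I$ and, among all maximum independent sets, choose one for which $K:=V(G)\setminus I$ has as few non-adjacent pairs as possible, aiming to show $K$ is a clique. Supposing $u,v\in K$ are non-adjacent, maximality of $I$ gives each of $u,v$ a neighbour in $I$, and $2P_2$-freeness forces $N(u)\cap I$ and $N(v)\cap I$ to be nested under inclusion (otherwise a private neighbour of each, both lying in $I$, forms an induced $2P_2$ together with $u$ and $v$). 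From there the plan is to iterate the extremal idea: swap a vertex of $K$ into $I$ to obtain another maximum independent set whose complement has strictly fewer non-adjacent pairs, contradicting the choice of $I$ — unless the swap is blocked, in which case the blocking vertices together with $u$, $v$ and a common neighbour in $I$ can be shown to induce a $C_4$ or a $C_5$. An essentially equivalent route starts from a maximum clique $K$ minimizing the number of edges of $G-K$, using chordality to organize the analysis.

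The hard part will be exactly this final case analysis in the sufficiency direction. The subtlety is that several natural ``almost-split'' obstructions on four or five vertices — for instance the paw, the gem $\overline{P_1+P_4}$ and the co-gem $P_1+P_4$ — are themselves split graphs, so no purely local configuration on such a vertex set suffices to produce a forbidden induced subgraph; one really has to exploit the global extremal choice of $I$ (or of $K$) and combine all three of the $C_4$-, $C_5$- and $2P_2$-conditions simultaneously to make every branch close.
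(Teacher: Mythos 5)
The paper offers no proof of this lemma: it is imported verbatim from F\"oldes and Hammer~\cite{FH77}, so there is nothing internal to compare your argument against. Your necessity direction is complete and correct (the counting via $|S\cap K|\le 2$ and $|S\cap I|\le 2$, plus the observation that the complement of an independent $2$-subset of $C_4$ or $2P_2$ is a non-edge, is exactly right). Your sufficiency direction is the standard extremal argument and the strategy does work, but as written it stops precisely where the content is: you assert that a blocked swap produces a $C_4$ or $C_5$ without exhibiting the configuration, and you flag this yourself as ``the hard part''. That is a genuine gap in the write-up, though a fillable one.

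For the record, here is how your setup closes, so you can see that every branch really does terminate. Take $I$ a maximum independent set chosen so that $K=V\setminus I$ has the fewest non-adjacent pairs, and suppose $u,v\in K$ are non-adjacent; by maximality each has a neighbour in $I$, and your nestedness step gives (say) $N_I(u)\subseteq N_I(v)$. First, $|N_I(u)|=1$: two common neighbours $a,b\in I$ of $u$ and $v$ would induce a $C_4$ on $u,a,v,b$. This is the step that makes the swap size-preserving, and it is missing from your sketch. Now write $N_I(u)=\{a\}$ and set $I'=(I\setminus\{a\})\cup\{u\}$, which is again a maximum independent set. Since $v$ is adjacent to $a$ but not to $u$, the swap strictly decreases the number of non-adjacent pairs in the complement unless some $w\in K$ is adjacent to $u$ but not to $a$. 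In that case: if $wv\in E$ then $w,u,a,v$ induce a $C_4$; if $wv\notin E$, then $w$ has a neighbour $b\in I$ with $b\ne a$ (maximality of $I$), and $bu\notin E$ because $N_I(u)=\{a\}$, whence $\{w,b\}$ and $\{v,a\}$ induce a $2P_2$ if $bv\notin E$, while $b,w,u,a,v$ induce a $C_5$ if $bv\in E$. So all three forbidden subgraphs are needed, exactly as you predicted, and with these details supplied your proposal is a correct proof.
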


\phantomsection\label{def:thin-spider}
A graph~$G$ is a {\em thin spider} if its vertex set can be partitioned into a
clique~$K$, an independent set~$I$ and a set~$R$ such that $|K|=|I| \geq 2$,
the set~$R$ is complete to~$K$ and anti-complete to~$I$ and the edges
between~$K$ and~$I$ form an induced matching (that is, every vertex of~$K$ has a
unique neighbour in~$I$ and vice versa). Note that if a thin spider is prime
then~$|R| \leq 1$. A {\em thick spider} is the complement of a thin spider. A graph is a
{\em spider} if it is either a thin or a thick spider.

Spiders play an important role in our result for $\overline{S_{1,1,2}}$-free chordal graphs and we will need the following lemma (due to Brandst\"adt and Mosca).

\begin{lemma}[\cite{BM04}]\label{lem:chair-split-spider}
If~$G$ is a prime $S_{1,1,2}$-free split graph then it is a spider.
\end{lemma}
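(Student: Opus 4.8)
The plan is to choose a convenient split partition, translate both primality and $S_{1,1,2}$-freeness into conditions on neighbourhoods, and then argue that the only surviving configurations are spiders.

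\emph{Setup.} Fix a split partition $(K,I)$ of $G$ with $|K|$ as large as possible, and put $A_x=N(x)\cap I$ for $x\in K$ and $B_v=N(v)$ for $v\in I$ (recall $N(v)\subseteq K$). We may assume $|V(G)|\ge 4$. Since $G$ is prime it is connected and co-connected (otherwise a component, or the complement of a component, would be a non-trivial module), so $G$ has neither an isolated nor a universal vertex; together with the maximality of $|K|$ this gives $\emptyset\ne B_v\subsetneq K$ for all $v\in I$ and $A_x\subsetneq I$ for all $x\in K$. As $N[x]=K\cup A_x$ and $N(v)=B_v$, primality amounts to the sets $A_x$ ($x\in K$) being pairwise distinct (no true twins in the clique) and the sets $B_v$ ($v\in I$) being pairwise distinct (no false twins in the stable set); in particular at most one vertex of $K$ has $A_x=\emptyset$.

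\emph{Locating the chair.} I would first observe that in any split graph an induced $S_{1,1,2}$ always has the same shape: its degree-$3$ vertex is adjacent to three pairwise non-adjacent vertices, so it cannot lie in $I$ (that would force a triangle in $K$) and hence lies in $K$; the degree-$2$ vertex cannot lie in $I$ either (its degree-$1$ neighbour would then lie in $K$, adjacent to the centre), so it too lies in $K$; and the three remaining vertices lie in $I$. Hence $G$ is $S_{1,1,2}$-free exactly when there are no distinct $x,y\in K$ with $|A_x\setminus A_y|\ge 2$ and $A_y\setminus A_x\ne\emptyset$ — equivalently, when any two of the sets $A_x$ are comparable under inclusion or else differ in exactly one element on each side.

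\emph{The dichotomy.} If $G$ is $K_3$-free then $|K|\le 2$, and a short check using the distinctness of the $B_v$ and $|V(G)|\ge 4$ forces $G\cong P_4$, a thin spider. So assume $|K|\ge 3$. I would then rule out that $\{A_x:x\in K\}$ is a chain $A_{x_1}\subsetneq\cdots\subsetneq A_{x_n}$: in that case every $B_v$ is an ``up-set'' $\{x_i,\dots,x_n\}$, so $B_v\subsetneq K$ forces $A_{x_1}=\emptyset$, while distinctness of the $A_x$ produces a $v\in I$ with $B_v=\{x_2,\dots,x_n\}=N(x_1)$, contradicting that $v$ and $x_1$ are not false twins. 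So some pair $A_u,A_w$ is incomparable, hence $A_u=S\cup\{p\}$ and $A_w=S\cup\{q\}$ with $S=A_u\cap A_w$ and distinct $p,q\notin S$. The remaining, and main, task is to propagate the comparable-or-single-swap condition to every other clique vertex and every vertex of $I$, invoking primality whenever a near-coincidence threatens, and to conclude that exactly one of two global situations occurs: either every $v\in I$ lies in a unique $A_x$, so the $K$–$I$ edges form an induced matching saturating $I$, the at most one clique vertex with empty $A_x$ plays the role of $R$, and $G$ is a thin spider; or every $v\in I$ misses exactly one $A_x$, and the same argument run on the complementary incidences identifies $G$ as a thick spider.

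\emph{Main obstacle.} The crux is this last step. The chair condition is purely local — it only restricts pairs of clique vertices — so turning it into a global matching structure forces repeated appeals to primality: the distinctness of the $B_v$, and the absence of the small modules $\{x,v\}$ with $x\in K$ and $v\in I$, are precisely what exclude the awkward intermediate cases in which two ``swaps'' sit on different bases $S$, or in which $I$ simultaneously contains a vertex with a single $K$-neighbour and a vertex with many. I also expect the thin and thick cases to require separate handling rather than following from one another by complementation, since the complement of an $S_{1,1,2}$-free graph need not be $S_{1,1,2}$-free.
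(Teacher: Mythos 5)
The paper does not prove Lemma~\ref{lem:chair-split-spider} at all: it is imported verbatim from Brandst\"adt and Mosca~\cite{BM04}, so there is no in-paper argument to compare yours against. Judged on its own terms, your proposal is a correct \emph{plan} but not a proof, and the missing piece is exactly the substance of the lemma. Everything you actually carry out checks out: the forced placement of an induced $S_{1,1,2}$ relative to a split partition (centre and degree-$2$ vertex in $K$, all three leaves in $I$) is right, hence so is the reformulation of $S_{1,1,2}$-freeness as ``for all distinct $x,y\in K$, either $A_x$ and $A_y$ are comparable or $|A_x\setminus A_y|=|A_y\setminus A_x|=1$''; the elimination of the chain configuration via a false-twin module $\{v,x_1\}$ is sound; and the $|K|\le 2$ case does collapse to $P_4$. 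But the step you yourself label ``the remaining, and main, task'' and ``the crux'' --- propagating a single incomparable pair $A_u=S\cup\{p\}$, $A_w=S\cup\{q\}$ to the conclusion that \emph{every} $A_x$ and \emph{every} $B_v$ fits one of the two spider patterns --- is where all the combinatorial work of \cite{BM04} lives, and it is entirely absent. One would need to show, at minimum, that all incomparable pairs share the same base $S$, that $S$ is forced to be either empty or all of $I$ minus the swapped elements (yielding the thin/thick dichotomy), that no $A_x$ can sit strictly between these patterns, and that at most one clique vertex can play the role of $R$; each of these requires its own interplay of the chair condition with primality. A description of the intended conclusion cannot stand in for these arguments.

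Two smaller points to repair if you complete the proof. First, primality does not ``amount to'' the $A_x$ being pairwise distinct and the $B_v$ being pairwise distinct: those are merely the no-twin consequences of primality, and your endgame will almost certainly need to exclude larger or mixed modules (you implicitly concede this later by invoking modules of the form $\{x,v\}$, which are not covered by your stated reformulation). Second, your target description of the thick case, ``every $v\in I$ misses exactly one $A_x$'', is only correct when $R=\emptyset$; if the head vertex $r$ is present it sits in the clique side with $A_r=\emptyset$, so every $v\in I$ misses two clique vertices. Neither point is fatal, but both indicate that the final structural analysis has not yet been thought through at the level of detail a proof requires.
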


\subsection{Clique-width}

The {\em clique-width} of a graph~$G$, denoted by~$\cw(G)$, is the minimum
number of labels needed to
construct~$G$ by
using the following four operations:
\begin{enumerate}
\item creating a new graph consisting of a single vertex~$v$ with label~$i$ (denoted by~$i(v)$);
\item taking the disjoint union of two labelled graphs~$G_1$ and~$G_2$ (denoted by $G_1\oplus\nobreak G_2$);
\item joining each vertex with label~$i$ to each vertex with label~$j$ ($i\neq j$, denoted by~$\eta_{i,j}$);
\item renaming label~$i$ to~$j$ (denoted by~$\rho_{i\rightarrow j}$).
\end{enumerate}
An algebraic term that represents such a construction of~$G$ and uses at most~$k$ labels is said to be a {\em $k$-expression} of~$G$ (i.e. the clique-width of~$G$ is the minimum~$k$ for which~$G$ has a $k$-expression).
For instance, an induced path on four consecutive vertices $a, b, c, d$ has clique-width equal to~$3$, and the following 3-expression can be used to construct~it:
$$
\eta_{3,2}(3(d)\oplus \rho_{3\rightarrow 2}(\rho_{2\rightarrow 1}(\eta_{3,2}(3(c)\oplus \eta_{2,1}(2(b)\oplus 1(a)))))).
$$
A class of graphs~${\cal G}$ has \emph{bounded} clique-width if
there is a constant~$c$ such that the clique-width of every graph in~${\cal G}$ is at most~$c$; otherwise the clique-width of~${\cal G}$ is
{\em unbounded}.

Let~$G$ be a graph. We define the following operations.
The \emph{subdivision} of an edge~$uv$ replaces~$uv$ by a new vertex~$w$ with edges~$uw$ and~$vw$.
For an induced subgraph $G'\ssi G$, the {\em subgraph complementation} operation (acting on~$G$ with respect to~$G'$) replaces every edge present in~$G'$
by a non-edge, and vice versa. Similarly, for two disjoint vertex subsets~$S$ and~$T$ in~$G$, the {\em bipartite complementation} operation with respect to~$S$ and~$T$ acts on~$G$ by replacing
every edge with one end-vertex in~$S$ and the other one in~$T$ by a non-edge and vice versa.

We now state some useful facts about how the above operations (and some other ones) influence the clique-width of a graph.
We will use these facts throughout the paper.
Let $k\geq 0$ be a constant and let~$\gamma$ be some graph operation.
We say that a graph class~${\cal G'}$ is {\em $(k,\gamma)$-obtained} from a graph class~${\cal G}$
if the following two conditions hold:
\begin{enumeratei}
\item every graph in~${\cal G'}$ is obtained from a graph in~${\cal G}$ by performing~$\gamma$ at most~$k$ times, and
\item for every $G\in {\cal G}$ there exists at least one graph
in~${\cal G'}$ obtained from~$G$ by performing~$\gamma$ at most~$k$ times.
\end{enumeratei}
If we do not impose a finite upper bound~$k$ on the number of applications of~$\gamma$ then we write that~${\cal G}'$ is $(\infty,\gamma)$-obtained from~${\cal G}$.

We say that~$\gamma$ {\em preserves} boundedness of clique-width if
for any finite constant~$k$ and any graph class~${\cal G}$, any graph class~${\cal G}'$ that is $(k,\gamma)$-obtained from~${\cal G}$
has bounded clique-width if and only if~${\cal G}$ has bounded clique-width.

\begin{enumerate}[\bf F{a}ct 1.]
\item \label{fact:del-vert} Vertex deletion preserves boundedness of clique-width~\cite{LR04}.\\[-1em]

\item \label{fact:comp} Subgraph complementation preserves boundedness of clique-width~\cite{KLM09}.\\[-1em]

\item \label{fact:bip} Bipartite complementation preserves boundedness of clique-width~\cite{KLM09}.\\[-1em]

\item \label{fact:2-conn} If~${\cal G}$ is a class of graphs, then~${\cal G}$ has
bounded clique-width if and only if the class of 2-connected induced subgraphs of graphs in~${\cal G}$
has bounded clique-width~\cite{BL02,LR04}.

\item \label{fact:subdiv}
For a class of graphs~${\cal G}$ of {\em bounded} maximum
degree, let~${\cal G}'$ be a class of graphs that is $(\infty,{\tt es})$-obtained from~${\cal G}$, where~${\tt es}$ is the edge subdivision operation.
Then~${\cal G}$ has bounded clique-width if and only if~${\cal G}'$ has
bounded clique-width~\cite{KLM09}.
\end{enumerate}
We also use a number of other elementary results on the clique-width of graphs.
The first two are well known and straightforward to check.

\begin{lemma}\label{lem:tree}
The clique-width of a forest is at most~$3$.
\end{lemma}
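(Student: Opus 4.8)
\emph{Proof proposal.} The plan is to reduce the statement to trees and then build a $3$-expression of a rooted tree in a bottom-up fashion. First I would observe that a forest is the disjoint union of its tree components, and since disjoint union ($\oplus$) is one of the four operations allowed in a clique-width expression, a collection of $3$-expressions for the components can simply be chained together into a single $3$-expression for the whole forest. So it suffices to show that every tree has clique-width at most~$3$.

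Next I would fix a tree~$T$, root it at an arbitrary vertex~$r$, and prove the following slightly strengthened claim by induction on $|V(T)|$: there is a $3$-expression, using only the labels $1,2,3$, that constructs~$T$ so that~$r$ ends up with label~$1$ and every other vertex ends up with label~$2$. The base case $|V(T)|=1$ is handled by the expression $1(r)$.

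For the inductive step, let $c_1,\dots,c_k$ be the children of~$r$ and let~$T_i$ be the subtree of~$T$ rooted at~$c_i$. By the induction hypothesis each~$T_i$ has a $3$-expression placing~$c_i$ at label~$1$ and all remaining vertices of~$T_i$ at label~$2$; applying the operation $\rho_{1\rightarrow 3}$ to this expression yields an expression~$\hat{t}_i$ that places~$c_i$ at label~$3$ and the rest of~$T_i$ at label~$2$. I would then attach the subtrees to~$r$ one at a time: set $s_0 = 1(r)$ and, for $1\le i\le k$,
\[
  s_i \;=\; \rho_{3\rightarrow 2}\bigl(\eta_{1,3}\bigl(s_{i-1}\oplus \hat{t}_i\bigr)\bigr).
\]
In the graph described by $s_{i-1}\oplus \hat{t}_i$, the vertex~$r$ is the only one with label~$1$ and~$c_i$ is the only one with label~$3$, so $\eta_{1,3}$ adds exactly the edge~$rc_i$ and no spurious edges; the subsequent $\rho_{3\rightarrow 2}$ restores the invariant that~$r$ is the unique vertex of label~$1$ and all other vertices have label~$2$. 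Thus $s_k$ is the desired $3$-expression for~$T$, which completes the induction and shows $\cw(T)\le 3$, and hence $\cw(F)\le 3$ for every forest~$F$.

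The only delicate point — the ``hard part'', modest as it is — is label economy: with just three labels there is no room for a fresh label per child, so the construction must recycle label~$3$ by relabelling it back to~$2$ with $\rho_{3\rightarrow 2}$ after each child subtree has been joined to the root. It is worth noting that the bound cannot be improved, since the tree~$P_4$ already has clique-width exactly~$3$.
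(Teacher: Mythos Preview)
Your argument is correct and is precisely the standard bottom-up construction for trees; there is nothing to fix. Note, however, that the paper does not actually prove this lemma: it merely states that the result is ``well known and straightforward to check,'' so there is no paper proof to compare against. Your write-up supplies exactly the kind of verification the paper omits.
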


\begin{lemma}\label{lem:atmost-2}
The clique-width of a graph with maximum degree at most~$2$ is at most~$4$.
\end{lemma}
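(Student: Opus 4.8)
The plan is to reduce the statement to the two basic connected graphs of maximum degree at most~$2$ and to treat them separately. First I would recall that a graph~$G$ with maximum degree at most~$2$ is a disjoint union of paths (allowing the trivial paths $P_1$ and $P_2$) and cycles, because every connected graph whose maximum degree is at most~$2$ is either a path or a cycle. Since the clique-width of a disjoint union of graphs is the maximum of the clique-widths of its components — one simply joins the individual expressions with the $\oplus$ operation, padding if necessary so that they draw labels from a common set — it suffices to show that a single path and a single cycle each have clique-width at most~$4$. For paths this is immediate: a path is a forest, so Lemma~\ref{lem:tree} already gives clique-width at most~$3$.

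The only real content is therefore to exhibit a $4$-expression for an arbitrary cycle $C_n$ with vertices $v_1,v_2,\ldots,v_n$ in cyclic order. The idea is to imitate the standard $3$-expression that builds a path one vertex at a time — keeping the current end-vertex on label~$3$, its predecessor on label~$2$ and all earlier vertices on label~$1$ — while reserving the fourth label exclusively for the start vertex $v_1$, which is never merged into the other classes. Concretely: create $v_1$ with label~$4$; add $v_2$ with label~$3$ and apply $\eta_{3,4}$ to create the edge $v_1v_2$; relabel $3\to 2$; and then for $i=3,\ldots,n$ in turn, relabel $2\to 1$, relabel $3\to 2$, add $v_i$ with label~$3$, and apply $\eta_{3,2}$ to create the edge $v_{i-1}v_i$. (The order of the two relabellings matters: $2\to 1$ must precede $3\to 2$, exactly as in the $3$-expression for $P_4$ displayed earlier.) After $v_n$ has been added, $v_1$ is the unique vertex on label~$4$ and $v_n$ is the unique vertex on label~$3$, so one final $\eta_{3,4}$ adds precisely the edge $v_nv_1$ and closes the cycle.

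To complete the argument I would verify that no spurious edges are ever created: at each application of $\eta_{3,2}$ the only vertices carrying labels~$3$ and~$2$ are the two endpoints of the intended edge, and likewise the two applications of $\eta_{3,4}$ each affect only one pair of vertices; hence the expression produces exactly $C_n$ and uses only four labels, so $\cw(C_n)\le 4$. Putting the path and cycle cases together with the disjoint-union observation yields $\cw(G)\le 4$ for every graph $G$ of maximum degree at most~$2$. I do not expect any genuine obstacle here; the only step needing a little care is the bookkeeping in the cycle construction, namely checking that the relabellings happen in the correct order and that each join operation touches only the two vertices it is meant to connect.
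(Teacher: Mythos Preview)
The paper does not actually prove this lemma; it merely states that the result is ``well known and straightforward to check.'' Your approach --- decompose into connected components, invoke Lemma~\ref{lem:tree} for paths, and give an explicit $4$-expression for cycles --- is exactly the standard one and is perfectly sound in outline.

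However, your explicit cycle construction contains a bookkeeping slip of precisely the kind you warned yourself about. After creating $v_2$ with label~$3$ and joining it to $v_1$, you immediately apply $\rho_{3\to 2}$, and \emph{then} the loop body for $i=3$ begins with $\rho_{2\to 1}$ followed by $\rho_{3\to 2}$. Tracing this: $v_2$ moves from label~$3$ to~$2$ (pre-loop), then from~$2$ to~$1$ (first step of the loop), and the subsequent $\rho_{3\to 2}$ does nothing since no vertex carries label~$3$. When you then add $v_3$ on label~$3$ and apply $\eta_{3,2}$, there is no vertex on label~$2$, so the edge $v_2v_3$ is never created. The fix is simply to delete the stray $\rho_{3\to 2}$ that you inserted between the creation of $v_2$ and the start of the loop; with that line removed, the loop invariant (current end-vertex on label~$3$, its predecessor on label~$2$, earlier vertices on label~$1$, and $v_1$ on label~$4$) is maintained correctly and the construction produces~$C_n$ as intended.
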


The following lemma tells us that if~${\cal G}$ is a hereditary graph class (i.e. a graph class closed under vertex deletion) then 
in order to determine whether~${\cal G}$ has bounded clique-width
we may restrict ourselves to the graphs in~${\cal G}$ that are prime.

\begin{lemma}[\cite{CO00}]\label{lem:prime}
Let $G$ be a graph and let~${\cal P}$ be the set of all induced subgraphs of~$G$ that are prime.
Then $\cw(G)=\max_{H \in {\cal P}}\cw(H)$.
\end{lemma}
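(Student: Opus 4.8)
The plan is to show the two inequalities $\cw(G) \geq \max_{H\in\mathcal{P}}\cw(H)$ and $\cw(G) \leq \max_{H\in\mathcal{P}}\cw(H)$ separately, where $\mathcal{P}$ is the set of prime induced subgraphs of $G$. The first inequality is immediate: each $H\in\mathcal{P}$ is an induced subgraph of $G$, and clique-width is monotone under taking induced subgraphs (Fact~\ref{fact:del-vert}, or more precisely the elementary observation that deleting a vertex does not increase clique-width), so $\cw(G)\geq\cw(H)$ for every such $H$. The substance of the lemma is the reverse inequality, and the natural tool is the modular decomposition tree of $G$.

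First I would recall the structure of modular decomposition: every graph $G$ on at least two vertices is exactly one of (a) disconnected, (b) the complement of a disconnected graph, or (c) a graph whose ``quotient'' under its maximal strong modules is a prime graph on at least four vertices (Gallai's theorem). One then builds $G$ recursively: the modular decomposition tree has leaves the vertices of $G$, and each internal node is labelled ``union'' (type (a)), ``join'' (type (b)), or ``prime''; in the last case the children correspond to the vertices of a prime induced subgraph of $G$, namely the quotient graph, which lies in $\mathcal{P}$. The proof proceeds by induction on $|V(G)|$. Let $k=\max_{H\in\mathcal{P}}\cw(H)$; note $k\geq 2$ whenever $G$ has at least two vertices (a single vertex or edge is not prime in the relevant sense here, and any prime graph on $\geq 4$ vertices has clique-width $\geq 2$; one should handle the tiny base cases $|V(G)|\leq 3$ directly, where $\cw(G)\leq 2$). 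For the inductive step, consider the root of the modular decomposition tree. If it is a union node with modules $M_1,\dots,M_t$, then by induction each $G[M_i]$ has a $k$-expression, and we take the disjoint union of these expressions (relabelling so that each uses a fixed label, or simply $\oplus$-ing the expressions directly), which uses at most $k\geq 2$ labels. If it is a join node, we do the same but additionally apply an $\eta_{i,j}$ operation after ensuring the vertices of $M_i$ and of $M_j$ carry distinct labels — with two labels available this is easy to arrange, and again at most $k$ labels suffice. The interesting case is a prime node: here the quotient $Q$ is a prime induced subgraph, so $\cw(Q)\leq k$; take an optimal $k$-expression of $Q$ and, in it, replace the atomic operation $i(v)$ creating each vertex $v$ of $Q$ (which represents the module $M_v$) by a $k$-expression of $G[M_v]$ in which every vertex has been relabelled to the single label $i$ (this relabelling costs nothing extra in the label count, since we are only using label $i$ plus whatever $\leq k$ labels the sub-expression of $G[M_v]$ already uses — and we may reuse labels freely once the module is fully built and ``collapsed'' to label $i$). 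Since $M_v$ is a module, the join operations $\eta_{i,j}$ in the expression for $Q$ correctly install exactly the edges of $G$ between distinct modules; internal edges of each $M_v$ are already present from its sub-expression. This yields a $k$-expression of $G$, completing the induction.

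The main obstacle — and the point that needs care rather than cleverness — is the ``collapsing'' step inside a prime node: one must verify that after constructing $G[M_v]$ by its own $k$-expression, all of its vertices can indeed be renamed to a common label $i$ using only $\rho$ operations and without exceeding $k$ labels, and that this renamed block can then be substituted for the leaf $i(v)$ in the expression for $Q$ so that all subsequent $\eta$ and $\rho$ operations of that expression behave identically to how they behaved on the single vertex $v$. This is true precisely because $M_v$ is a module: no operation in the $Q$-expression ever needs to distinguish two vertices of $M_v$, so giving them all label $i$ loses no information. I would state this substitution principle as the key claim and verify it by a short structural argument on expressions. One should also take care that $k\geq 2$ so that the union and join cases genuinely have a spare label available; the degenerate possibility $k<2$ only arises for $G$ with at most one vertex, which is the trivial base case. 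Everything else is bookkeeping on labels.
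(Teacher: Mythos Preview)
The paper does not give its own proof of this lemma; it is simply cited from Courcelle and Olariu~\cite{CO00}. Your argument via the modular decomposition tree is exactly the standard one underlying that reference, and the substitution step (replacing each leaf $i(v)$ in a $k$-expression of the prime quotient by a $k$-expression of the corresponding module, relabelled to~$i$) is the correct key observation.

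One small inaccuracy worth cleaning up: your claim that $k\geq 2$ whenever $|V(G)|\geq 2$ is false for edgeless graphs, since by the paper's definition both~$K_1$ and~$2K_1$ are prime and have clique-width~$1$, so $\overline{K_n}$ has $k=1$. This does no damage to the argument, however: the union case needs only one label, while a join or prime node at the root forces~$G$ to contain an edge, whence $K_2\in\mathcal{P}$ and $k\geq 2$ as you require. With that adjustment (and tidying the remark that ``a single vertex or edge is not prime in the relevant sense'', which contradicts the paper's definition), the induction goes through exactly as you outline.
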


\subsection{Known Results on $H$-free Chordal Graphs}\label{subsec:known-results}

To prove our results,
we need to use a number of known results. We present these results as lemmas in this subsection.
The first of these lemmas gives a classification for $H$-free graphs.

\begin{lemma}[\cite{DP15}]\label{l-p4}
Let~$H$ be a graph. The class of
$H$-free graphs has bounded clique-width if and only if~$H$ is an induced subgraph of~$P_4$.
\end{lemma}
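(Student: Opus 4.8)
Proof plan for Lemma~\ref{l-p4} (the classification of $H$-free graphs of bounded clique-width).

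First I would dispose of the ``if'' direction. Suppose $H \ssi P_4$. Then every $H$-free graph is in particular $P_4$-free, i.e.\ a cograph, and cographs are exactly the graphs of clique-width at most~$2$ (they are built by disjoint union and complementation/join from single vertices). So it suffices to observe that clique-width is monotone under taking induced subgraphs in the sense that a class closed downward under $\ssi$ that forbids a bigger graph forbids all larger ones; concretely, $H \ssi P_4$ implies the class of $H$-free graphs is contained in the class of $P_4$-free graphs, which has clique-width $\le 2$. Hence the class of $H$-free graphs has bounded clique-width.

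For the ``only if'' direction I would argue the contrapositive: if $H$ is \emph{not} an induced subgraph of $P_4$, then the class of $H$-free graphs has unbounded clique-width. The key idea is to exhibit, for each such $H$, a \emph{known} family of graphs of unbounded clique-width all of which are $H$-free. Split graphs (unbounded clique-width, Makowsky--Rotics~\cite{MR99}) and, dually, complements of split graphs work for ``dense'' obstructions, while complete bipartite graphs or grids/walls work for ``sparse'' ones. The plan is a short case analysis on the structure of $H$:
\begin{enumeratei}
\item If $H$ contains an induced $2P_2$ or $\overline{2P_2}=C_4$, then $H$ is neither an induced subgraph of a split graph nor of its complement in the relevant sense, and one uses split graphs (which are $(2P_2,C_4,C_5)$-free by Lemma~\ref{lem:split}) together with their complements to rule $H$ out---more precisely, split graphs have unbounded clique-width and are $H$-free whenever $H \not\ssi$ any split graph, which holds once $H$ has an induced $C_4$; symmetrically complements of split graphs handle an induced $2P_2$.
\item If $H$ is $(2P_2,C_4)$-free, then (by Lemma~\ref{lem:split} applied to $H$ and $\overline H$, noting $C_5 = \overline{C_5}$) the only remaining obstruction to $H$ or $\overline H$ being split is an induced $C_5$; the class of graphs of girth at least~$5$ and bounded degree (walls) has unbounded clique-width and is $C_5$-free and $2P_2$-free-containment issues are handled, so $C_5 \ssi H$ is also excluded.
\item The surviving case is that both $H$ and $\overline H$ are split graphs, i.e.\ $H$ is a split graph whose complement is also split. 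One then checks directly that the only such graphs that are not induced subgraphs of $P_4$ must contain one of $K_3$, $\overline{K_3}=3P_1$, ... as an induced subgraph, and each of these is excluded by a concrete unbounded-clique-width family: complete graphs are $3P_1$-free, edgeless graphs are $K_3$-free, complete bipartite graphs $K_{n,n}$ are $(K_3, 3P_1 \text{ suitably})$-handled, etc.
\end{enumeratei}

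The routine part is bookkeeping the handful of minimal graphs $H$ with $H \not\ssi P_4$ (up to taking complements these are few: $K_3$, $3P_1$, $2P_2$, $C_4$, $K_{1,3}$, the paw, the diamond, $C_5$, and the small supergraphs of $P_4$ such as $P_5$ and $\overline{P_1+P_3}$-type graphs), and matching each to a standard unbounded family that excludes it. The main obstacle, and the place where one must be careful, is ensuring the chosen unbounded family is genuinely $H$-free for \emph{every} $H$ in the relevant minimal list simultaneously---so the cleanest route is to invoke the two ``master'' unbounded families, split graphs and $\overline{\text{split graphs}}$, which by Lemma~\ref{lem:split} are $(C_4,C_5,2P_2)$-free and $(2P_2,\overline{C_5}, \overline{C_4})=(2P_2,C_5,2K_2)$-type-free respectively, and then only separately handle the graphs $H$ avoiding all of $\{2P_2, C_4, C_5\}$ and their complements, which forces $H$ and $\overline H$ to be split and a direct argument shows $H \ssi P_4$ in that case---a contradiction. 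Since this lemma is quoted from~\cite{DP15}, in the paper I would simply cite it; the above is the reconstruction of why it holds.
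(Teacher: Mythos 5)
The paper itself gives no proof of this lemma---it is imported verbatim from~\cite{DP15}---so there is nothing in-paper to compare against; what follows assesses your reconstruction on its own terms. The ``if'' direction and the overall architecture of the ``only if'' direction are right: reduce via the split-graph characterization (Lemma~\ref{lem:split}) to the case where $H$ is $(2P_2,C_4,C_5)$-free, i.e.\ a split graph (note that split graphs are closed under complementation, so ``complements of split graphs'' is not a second family, and split graphs alone already handle $C_5$, making your wall-based step~(ii) unnecessary), and then observe that a $(2P_2,C_4,C_5,K_3,3P_1)$-free graph is forced to be an induced subgraph of~$P_4$, so the only remaining obstructions to eliminate are $K_3$ and~$3P_1$.

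The genuine gap is in your case~(iii): the families you name to exclude $K_3$ and $3P_1$---complete graphs, edgeless graphs, complete bipartite graphs $K_{n,n}$---all have clique-width at most~$2$, so none of them can certify \emph{unboundedness}. The logic of the argument requires, for each minimal obstruction $F\in\{K_3,3P_1\}$, a class of \emph{unbounded} clique-width all of whose members are $F$-free; ``complete graphs are $3P_1$-free'' is true but proves nothing about the class of $3P_1$-free graphs. The correct witnesses are a triangle-free class of unbounded clique-width---e.g.\ walls/grids, or the bipartite permutation graphs of Lemma~\ref{l-bippermut}---for $K_3$, and the complements of such a class (unbounded by Fact~\ref{fact:comp}) for~$3P_1$. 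With that substitution, and the finite check that every $(2P_2,C_4,C_5,K_3,3P_1)$-free graph is one of $\emptyset,P_1,2P_1,P_2,P_1+P_2,P_3,P_4$, the proof closes; as written, case~(iii) does not.
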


We will use the following characterization of graphs~$H$ for which the class of $H$-free bipartite graphs has bounded clique-width
(which is similar to a characterization of Lozin and Volz~\cite{LV08} for a different variant of the notion of $H$-freeness in bipartite graphs, see~\cite{DP14} for an explanation of the difference).

\begin{lemma}[\cite{DP14}]\label{lem:bipartite}
Let~$H$ be a graph. The class of $H$-free bipartite graphs has bounded
clique-width if and only if one of the following cases holds:
\begin{itemize}
\item $H=sP_1$ for some $s\geq 1$;
\item $H\ssi K_{1,3}+3P_1$;
\item $H\ssi K_{1,3}+P_2$;
\item $H\ssi P_1+S_{1,1,3}$;
\item $H\ssi S_{1,2,3}$.
\end{itemize}
\end{lemma}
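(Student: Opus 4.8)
The plan is to prove the two directions separately, exploiting that the class of $H$-free bipartite graphs is hereditary: by Lemma~\ref{lem:prime} it suffices to bound the clique-width of its \emph{prime} members, and throughout I would use Facts~1--5, most importantly deleting boundedly many vertices (Fact~1), performing boundedly many bipartite complementations (Fact~3; note that a bipartite complementation with respect to subsets of the two sides keeps a bipartite graph bipartite), and, for subcubic subclasses, contracting or inserting edge subdivisions (Fact~5).

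\textbf{Sufficiency.} I would establish boundedness for each of the five maximal graphs $sP_1$, $K_{1,3}+3P_1$, $K_{1,3}+P_2$, $P_1+S_{1,1,3}$, $S_{1,2,3}$. For $H=sP_1$ an $H$-free bipartite graph has independence number at most $s-1$ and hence at most $2(s-1)$ vertices, so the class is finite. For $H=K_{1,3}+3P_1$ and $H=K_{1,3}+P_2$ I would use that in a bipartite graph every vertex of degree at least $3$ is the centre of an induced $K_{1,3}$; when $H=K_{1,3}$ this already forces maximum degree at most $2$, hence clique-width at most $4$ by Lemma~\ref{lem:atmost-2}, and in the two larger cases a Ramsey-type argument shows that any induced $K_{1,3}$ must almost-dominate the vertex set (respectively the edge set), so that after deleting a bounded set and performing one bipartite complementation one reaches a class of bounded clique-width; Facts~1 and~3 then return the bound to the original class. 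The cases $H=P_1+S_{1,1,3}$ and $H=S_{1,2,3}$ are the substantial ones: I would take a prime bipartite graph $G$ in the class, fix a longest induced path $P$ in $G$ (treating separately the easy case in which $G$ has no long induced path), and use $S_{1,2,3}$-freeness to show that every vertex off $P$ has a short, ``interval-like'' neighbourhood along $P$; combined with primeness this should identify $G$ — up to a bounded-size modification and a bipartite complementation — with an induced subgraph of a structured ``linear'' bipartite graph of bounded clique-width, and Facts~1,~3 and~5 conclude.

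\textbf{Necessity.} Since the list is closed under taking induced subgraphs, for the unboundedness direction it suffices to exhibit, for each \emph{minimal} graph $H$ that is not an induced subgraph of any of the five maximal graphs, a family of $H$-free bipartite graphs of unbounded clique-width; I would split on the shape of $H$. If $H$ contains a cycle, the family of bipartite subcubic graphs of girth exceeding $|V(H)|$ (for instance suitably subdivided walls) is $H$-free and has unbounded clique-width. If $H$ is a forest with a vertex of degree at least $4$, then the family of all bipartite subcubic graphs of unbounded clique-width (again subdivided walls) is $H$-free. This leaves finitely many forest obstructions of maximum degree at most $3$ — in particular $2K_{1,3}$ and the ``too-long-legged'' augmentations of subdivided claws — and for each of these I would adapt a standard grid/wall-type construction into a structured bipartite family of unbounded clique-width that retains at most one branch vertex of each ``long'' type.

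\textbf{Main obstacle.} The routine parts are $H=sP_1$ and the reductions to subcubic or claw-free bipartite graphs. The real difficulty is the sufficiency for $H\ssi S_{1,2,3}$ and $H\ssi P_1+S_{1,1,3}$ — squeezing from $S_{1,2,3}$-freeness a description of how off-path vertices attach to a longest induced path that is precise enough to reconstruct the whole graph — and, dually, on the necessity side, designing the exact unbounded bipartite families that avoid each residual forest obstruction while staying bipartite and subcubic.
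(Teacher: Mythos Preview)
The paper does not prove this lemma at all: it is quoted verbatim from~\cite{DP14} (Dabrowski and Paulusma) as a known classification result, with no accompanying proof or sketch. There is therefore nothing in the paper to compare your proposal against.

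That said, your outline is broadly the standard shape of such a classification argument (finite case analysis for sufficiency, wall-type constructions for necessity), but note that the actual proof in~\cite{DP14} relies on earlier structural results of Lozin and others on bipartite graphs excluding specific small forests, rather than deriving the $S_{1,2,3}$ and $P_1+S_{1,1,3}$ cases from scratch via a longest-path analysis as you suggest. If you genuinely want to reprove the lemma, the sufficiency for $H\ssi S_{1,2,3}$ is the hard step and your ``interval-like neighbourhood along a longest path'' heuristic is not enough on its own: primeness plus $S_{1,2,3}$-freeness does not directly force such clean interval structure, and the existing proofs go through a more delicate module/twin analysis specific to bipartite graphs.
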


For a graph~$G$, let~$\tw(G)$ denote the treewidth of~$G$ (see, for example, Diestel~\cite{Di12} for a definition of this notion).
Corneil and Rotics~\cite{CR05} showed that $\cw(G) \leq 3\times 2^{\tw(G)-1}$ for every graph~$G$.
Because the treewidth of a chordal graph is equal to the size of a maximum clique minus~$1$ (see e.g.~\cite{Bo93}), this result leads to the following well-known lemma.

\begin{lemma}\label{lem:clique-chordal}
The class of $K_r$-free chordal graphs has bounded clique-width for all $r\geq\nobreak 1$.
\end{lemma}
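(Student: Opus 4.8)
The plan is to combine the two facts recalled immediately before the statement: the clique-width--treewidth inequality $\cw(G) \leq 3\times 2^{\tw(G)-1}$ of Corneil and Rotics, and the fact that the treewidth of a chordal graph equals the size of a maximum clique minus~$1$. So the proof is essentially a one-line deduction, and the task is mainly to assemble these ingredients and deal with the small degenerate cases.

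Concretely, I would first let $G$ be an arbitrary $K_r$-free chordal graph and write $\omega(G)$ for the size of a maximum clique of~$G$. Since $G$ is $K_r$-free we have $\omega(G)\leq r-1$, and since $G$ is chordal we have $\tw(G)=\omega(G)-1\leq r-2$. If $r\leq 2$ then $G$ has no edges (for $r=2$) or no vertices (for $r=1$), so $\cw(G)\leq 1$; if $r=3$ then $G$ has treewidth at most~$1$, hence is a forest, so $\cw(G)\leq 3$ by Lemma~\ref{lem:tree}. For $r\geq 4$ I would simply invoke the Corneil--Rotics bound with $\tw(G)\leq r-2$ to obtain $\cw(G)\leq 3\times 2^{r-3}$. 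In all cases the clique-width of $G$ is at most a constant $c_r:=\max\{3,\,3\times 2^{r-3}\}$ depending only on~$r$, so the class of $K_r$-free chordal graphs has bounded clique-width.

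I do not anticipate any real obstacle here: no primeness reduction, structural characterisation, or explicit $k$-expression is needed. The only point requiring a little care is the behaviour of the Corneil--Rotics bound when $\tw(G)\in\{0,1\}$ (where the exponent $\tw(G)-1$ is non-positive), which is why I would treat the cases $r\leq 3$ separately using the trivial bound for edgeless graphs and Lemma~\ref{lem:tree} for forests, rather than plugging these small values directly into the formula.
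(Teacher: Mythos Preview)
Your proposal is correct and follows exactly the approach the paper intends: the paper does not give a formal proof but simply states the Corneil--Rotics inequality and the fact that $\tw(G)=\omega(G)-1$ for chordal graphs immediately before the lemma, remarking that these ``lead to'' it. Your write-up just makes this one-line deduction explicit (with a harmless extra case split for small~$r$), so there is nothing to add or change.
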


The {\em bull} is the graph obtained from the cycle~$abca$ after adding two new vertices~$d$ and~$e$ with edges $ad,be$ (see also \figurename~\ref{fig:chordal-bounded}).
In~\cite{BLM04}, Brandst{\"a}dt, Le and Mosca erroneously claimed that the
clique-width of $\overline{S_{1,1,2}}$-free chordal graphs and of bull-free chordal graphs is
unbounded. 
Using a general result of De Simone~\cite{DeSimone93}, 
Le~\cite{Le03} proved that every bull-free chordal graph has clique-width at most~$8$.
Using a result of Olariu~\cite{Olariu91} we can prove the following

\begin{lemma}\label{lem:bull-chordal}
Every bull-free chordal graph has clique-width at most~$3$.
\end{lemma}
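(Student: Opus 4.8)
The plan is to combine Lemma~\ref{lem:prime} with Olariu's structural result (Lemma~\ref{lem:bull-house-prime}). First I would observe that the house contains an induced $C_4$, so every chordal graph is house-free; hence a bull-free chordal graph $G$ is in fact $(\bull,\house)$-free, and since both properties are hereditary, so is every one of its induced subgraphs. By Lemma~\ref{lem:prime} it therefore suffices to show that every \emph{prime} $(\bull,\house)$-free chordal graph $H$ has clique-width at most~$3$. By Lemma~\ref{lem:bull-house-prime}, such an $H$ is either $K_3$-free or the complement of a $2P_2$-free bipartite graph, and I would treat these two cases separately.

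If $H$ is $K_3$-free and chordal, then $H$ has no triangle and no induced cycle on at least four vertices; since every cycle in a chordal graph contains a triangle (repeatedly replacing a cycle by a shorter one using a chord), $H$ is acyclic, i.e.\ a forest, and Lemma~\ref{lem:tree} gives $\cw(H)\leq 3$.

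The main case is when $H=\overline{B}$ for some $2P_2$-free bipartite graph $B$ with parts $X$ and $Y$. A short argument shows that the neighbourhoods $\{N_B(x):x\in X\}$ form a chain under inclusion (if two of them were incomparable, a private neighbour of each, together with the two vertices, would induce a $2P_2$ in $B$); ordering $X=(x_1,\dots,x_p)$ and $Y=(y_1,\dots,y_q)$ accordingly, we may write $N_B(x_i)=\{y_1,\dots,y_{a_i}\}$ with $0\leq a_1\leq\cdots\leq a_p\leq q$, so that in $H$ both $X$ and $Y$ are cliques and $x_iy_j\in E(H)$ if and only if $j>a_i$. I would then construct $H$ with three labels by processing the vertices in the interleaved order $y_q,\ [x_i:a_i=q-1],\ y_{q-1},\ [x_i:a_i=q-2],\ \dots,\ y_1,\ [x_i:a_i=0]$, adding any $x_i$ with $a_i=q$ first, while maintaining the invariant that every already-created $Y$-vertex carries label~$1$, every already-created $X$-vertex carries label~$2$, and label~$3$ is used only to introduce the next vertex. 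When a new $y_j$ is introduced with label~$3$, it is adjacent to all earlier $Y$-vertices and, by the choice of order, to none of the earlier $X$-vertices, so one applies $\eta_{3,1}$ and then $\rho_{3\to 1}$; when a new $x_i$ is introduced with label~$3$, it is adjacent to all earlier $Y$-vertices and all earlier $X$-vertices, so one applies $\eta_{3,1}$, then $\eta_{3,2}$, then $\rho_{3\to 2}$. This yields a $3$-expression of $H$, whence $\cw(H)\leq 3$.

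Combining the two cases with Lemma~\ref{lem:prime} gives $\cw(G)\leq 3$ for every bull-free chordal graph $G$. The only real work is the co-bipartite case; the point to be careful about is that the interleaving of $X$ and $Y$ in the construction is genuinely consistent with the chain structure of $B$, so that at each step the vertex being added is joined to \emph{all} vertices of one (or both) of the two classes created so far and three labels indeed suffice — but this is immediate from the nesting $N_B(x_1)\subseteq\cdots\subseteq N_B(x_p)$.
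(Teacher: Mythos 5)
Your proposal is correct and follows essentially the same route as the paper: reduce to prime graphs via Lemma~\ref{lem:prime}, observe that chordal implies house-free, invoke Olariu's dichotomy (Lemma~\ref{lem:bull-house-prime}), dispose of the $K_3$-free case via forests, and build a $3$-expression in the co-bipartite case by interleaving the two cliques in nested order. The only (harmless) difference is that the paper cites the fact that the prime graphs in the second case are exactly the $k$-webs and constructs those inductively, whereas you re-derive the nested-neighbourhood structure directly from $2P_2$-freeness, so your construction does not use primeness in that case and applies to all complements of $2P_2$-free bipartite graphs.
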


\begin{proof}
Let~$G$ be a bull-free chordal graph.
By Lemma~\ref{lem:prime}, we may assume that~$G$ is prime.
Note that the house contains an induced~$C_4$, so~$G$ is house-free.
Then, by Lemma~\ref{lem:bull-house-prime},~$G$ is either $K_3$-free or the complement of a $2P_2$-free bipartite graph.
Every $K_3$-free chordal graph is a forest, so by Lemma~\ref{lem:tree} it has clique-width at
most~$3$. We may therefore assume that~$G$ is a prime graph that is the complement of a $2P_2$-free bipartite graph.
Such graphs are known as $k$-webs in~\cite{Olariu91}, where $k \geq 2$.
A $k$-web consists of two cliques $X=\{x_1,\ldots,x_k\}$ and $Y=\{y_1,\ldots,y_k\}$
such that for $i,j \in \{1,\ldots,k\}$ the vertex~$x_i$ is adjacent to~$y_j$ if
and only if $i<j$.
We will show how to use the operations of clique-width constructions to inductively build a copy of a $k$-web in which every vertex in the set~$X$ is labelled~$1$ and every vertex in the set~$Y$ is labelled~$2$.
Consider a $k$-web labelled as described above for some $k \geq 0$
(if $k=0$ this is the empty graph).
Add a vertex labelled~$3$ to the graph, join it to every vertex of label~$1$ and to every vertex of label~$2$, then relabel it to have label~$1$.
Next, add a vertex labelled~$3$ to the graph, join it to every vertex of label~$2$, then relabel it to have label~$2$.
This is precisely the $(k+1)$-web, also labelled as described above.
We conclude that every $k$-web can be constructed using at most~$3$ labels, so~$G$ has clique-width at most~$3$.\qed
\end{proof}

Since~$P_4$ is a bull-free chordal graph and has clique-width~$3$, the bound in the above lemma is tight.

Next we recall the aforementioned results of Brandst{\"a}dt et al.

\begin{lemma}[\cite{BLM04}]\label{lem:cogem-chordal}
Every $P_1+\nobreak P_4$-free chordal graph has clique-width at most~$8$.
\end{lemma}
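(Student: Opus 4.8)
The plan is to reduce to prime graphs, describe their structure, and then read off the clique-width.

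By Lemma~\ref{lem:prime} it suffices to bound the clique-width of the \emph{prime} graphs in the class, so let~$G$ be a prime $(P_1+\nobreak P_4)$-free chordal graph. If~$G$ is disconnected then no component contains an induced~$P_4$ (such a~$P_4$ together with any vertex of another component would induce a $P_1+\nobreak P_4$), so~$G$ is a cograph and $\cw(G)\le 2$; hence we may assume~$G$ is connected. The key observation is that \emph{for every vertex~$v$ the graph $G-N[v]$ is $P_4$-free}, since an induced~$P_4$ in $G-N[v]$ would, with~$v$, induce a $P_1+\nobreak P_4$; in particular $G-N[v]$ is a cograph and has clique-width at most~$2$.

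Next I would split off a clique. By Lemma~\ref{l-simplicial},~$G$ has a simplicial vertex~$v$; set $K:=N[v]$ and $A:=V(G)\setminus K$. Then~$K$ is a clique and, by the observation above, $G[A]$ is a cograph. The whole difficulty is to control the edges between~$K$ and~$A$: a clique and a cograph joined along an arbitrary bipartite graph can have unbounded clique-width (split graphs already do~\cite{MR99}), so here both chordality and $(P_1+\nobreak P_4)$-freeness must be used. The natural line of attack is to analyse, for $a,a'\in A$, the relation between~$N_K(a)$ and~$N_K(a')$: if $a\not\sim a'$ and these sets are incomparable, then for $k\in N_K(a)\setminus N_K(a')$ and $k'\in N_K(a')\setminus N_K(a)$ the set $\{a,k,k',a'\}$ induces a~$P_4$; applying $(P_1+\nobreak P_4)$-freeness to this~$P_4$ (every other vertex has a neighbour in it) together with chordality (no induced~$C_4$, which constrains how a vertex of~$A$ may see two vertices of~$K$) should force the neighbourhoods $\{N_K(a):a\in A\}$, and the adjacencies inside~$A$, into a pattern with only boundedly many types; primality then removes any remaining freedom. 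Equivalently, one aims to partition~$A$ into a bounded number of modules of~$G$ whose neighbourhoods in~$K$ are linearly ordered by inclusion. Once such a bounded-type description is available, Facts~\ref{fact:del-vert}--\ref{fact:bip} and Lemma~\ref{lem:tree} let one assemble a $c$-expression for~$G$, and tracking the labels carefully gives $\cw(G)\le 8$.

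I expect the structural analysis of the $K$--$A$ attachment to be the main obstacle; it amounts, in effect, to showing that every prime connected $(P_1+\nobreak P_4)$-free chordal graph is either small or has a dominating clique (or a dominating induced~$P_3$) along which the remainder of the graph hangs in an ``interval-like'' fashion. Everything after that --- the label bookkeeping that produces the constant~$8$ --- is routine.
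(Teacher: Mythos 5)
Your setup is sound as far as it goes: reducing to prime graphs via Lemma~\ref{lem:prime}, observing that $G-N[v]$ is $P_4$-free (hence a cograph) for every vertex~$v$, and isolating the clique $K=N[v]$ around a simplicial vertex with cograph remainder $A=V(G)\setminus K$ are all correct first steps, and your incomparability argument correctly produces an induced $P_4$ from two non-adjacent vertices of~$A$ with incomparable neighbourhoods in~$K$. But at exactly the point where the lemma's content begins, the proposal stops proving and starts predicting: ``should force the neighbourhoods \dots into a pattern with only boundedly many types,'' ``primality then removes any remaining freedom,'' ``the label bookkeeping that produces the constant~$8$ \dots is routine.'' None of this is established. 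The assertion that a clique joined to a cograph under these constraints has bounded clique-width is precisely the theorem of~\cite{BLM04}; you have restated it as a hope, not derived it. Without an explicit structure theorem for the $K$--$A$ attachment (and for the edges inside~$A$ interacting with it), there is no way to write down a $c$-expression, let alone certify $c=8$ --- a constant that cannot emerge from ``routine bookkeeping'' on a structure you have not described. Note also that your reformulation ``partition~$A$ into a bounded number of modules of~$G$'' is self-defeating: in a prime graph every module is trivial, so such a partition would force $|A|$ to be bounded, which is certainly not what happens; you presumably mean equivalence classes under equality of neighbourhood in~$K$, but that is a different (and weaker) statement that still needs proof.

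For what it is worth, the paper does not prove this lemma at all: it is imported verbatim from Brandst\"adt, Le and Mosca~\cite{BLM04}, where the bound of~$8$ comes from a genuine structural decomposition of prime $(P_1+P_4)$-free chordal graphs, not from the simplicial-vertex split you propose. So the honest status of your write-up is a plausible plan whose central step --- the one the cited authors needed a paper to carry out --- is missing.
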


\begin{lemma}[\cite{BLM04}]\label{lem:gem-chordal}
Every $\overline{P_1+P_4}$-free chordal graph has clique-width at most~$3$.
\end{lemma}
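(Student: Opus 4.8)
The plan is to show that the two hypotheses together force $G$ to be \emph{distance-hereditary}, and then to invoke the result of Golumbic and Rotics~\cite{GR99b} (already quoted in the introduction) that every distance-hereditary graph has clique-width at most~$3$. Concretely, I would use the Bandelt--Mulder characterization: a graph is distance-hereditary if and only if it contains no induced subgraph isomorphic to the gem ($\overline{P_1+P_4}$), the $\house$, the $\domino$, or a hole (an induced cycle on at least five vertices).

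So let $G$ be a $\overline{P_1+P_4}$-free chordal graph. Being chordal, $G$ is $(C_4,C_5,\dots)$-free, so it contains no hole. The $\house$ contains an induced $C_4$ (immediate from \figurename~\ref{fig:bull-and-house}), and the $\domino$, being two $4$-cycles identified along an edge, likewise contains an induced $C_4$; hence a chordal graph contains neither of them. Finally $G$ is gem-free by assumption. Thus $G$ avoids all four forbidden configurations, so it is distance-hereditary and $\cw(G)\le 3$. (This bound is best possible, since $P_4$ is a gem-free chordal graph of clique-width~$3$.)

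I do not expect any real obstacle: the whole argument is a citation of the forbidden-subgraph description of distance-hereditary graphs together with the trivial observations that the $\house$ and the $\domino$ contain induced $4$-cycles. If one preferred to avoid quoting that characterization, a self-contained alternative would be to argue by induction on $|V(G)|$, repeatedly deleting either a vertex of degree~$1$ or one of a pair of twins (the existence of such a vertex in any such $G$ can be established directly, e.g.\ starting from a simplicial vertex via Lemma~\ref{l-simplicial} and using gem-freeness to restrict how vertices attach to a clique), and then reading a $3$-expression off this pruning sequence; but the route through distance-hereditary graphs is much shorter and is the one I would take.
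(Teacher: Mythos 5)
Your proposal is correct and follows essentially the same route as the paper: Lemma~\ref{lem:gem-chordal} is quoted from Brandst\"adt, Le and Mosca, whose argument (as described in the paper's introduction) is precisely the observation that $\overline{P_1+P_4}$-free chordal graphs are distance-hereditary, combined with the Golumbic--Rotics bound of~$3$ on the clique-width of distance-hereditary graphs. Your verification via the Bandelt--Mulder forbidden-subgraph characterization (noting that the house and the domino each contain an induced $C_4$ and hence cannot occur in a chordal graph) correctly fills in the details of that observation.
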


\begin{lemma}[\cite{BELL06}]\label{lem:4p1-chordal}
The class of $4P_1$-free chordal graphs has unbounded clique-width.
\end{lemma}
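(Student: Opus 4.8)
The plan is to produce an explicit infinite family of $4P_1$-free chordal graphs whose clique-width is unbounded; since we only need an unboundedness statement, no passage to prime graphs is required. Note first that being $4P_1$-free is the same as having independence number at most~$3$, and that a chordal graph, being perfect, with $\alpha\le 3$ has $3$-colourable complement, so its vertex set is a union of at most three cliques. This shape will guide the construction.

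For a bipartite graph~$H$ with parts $\{p_1,\dots,p_n\}$ and $\{q_1,\dots,q_n\}$, let $G_H$ be the graph on $A\cup B\cup C$ in which $A=\{a_1,\dots,a_n\}$, $C=\{c_1,\dots,c_n\}$ and $B=\{b_{ij}:p_iq_j\in E(H)\}$ are cliques, there are no edges between~$A$ and~$C$, and each $b_{ij}$ is adjacent to exactly $a_1,\dots,a_i$ and $c_1,\dots,c_j$. Any independent set meets each of $A$, $B$, $C$ in at most one vertex, so $\alpha(G_H)\le 3$, i.e. $G_H$ is $4P_1$-free. (If useful one can enlarge the family by allowing a vertex $b_{ij}$ for every $(i,j)$ in a prescribed $S\subseteq\{0,\dots,n\}^2$; the case $S=\{0,\dots,n\}^2$ gives a single canonical graph $\Gamma_n$.)

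Next I would check that $G_H$ is chordal. Between~$A$ and~$B$ the neighbourhoods $N_A(b_{ij})=\{a_1,\dots,a_i\}$ are nested, so this bipartite graph is $2P_2$-free; likewise for $B$ and~$C$; and $A$ is anticomplete to~$C$. Consequently $G_H$ has no induced cycle of length at least~$4$: such a cycle meets each clique in at most two consecutive vertices, hence has length at most~$6$; any cycle using a vertex of~$A$ and a vertex of~$C$ must route through~$B$, which contributes at most one of its edges to the cycle; and a short case analysis on the at most six vertices, using the nestedness of the $A$--$B$ and $B$--$C$ adjacencies and the absence of $A$--$C$ edges, eliminates every remaining configuration of length $4$, $5$ and~$6$. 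Together with the previous paragraph this shows each $G_H$ is a $4P_1$-free chordal graph.

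It remains to show that, for a suitable choice of the graphs~$H$, the class $\{G_H\}$ has unbounded clique-width, and this is the heart of the matter. Applying Fact~\ref{fact:comp} three times, performing a subgraph complementation inside each of $A$, $B$ and~$C$, turns $G_H$ into the bipartite graph $G_H^\ast$ with parts~$B$ and $A\cup C$ and the same cross-clique edges; hence $\{G_H\}$ has bounded clique-width if and only if $\{G_H^\ast\}$ does. One then selects $H=H_n$ along a sequence for which $\{G_{H_n}^\ast\}$ has unbounded clique-width --- it suffices, for instance, that across every balanced bipartition of the $B$-side the induced bipartite ``cut'' matrix has rank growing with~$n$ --- and invokes the standard incompressibility/rank lower-bound technique used for split and proper interval graphs~\cite{GR99b,MR99}: in any $k$-expression for $G_{H_n}^\ast$ there is a stage at which $\Omega(n)$ vertices of~$B$ with pairwise distinct neighbourhoods on one side must share a label yet still need to be joined to different vertices later, forcing $k\to\infty$. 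I expect precisely this final step --- choosing the sequence $H_n$ and carrying out the rank argument compatibly with the rather rigid nested structure forced by chordality --- to be the main obstacle, whereas the chordality and independence-number checks are routine. This programme is carried out in~\cite{BELL06}.
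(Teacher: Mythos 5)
The paper does not prove Lemma~\ref{lem:4p1-chordal} at all: it is imported verbatim from~\cite{BELL06}, so the only thing your write-up can be measured against is whether it constitutes a self-contained proof. It does not. Your preliminary observations are correct (a chordal graph is perfect, so $\alpha\le 3$ forces a partition into three cliques; the three-clique graph $G_H$ with nested adjacencies $N_A(b_{ij})=\{a_1,\dots,a_i\}$, $N_C(b_{ij})=\{c_1,\dots,c_j\}$ is indeed $4P_1$-free and chordal, and this is essentially the construction used in~\cite{BELL06}; the reduction to the bipartite graph $G_H^\ast$ via three applications of Fact~\ref{fact:comp} is legitimate). But the entire content of the lemma is the lower bound, and that is exactly the step you do not carry out: you neither specify the sequence $H_n$, nor verify any rank condition for it, nor prove that such a condition forces unbounded clique-width. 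Your closing sentence, ``This programme is carried out in~\cite{BELL06}'', concedes that the proof ultimately rests on the same citation the paper uses, so as a blind proof it is incomplete precisely where completeness was required.

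Two further cautions about the sketched final step. First, the stated sufficient condition --- that ``across every balanced bipartition of the $B$-side the induced bipartite cut matrix has rank growing with $n$'' --- is not a valid criterion as written: clique-width lower bounds via rank-width require controlling the minimum cut-rank over balanced bipartitions of the \emph{whole} vertex set, not merely of one side, and one must also rule out unbalanced separations that isolate the high-rank structure. Second, the rigidity you yourself point out is a real obstacle, not a formality: the bipartite graphs between $B$ and $A$ and between $B$ and $C$ are each chain graphs (hence of clique-width at most~$3$), and the only freedom left is the choice of which pairs $(i,j)$ index vertices of $B$; showing that superimposing two chain graphs in this way produces unbounded clique-width is a genuine argument (comparable in difficulty to the Golumbic--Rotics and Makowsky--Rotics lower bounds you cite), and nothing in the proposal substitutes for it.
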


Recall that Golumbic and Rotics~\cite{GR99b} proved that the class of proper interval graphs has unbounded clique-width.
Such graphs are well-known to be $K_{1,3}$-free and
chordal~\cite{Ro69}.

\begin{lemma}\label{lem:claw-chordal}
The class of $K_{1,3}$-free chordal graphs has unbounded clique-width.
\end{lemma}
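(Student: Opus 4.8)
The plan is to use the fact, stated just above the lemma, that the class of proper interval graphs has unbounded clique-width (Golumbic and Rotics~\cite{GR99b}), together with the classical observation of Roberts~\cite{Ro69} that every proper interval graph is both $K_{1,3}$-free and chordal. First I would recall that a proper interval graph has a representation by intervals on the real line no one of which properly contains another; from such a representation one reads off directly that the graph is chordal (indeed every interval graph is chordal, since interval graphs are exactly the chordal graphs with no asteroidal triple) and that it contains no induced $K_{1,3}$ (a vertex $v$ with three pairwise non-adjacent neighbours $a,b,c$ would force one of the three intervals $a,b,c$ to be contained in another, contradicting properness). Hence every proper interval graph lies in the class of $K_{1,3}$-free chordal graphs.

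Consequently the class of $K_{1,3}$-free chordal graphs contains a subclass of unbounded clique-width, so it has unbounded clique-width itself; this is the entire argument. The only mild subtlety — and the step I would treat with the most care, though it is standard — is justifying the two structural properties of proper interval graphs (chordality and $K_{1,3}$-freeness); both are folklore and are exactly what is cited to~\cite{Ro69} in the sentence preceding the lemma, so in the write-up I would simply invoke that reference rather than reproving the interval-representation argument. No obstacle of real substance is expected here: the lemma is a one-line consequence of a known unboundedness result for a known subclass.
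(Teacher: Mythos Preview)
Your proposal is correct and matches the paper's own justification exactly: the paper does not give a formal proof but simply notes, in the sentence preceding the lemma, that proper interval graphs have unbounded clique-width~\cite{GR99b} and are $K_{1,3}$-free chordal graphs~\cite{Ro69}. There is nothing to add.
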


The next lemma is obtained from combining Lemma~\ref{lem:split}
with the aforementioned result of Makowsky and Rotics~\cite{MR99}, who showed that the class of split graphs has unbounded clique-width.

\begin{lemma}\label{lem:split-chordal}
The class of $(C_4,C_5,2P_2)$-free graphs (or equivalently split graphs) has
unbounded clique-width.
\end{lemma}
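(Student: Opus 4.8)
The plan is to obtain this statement purely as a combination of two results already in hand. By Lemma~\ref{lem:split} (the characterisation of F\"oldes and Hammer), a graph is split if and only if it is $(C_4,C_5,2P_2)$-free; hence the class of $(C_4,C_5,2P_2)$-free graphs coincides \emph{exactly} with the class of split graphs. It then suffices to quote the result of Makowsky and Rotics~\cite{MR99} that the class of split graphs has unbounded clique-width, which immediately yields the lemma. There is no real obstacle here: the statement is essentially bookkeeping, included so that, later in the paper, the unboundedness side of the classification of $H$-free chordal graphs can be phrased uniformly in terms of forbidden induced subgraphs rather than mixing structural and forbidden-subgraph descriptions.

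If one preferred a more self-contained justification instead of quoting~\cite{MR99} directly, a natural route would be as follows. Start from the $n \times n$ grids, which are bipartite and known to have clique-width tending to infinity with~$n$ (see~\cite{GR99b}). Fix the bipartition $(A,B)$ of such a grid into its two colour classes and apply a subgraph complementation to the part induced by~$A$, turning~$A$ into a clique while leaving~$B$ independent; the graphs so obtained are split graphs. Since subgraph complementation preserves boundedness of clique-width (Fact~\ref{fact:comp}), this family of split graphs also has unbounded clique-width, and the lemma follows exactly as before via Lemma~\ref{lem:split}. Either way, the content of the lemma is entirely a consequence of previously stated results, so the ``proof'' will consist of little more than pointing to Lemma~\ref{lem:split} and to the Makowsky--Rotics construction.
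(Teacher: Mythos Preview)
Your primary argument is exactly what the paper does: it states the lemma as an immediate combination of Lemma~\ref{lem:split} (F\"oldes--Hammer) with the Makowsky--Rotics result~\cite{MR99}, without any further proof. Your optional self-contained route via grids and a subgraph complementation is also correct and is a pleasant alternative, but the paper does not include it.
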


We note that Lemma~\ref{lem:split-chordal} also follows from a result of Korpelainen, Lozin and Mayhill~\cite{KLM14},
who proved that the class of split permutation graphs has unbounded clique-width.
Moreover, Lemma~\ref{lem:split-chordal} implies that the class of $H$-free chordal graphs has unbounded clique-width for $H\in \{C_4,C_5,2P_2\}$.

\begin{figure}
\begin{center}
\begin{tabular}{ccccccc}
\begin{minipage}{0.15\textwidth}
\centering
\scalebox{0.7}{
{\begin{tikzpicture}[scale=1]
\GraphInit[vstyle=Simple]
\SetVertexSimple[MinSize=6pt]
\Vertex[x=0,y=1]{a}
\Vertex[x=0,y=0]{b}
\Vertex[x=0.5,y=0]{c}
\Vertex[x=-0.5,y=0]{d}
\Edges(c,a,b)
\Edges(a,d)
\end{tikzpicture}}}
\end{minipage}
&
\begin{minipage}{0.15\textwidth}
\centering
\scalebox{0.7}{
{\begin{tikzpicture}[scale=1,rotate=45]
\GraphInit[vstyle=Simple]
\SetVertexSimple[MinSize=6pt]
\Vertices{circle}{a,b,c,d}
\end{tikzpicture}}}
\end{minipage}\\
& \\
$K_{1,3}$ & $4P_1$\\
\end{tabular}
\end{center}
\caption{Graphs~$H$ for which the class of $H$-free chordal graphs was previously known to have unbounded clique-width.}\label{fig:old-unbounded}
\end{figure}
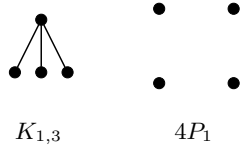

Recall that by Lemma~\ref{lem:split}, every split graph is a chordal graph.
Therefore, if the class of $H$-free chordal graphs has bounded clique-width then the class of $H$-free split graphs must also have bounded clique-width.
To prove Theorem~\ref{thm:chordal-classification}, we will make heavy use of the following lemma.
This lemma can be seen as a refinement of Lemma~\ref{lem:split-chordal}, as it
classifies all but two graphs~$H$ (up to complementation) for which the class
of $H$-free split graphs has bounded clique-width.
(Note that a graph is a split graph if and only if its complement is a split
graph, so by Fact~\ref{fact:comp}, the class of $H$-free split graphs has
bounded clique-width if and only if the class of $\overline{H}$-free split
graphs has bounded clique-width.)

\begin{lemma}[\cite{BDHP15b}]\label{lem:split-classification}
Let~$H$ be a graph such that neither~$H$ nor~$\overline{H}$ is in $\{F_4,F_5\}$ (see \figurename~\ref{fig:open-split}).
The class of $H$-free split graphs has bounded clique-width if and only if
\\[-15pt]
\begin{itemize}
\item $H$ or~$\overline{H}$ is isomorphic to~$rP_1$ for some $r \geq 1$;
\item $H$ or $\overline{H} \ssi F_4$ or
\item $H$ or $\overline{H} \ssi F_5$.
\end{itemize}
\end{lemma}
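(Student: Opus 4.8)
The plan is to prove the two directions of the equivalence separately, using throughout that a graph is a split graph if and only if its complement is, so that by Fact~\ref{fact:comp} the class of $H$-free split graphs has bounded clique-width if and only if the class of $\overline H$-free split graphs does, and using Lemma~\ref{lem:prime} to reduce to prime split graphs. For a prime split graph $G$ we fix a split partition $(K,I)$ with $K$ a clique and $I$ independent.

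\textbf{Boundedness of the listed cases.} If $H$ or $\overline H$ is $rP_1$, then replacing $H$ by $\overline H$ if necessary we may assume $H=K_r$; since a $K_r$-free split graph is a $K_r$-free chordal graph, Lemma~\ref{lem:clique-chordal} gives the bound. For the remaining cases $H\ssi F_4$ and $H\ssi F_5$, where $F_4$ and $F_5$ are fixed small graphs, the idea is that forbidding such a small induced subgraph in a prime split graph $G$ forces the bipartite incidence between $K$ and $I$ to be very structured: one shows that the vertices of $I$, and dually (passing to $\overline G$) the vertices of $K$, fall into a bounded number of classes behaving as near-twins. The structural dichotomies driving this are the spider characterization of prime $S_{1,1,2}$-free split graphs (Lemma~\ref{lem:chair-split-spider}) and Olariu's paw lemma (Lemma~\ref{lem:paw}) applied to $G$ or $\overline G$. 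Once only a constant number of vertex types remains, a $k$-expression with $k$ a fixed constant builds $G$ by introducing each type under its own label and realising the (now bounded) incidence pattern with the join and relabel operations.

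\textbf{Unboundedness of the remaining cases.} For every $H$ not on the list, with $H,\overline H\notin\{F_4,F_5\}$, we must exhibit a family of $H$-free split graphs of unbounded clique-width. The plan is to assemble a small toolbox of such families. The base family is the one witnessing Lemma~\ref{lem:split-chordal}, namely split permutation graphs~\cite{KLM14} (equivalently the split graphs of Makowsky and Rotics~\cite{MR99}). Further families come from the bipartite graphs of unbounded clique-width underlying Lemma~\ref{lem:bipartite}: given such a bipartite graph $B=(X,Y,E)$, turning $Y$ into a clique is a subgraph complementation, so by Fact~\ref{fact:comp} the resulting split graphs still have unbounded clique-width, and adding a bounded number of universal or isolated vertices produces the required variants. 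The proof is then a case analysis: if $H$ is neither $rP_1$ nor $K_r$, is not an induced subgraph of $F_4$ or $F_5$, and $H,\overline H\notin\{F_4,F_5\}$, then $H$ or $\overline H$ contains one of a short list of ``threshold'' configurations that is absent from every member of one of our families, so that family consists of $H$-free split graphs and we are done.

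\textbf{The main obstacle.} The hard part is the unboundedness direction: the families of unbounded-clique-width split graphs must be calibrated so precisely that the set of minimal graphs $H$ for which \emph{no} such family is $H$-free coincides exactly with the bounded list together with $F_4$ and $F_5$. Handling the graphs $H$ that lie just above $F_4$ or $F_5$ in the induced-subgraph order --- off the bounded list but only barely --- needs several ad hoc constructions together with a careful check of $H$-freeness of each family against each such $H$, and this bookkeeping is the bulk of the work. A secondary difficulty is, on the boundedness side, proving that excluding the small graph $H\ssi F_4$ (respectively $F_5$) really does leave only a constant number of vertex types in a prime split graph; this splits into several structural subcases, some of which are not covered by Lemmas~\ref{lem:chair-split-spider} and~\ref{lem:paw} and require their own analysis and explicit constant-label expressions.
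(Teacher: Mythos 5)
First, a point of order: the paper does not prove Lemma~\ref{lem:split-classification} at all --- it is imported verbatim from~\cite{BDHP15b}, where it is the main result of a separate paper. There is therefore no in-paper argument to compare yours against; your proposal has to stand as a self-contained proof of that external classification, and it does not.

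The gaps are genuine in both directions. On the boundedness side, the only case you actually establish is $H=K_r$ (via Lemma~\ref{lem:clique-chordal}). For $H$ a proper induced subgraph of~$F_4$ or~$F_5$, your structural claim --- that forbidding $H$ in a prime split graph leaves ``a bounded number of classes behaving as near-twins'' --- cannot be right as stated: in a prime graph any set of mutual twins is a module and hence trivial, so a bounded number of twin-like classes would force the graph itself to have bounded size. The structure one actually has to extract is of a different kind (for instance neighbourhoods linearly ordered by inclusion, as in the $k$-web argument in the proof of Lemma~\ref{lem:bull-chordal}), and Lemmas~\ref{lem:chair-split-spider} and~\ref{lem:paw} cover only a fraction of the maximal proper induced subgraphs of~$F_4$ and~$F_5$; each of the remaining ones needs its own substantial analysis, which is the content of~\cite{BDHP15b}. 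On the unboundedness side, the tool you name is the right one (complete one side of a bipartite graph of unbounded clique-width and invoke Fact~\ref{fact:comp}, plus the split permutation graphs of~\cite{KLM14}), but the entire case analysis --- selecting the witness families and verifying $H$-freeness of the resulting \emph{split} graphs, not of the bipartite graphs you started from, for every $H$ off the list --- is exactly the part you defer, and that bookkeeping is the theorem. As written, the proposal describes the goal of each direction rather than establishing it.
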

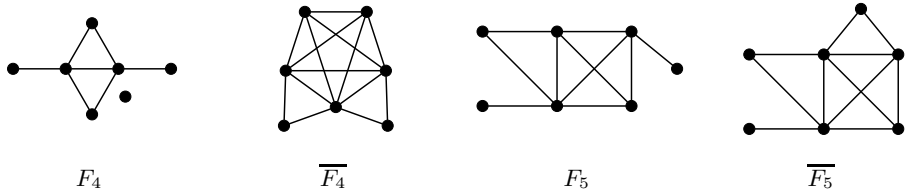
\begin{figure}[h!]
\begin{center}
\begin{tabular}{cccc}
\begin{minipage}{0.24\textwidth}
\centering
\scalebox{0.7}{
{\begin{tikzpicture}[scale=1]
\GraphInit[vstyle=Simple]
\SetVertexSimple[MinSize=6pt]
\Vertex[x=0,y=0]{a}
\Vertex[a=60,d=1]{b}
\Vertex[a=-60,d=1]{c}
\Vertex[a=0,d=1]{d}
\Vertex[a=0,d=2]{x}
\Vertex[a=0,d=-1]{y}
\Vertex[a=-25,d=1.25]{z}
\Edges(y,a,b,d,c,a,d,x)
\end{tikzpicture}}}
\end{minipage}
&
\begin{minipage}{0.24\textwidth}
\centering
\scalebox{0.7}{
{\begin{tikzpicture}[scale=1,rotate=54]
\GraphInit[vstyle=Simple]
\SetVertexSimple[MinSize=6pt]
\Vertices{circle}{a,b,c,d,e}
\Vertex[a=180,d=1.67504239778]{f}
\Vertex[a=252,d=1.67504239778]{g}
\Edges(a,b,c,d,e,a)
\Edges(a,c,e,b,d,a)
\Edges(c,f,d)
\Edges(d,g,e)
\end{tikzpicture}}}
\end{minipage}
&
\begin{minipage}{0.24\textwidth}
\centering
\scalebox{0.7}{
{\begin{tikzpicture}[scale=1,rotate=135]
\GraphInit[vstyle=Simple]
\SetVertexSimple[MinSize=6pt]
\Vertices{circle}{a,b,c,d}
\Vertex[a=225,d=1.57313218497]{f}
\Vertex[x=1,y=2]{y}
\Vertex[x=2,y=1]{z}
\Edges(a,b,c,d,a,c)
\Edges(b,d)
\Edges(a,z,b,y)
\Edges(f,d)
\end{tikzpicture}}}
\end{minipage}
&
\begin{minipage}{0.24\textwidth}
\centering
\scalebox{0.7}{
{\begin{tikzpicture}[scale=1,rotate=135]
\GraphInit[vstyle=Simple]
\SetVertexSimple[MinSize=6pt]
\Vertices{circle}{a,b,c,d}
\Vertex[a=315,d=1.57313218497]{f}
\Vertex[x=1,y=2]{y}
\Vertex[x=2,y=1]{z}
\Edges(a,b,c,d,a,c)
\Edges(b,d)
\Edges(a,z,b,y)
\Edges(a,f,d)
\end{tikzpicture}}}
\end{minipage}\\
& \\
$F_4$ & $\overline{F_4}$ & $F_5$ & $\overline{F_5}$
\end{tabular}
\end{center}
\caption{\label{fig:open-split} The four graphs for which it is not known whether or not the class of $H$-free split graphs has bounded clique-width.
(Recall that the cases~$F_4$ and~$\overline{F_4}$ are equivalent to each-other and that the cases~$F_5$ and~$\overline{F_5}$ are also equivalent to each-other.)}
\end{figure}

A graph $G=(V,E)$ is a {\em permutation graph} if each vertex $v\in V$ corresponds to a straight line segment~$l_v$ between two parallel lines such that there is an edge between two vertices~$u$ and~$v$ if and only if~$l_u$ and~$l_v$ intersect.
A graph is {\em bipartite permutation} if it is both bipartite and permutation.
We need the following result due to Brandst\"adt and Lozin, which we will use in the proof of Theorem~\ref{t-weakly-chordal}.

\begin{lemma}[\cite{BL03}]\label{l-bippermut}
The class of bipartite permutation graphs has unbounded clique-width.
\end{lemma}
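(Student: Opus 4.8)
The plan is to establish unboundedness in the usual way: exhibit an explicit infinite family $\{G_n\}_{n\ge 1}$ of bipartite permutation graphs with $\cw(G_n)\to\infty$. Since having bounded clique-width is equivalent to having bounded rank-width (Section~\ref{sec:intro}, via~\cite{OS06}) and the rank-width $\mathrm{rw}(G)$ always satisfies $\mathrm{rw}(G)\le\cw(G)$, it suffices to show $\mathrm{rw}(G_n)\to\infty$. For that I would use the standard balanced-separator consequence of a rank-decomposition: if $\mathrm{rw}(G)\le k$, then $V(G)$ splits into two parts $V_1,V_2$ with $\tfrac13|V(G)|\le|V_1|\le\tfrac23|V(G)|$ such that the submatrix of the adjacency matrix with rows $V_1$ and columns $V_2$ has rank at most $k$ over $\mathrm{GF}(2)$. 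As $G_n$ is bipartite with sides $A$ and $B$, this ``cut-rank'' equals $\mathrm{rank}\,M[A\cap V_1,B\cap V_2]+\mathrm{rank}\,M[A\cap V_2,B\cap V_1]$, where $M$ is the biadjacency matrix of $G_n$. So the task reduces to a matrix one: find a sequence of $\{0,1\}$-matrices $M_n$ that (i) are biadjacency matrices of bipartite permutation graphs and (ii) have the property that for \emph{every} balanced partition of (rows $\cup$ columns), at least one of the two induced submatrices has unbounded $\mathrm{GF}(2)$-rank.

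For~(i) the natural certificate is the ``staircase'' (strong-ordering) description of the class: a connected bipartite graph with sides $A,B$ is a permutation graph precisely when $A$ and $B$ admit linear orders in which every vertex neighbourhood is an interval of the opposite side and these intervals advance monotonically along the order. I would therefore build $M_n$ as an explicit staircase matrix and read the strong ordering off it directly (adding isolated vertices or taking disjoint unions as convenient, since these keep us inside the class).

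The genuine difficulty --- and what I expect to be the main obstacle --- is~(ii): essentially every obvious bipartite permutation family has \emph{bounded} clique-width, because it admits an interleaved linear layout of $A$ and $B$ along which every cut meets only a ``local'' window of the staircase and hence has bounded rank; this is what happens for paths, ladders, chain graphs (i.e.\ $2P_2$-free bipartite graphs), complete bipartite graphs, and all bipartite graphs of bounded bandwidth. So $M_n$ must be a staircase that is truly two-dimensionally rich: one wants its nonzero pattern laid out in $\Theta(\sqrt n)$ nested diagonal bands of $\Theta(\sqrt n)$ rows each, with offsets chosen so that \emph{no} ordering of $A\cup B$ can confine all the interaction to a bounded window, forcing a cut of rank $\Omega(\sqrt n)$ at every balanced separator. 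Proving that the chosen staircase defeats \emph{all} balanced separators, not merely the naive interleaved ones, is the crux of the argument; granting it, $\cw(G_n)\ge\mathrm{rw}(G_n)=\Omega(\sqrt n)$ and the lemma follows.
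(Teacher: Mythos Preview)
The paper does not prove this lemma; it is quoted from~\cite{BL03} and used as a black box (in the proof of Theorem~\ref{t-weakly-chordal}). So there is no in-paper proof to compare against.

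On its own merits, your outline is methodologically sound: the passage from clique-width to rank-width, the balanced-separator consequence of a width-$k$ rank-decomposition, and the block decomposition of the cut-rank for a bipartite graph are all correct. Your diagnosis of why the obvious one-dimensional staircases (paths, ladders, chain graphs) fail --- some interleaved linear layout gives bounded cut-rank at every cut --- is exactly the right obstruction to keep in mind.

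That said, what you have written is a plan, not a proof. You describe what the target family $M_n$ \emph{should} look like (``$\Theta(\sqrt n)$ nested diagonal bands of $\Theta(\sqrt n)$ rows each'') but never actually define it, and you explicitly defer the crux (``granting it, \ldots''). Until a concrete staircase family is fixed and the balanced-cut rank lower bound is actually carried out against \emph{every} balanced bipartition, nothing has been shown. For orientation, the argument in~\cite{BL03} does construct an explicit two-parameter family of bipartite permutation graphs of the kind you anticipate and establishes the lower bound directly for clique-width via a labeling argument, rather than through rank-width; either route is viable, but both require completing the step you have left open.
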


\section{New Classes of Bounded Clique-width}\label{sec:bounded}

We present four new classes of $H$-free chordal graphs that have bounded clique-width, namely when $H\in \{ \overline{K_{1,3}+2P_1},\allowbreak P_1+\nobreak \overline{P_1+P_3},\allowbreak P_1+\nobreak \overline{2P_1+P_2},\allowbreak \overline{S_{1,1,2}}\}$.
We prove that these classes have bounded clique-width in the subsections below, making use of known results from
Section~\ref{sec:prelim}.
In particular we will often use Facts~\ref{fact:del-vert}--\ref{fact:subdiv}. Note that Facts~\ref{fact:del-vert} and~\ref{fact:2-conn} can be used safely, since every class of
$H$-free chordal graphs is closed under vertex deletion (when applying the other three facts we need to be more careful).

\subsection{The Case $H=\overline{K_{1,3}+2P_1}$}

Here is our first result.
To prove it, we make use of the celebrated Menger's Theorem (see e.g.~\cite{Di12}) and the facts from Section~\ref{sec:prelim}. In particular Fact~\ref{fact:2-conn}, which states
that a graph~$G$ has bounded clique-width if and only if every block of~$G$ has bounded clique-width, will play an important role in our proof.

\begin{theorem}\label{thm:co-k13+2p1-chordal}
The class of $\overline{K_{1,3}+2P_1}$-free chordal graphs has bounded clique-width.
\end{theorem}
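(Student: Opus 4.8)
The plan is to use Fact~\ref{fact:2-conn} to reduce to 2-connected $\overline{K_{1,3}+2P_1}$-free chordal graphs, and then to analyse the structure of such graphs around a simplicial vertex (which exists by Lemma~\ref{l-simplicial}). Let $G$ be a 2-connected $\overline{K_{1,3}+2P_1}$-free chordal graph. Note first that $\overline{K_{1,3}+2P_1}$ contains $\overline{2P_1+P_2}$ (the diamond) as an induced subgraph, so if $G$ is additionally diamond-free we are done immediately by Lemma~\ref{lem:gem-chordal} (a diamond-free chordal graph is $\overline{P_1+P_4}$-free, since $\overline{P_1+P_4}$ contains a diamond) — or more simply such graphs have clique-width at most~$3$ by the Golumbic--Rotics bound for diamond-free chordal graphs referenced in the introduction. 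So we may assume $G$ contains an induced diamond. The real content is therefore to understand how the forbidden induced subgraph $\overline{K_{1,3}+2P_1}$ constrains a chordal graph that is not diamond-free.

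The key structural idea I would pursue: in a chordal graph, every maximal clique "hangs together" via clique-cutsets (Dirac's theorem / clique-tree structure), and $\overline{K_{1,3}+2P_1}$-freeness says that no vertex $v$ can have an independent set of size three outside its closed neighbourhood while also... — more precisely, $\overline{K_{1,3}+2P_1} = K_4$ plus two isolated vertices in the complement picture; unwinding, $\overline{K_{1,3}+2P_1}$ is the graph obtained from a $K_4$ by attaching a pendant-like structure. Rather than chase the exact shape, the productive route is: pick a simplicial vertex $v$ with (complete) neighbourhood $N(v)$. Because $G$ is 2-connected, $|N(v)|\ge 2$. Use chordality to control how vertices at distance $2$ from $v$ attach to $N(v)$, and use $\overline{K_{1,3}+2P_1}$-freeness to bound the "complexity" of these attachments: I expect to show that $N(v)$, together with $v$, forms a clique that is "almost a module" — i.e. the vertices outside $N[v]$ split into a bounded number of classes according to which subset of $N(v)$ they see, because a rich pattern of distinguishing vertices would create the forbidden graph. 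Having bounded the number of neighbourhood-types, I would then peel off $v$ (Fact~\ref{fact:del-vert} lets us delete vertices freely) or, better, contract/identify structure and induct, using bipartite or subgraph complementation (Facts~\ref{fact:comp},~\ref{fact:bip}) to simplify the join structure between the clique and the rest.

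Concretely, the steps in order: \textbf{(1)} Reduce to 2-connected prime graphs via Fact~\ref{fact:2-conn} and Lemma~\ref{lem:prime}. \textbf{(2)} Dispose of the diamond-free case using the known bound for diamond-free chordal graphs. \textbf{(3)} Fix a simplicial vertex $v$; let $K=N[v]$, a clique; partition $V\setminus K$ by trace on $K\setminus\{v\}$, and show via $\overline{K_{1,3}+2P_1}$-freeness (together with chordality, which forbids $C_4$ and hence forces these traces to be "nested" or otherwise structured along the clique tree) that only boundedly many traces occur, or that the part of $G$ away from $K$ decomposes through a clique-cutset contained in $K$. \textbf{(4)} Use the clique-cutset decomposition of chordal graphs: $G$ is built by gluing pieces along cliques; argue each "atom" (clique-cutset-free piece) of a 2-connected $\overline{K_{1,3}+2P_1}$-free chordal graph is of one of a few simple types (e.g. a small blow-up of a fixed bounded graph, or covered by Lemmas~\ref{lem:gem-chordal}, \ref{lem:clique-chordal}), and that the gluing along cliques preserves bounded clique-width after a bounded bipartite complementation. \textbf{(5)} Conclude by combining with Fact~\ref{fact:bip} / Fact~\ref{fact:comp} and Lemma~\ref{lem:tree} as needed.

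\textbf{Main obstacle.} The hard part will be Step~(3)--(4): showing that $\overline{K_{1,3}+2P_1}$-freeness, which is a rather weak local condition, genuinely bounds the global "interface complexity" of a chordal graph. In a general chordal graph a simplicial vertex's neighbourhood can be distinguished by unboundedly many outside vertices (that is exactly what makes split graphs and proper interval graphs have unbounded clique-width), so the proof must pinpoint precisely where forbidding $\overline{K_{1,3}+2P_1}$ rules out the constructions of Lemmas~\ref{lem:claw-chordal} and~\ref{lem:split-chordal}. I anticipate a somewhat delicate case analysis on how three pairwise-structured vertices outside $N[v]$ interact with $N[v]$, showing any "bad" configuration yields an induced $\overline{K_{1,3}+2P_1}$ (a $K_4$ inside the clique plus two suitably non-adjacent witnesses). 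Once that local-to-global bound is in hand, assembling the clique-width bound via the clique-tree and Facts~\ref{fact:del-vert}--\ref{fact:subdiv} should be routine, though (as the authors warn in their footnote) the resulting constant will be large.
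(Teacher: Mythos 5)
Your proposal is a plan rather than a proof: the steps that carry all the weight --- your (3) and (4) --- are left unexecuted, and you say yourself that you ``anticipate a somewhat delicate case analysis'' without supplying it. The concrete idea you are missing is that one should work with a \emph{maximum} clique~$K$ rather than with the closed neighbourhood of a simplicial vertex. If $|K|\leq 6$ the graph is $K_7$-free and Lemma~\ref{lem:clique-chordal} applies; if $|K|\geq 7$, then any vertex outside~$K$ with at least two neighbours in~$K$ has at most two non-neighbours in~$K$, since otherwise two neighbours and three non-neighbours form a $K_5$ inside~$K$ to which that vertex is adjacent in exactly two places --- which is precisely an induced $\overline{K_{1,3}+2P_1}$. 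This single observation is the ``local-to-global'' bound on interface complexity whose existence you only conjecture. From it, together with 2-connectivity (Menger) and chordality, one shows that the \emph{entire} vertex set is $K\cup S$, where $S$ is the set of vertices with at least two neighbours in~$K$, and then that $G[S]$ is a forest (any triangle in~$S$ could be merged into~$K$, contradicting maximality). After that, a short case analysis on whether $G[S]$ contains a $2P_2$ lets one split the graph, via Facts~\ref{fact:del-vert} and~\ref{fact:bip}, into a clique plus a forest (Lemma~\ref{lem:tree}) or into a near-split graph of bounded clique-width. A simplicial vertex's neighbourhood gives you neither the size guarantee nor the maximality that these arguments need.

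Beyond being incomplete, your step (4) rests on a principle that is not available and is false in the generality you invoke it: gluing chordal pieces along clique cutsets of unbounded size does \emph{not} preserve bounded clique-width. Split graphs are chordal, have a trivial clique-tree structure, and nevertheless have unbounded clique-width (Lemma~\ref{lem:split-chordal}); this is exactly why the clique-tree of a chordal graph gives no control over clique-width and why Fact~\ref{fact:2-conn} (blocks, i.e.\ gluing at single cut-vertices) is the strongest decomposition fact of this kind used in the paper. So even if your step (3) were completed, the assembly in step (4) would not go through as stated. Your preliminary observations --- the reduction to 2-connected graphs and the disposal of the diamond-free case via Lemma~\ref{lem:gem-chordal} --- are correct but do not touch the core of the problem.
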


\begin{proof}
Let~$G$ be a $\overline{K_{1,3}+2P_1}$-free chordal graph.
By Fact~\ref{fact:2-conn} we may assume that~$G$ is 2-connected.
Let~$K$ be a maximum clique in~$G$ on~$k$ vertices.
We may assume that
$k \geq 7$, otherwise~$G$ is $K_7$-free, in which case~$G$ has bounded clique-width by Lemma~\ref{lem:clique-chordal}.
We let~$S$ be the set of vertices outside~$K$ with at least two neighbours in~$K$.
Because~$K$ is maximum, $k\geq 7$
and~$G$ is $\overline{K_{1,3}+2P_1}$-free,
every vertex in~$S$ has either one non-neighbour or two non-neighbours in~$K$.

We will prove that $V(G)=K\cup S$.
To this end, we first prove that~$G-S$ is connected.
Suppose, for contradiction, that there exists a vertex~$x$
that is in a connected component~$D$ of~$G-S$
other than the component containing~$K$. Let $u\in K$.
Because~$G$ is 2-connected, it contains two paths~$P_1$
and~$P_2$ from~$x$ to~$u$ that are internally vertex-disjoint (by Menger's Theorem).
Note that we may assume that each~$P_i$ is induced.
For $i=1,2$, let $s_i\in P_i$ be the first vertex that is not in~$D$ and let~$x_i$ be the predecessor of~$s_i$ on~$P_i$. Note that $s_1,s_2 \in S$. Since $k
\geq 7$, there must be a vertex $u' \in K$ adjacent to both~$s_1$ and~$s_2$.
For $i=1,2$ let~$P_i'$ be the path from~$x$ to~$u'$ formed by taking the part of the
path~$P_i$ from~$x$ to~$s_i$ and adding~$u'$. Note that~$P_1'$ and~$P_2'$ are
both induced paths in~$G$ and each contains exactly one vertex from~$K$ and one
from~$S$.
Since~$G$ is chordal,~$s_1$ and~$s_2$ must be adjacent and at least one of~$x_1$ and~$x_2$ must be adjacent to both~$s_1$ and~$s_2$.
Without loss of generality, we assume that~$x_1$ is adjacent to both~$s_1$ and~$s_2$.
On the other hand,~$s_1$ and~$s_2$ have at least three common neighbours in~$K$ since $k\ge 7$.
Then $x_1,s_1,s_2$, together with these three vertices in~$K$ form an induced
$\overline{K_{1,3}+2P_1}$, a contradiction. Thus~$G-S$ is indeed connected.

Suppose, for contradiction, that $V(G) \neq K\cup S$. Since~$G-S$ is connected, there must be a vertex $y \in V(G) \setminus (K \cup S)$ adjacent to a vertex $v \in K$.
As $y\notin S$,~$y$ is anti-complete to $K \setminus \{v\}$.
Let $u \in K \setminus \{v\}$. Since~$G$ is 2-connected, there must exist
an induced path~$P$ from~$y$ to~$u$ with
$v\notin V(P)$ (by Menger's Theorem).
Then~$v$ is complete
to~$V(P)$ since~$G$ is chordal. Let~$y'$ be the last vertex (from~$y$ to~$u$)
on~$P$ that is not in $K\cup S$
(note that~$y'$ is not necessarily distinct from~$y$).
Let~$s$ be the successor of~$y'$ on~$P$.
Since $y'\notin S$ and~$y'$ is adjacent to~$v$, we find that~$y'$ is anti-complete to
$K\setminus \{v\}$. Hence, $s \notin K$, so $s\in S$.
Moreover,~$s$ and~$v$ have at least three common neighbours
in $K\setminus \{v\}$ since $k\ge 7$.
Then~$y',s,v$, together with these three vertices in~$K$
form an induced $\overline{K_{1,3}+2P_1}$, a contradiction.

\medskip
\noindent
For $i=1,2$, let~$S_i$ consist of those vertices with exactly~$i$ non-neighbours in~$K$.
Because every vertex in~$S$ has either one non-neighbour or two non-neighbours, we find that $S=S_1\cup S_2$.

We will now prove, via Claims~\ref{clm:co-k13+2p1-chordal:s2-same-k-non-nhbrs}--\ref{clm:co-k13+2p1-chordal:s-triangle-free}, that~$G[S]$ is a forest.

\clm{\label{clm:co-k13+2p1-chordal:s2-same-k-non-nhbrs}
Any two adjacent vertices in~$S_2$ have the same pair of non-neighbours in~$K$.}
This follows directly from the fact that~$G$ is chordal.

\clm{\label{clm:co-k13+2p1-chordal:s2-common-non-nhbr}
Any two non-adjacent vertices in~$S_2$ have a common non-neighbour.}
Suppose that this is not the case. Then there exist two non-adjacent vertices
$t,t' \in S_2$ and four distinct vertices $a,b,c,d \in K$ with~$t$ non-adjacent to~$a$ and~$b$ and with~$t'$ non-adjacent to~$c$ and~$d$.
As~$t$ and~$t'$ belong to~$S_2$, it follows that~$t$ is adjacent
to~$c$ and~$d$, and that~$t'$ is adjacent to~$a$ and~$b$. Since $k\geq 7$,
we find that~$t$ and~$t'$ have two common neighbours in~$K$.
These two common neighbours, together with $c,d,t,t'$ form an induced $\overline{K_{1,3}+2P_1}$,
a contradiction.

\clm{\label{clm:co-k13+2p1-chordal:s1-s2-common-non-nhbr}
If a vertex $s \in S_1$ is adjacent to a vertex $t \in S_2$ then~$s$ and~$t$ must have a common non-neighbour in~$K$.}
Indeed, let~$v$ be the unique non-neighbour
of~$s$ in~$K$. Then~$v$ must be a non-neighbour of~$t$ otherwise a non-neighbour of~$t$ in~$K$,
together with~$s,t$ and~$v$ would induce a~$C_4$ in~$G$. This contradicts the fact that~$G$ is chordal.

\clm{\label{clm:co-k13+2p1-chordal:s1-independent-set}
$S_1$ is an independent set.}
This holds as
no two vertices in~$S_1$ with a common non-neighbour in~$K$ are adjacent since~$K$ is maximum;
while no two vertices in~$S_1$ with different non-neighbours in~$K$ are adjacent since~$G$ is chordal.

\clm{\label{clm:co-k13+2p1-chordal:s-triangle-free}
$G[S]$ is a forest.}
Suppose for contradiction that~$G[S]$ is not a forest. Then, since~$G$ is
chordal,~$G[S]$ must contain a~$C_3$, on vertices $c_1,c_2,c_3$, say. By
Claim~\ref{clm:co-k13+2p1-chordal:s1-independent-set}, we may assume without
loss of generality that $c_2,c_3 \notin S_1$ and thus $c_2,c_3\in S_2$.
Then~$c_2$ and~$c_3$ must have the
same pair of non-neighbours $a,b \in K$ by
Claim~\ref{clm:co-k13+2p1-chordal:s2-same-k-non-nhbrs}. If $c_1 \in S_2$ then
by Claim~\ref{clm:co-k13+2p1-chordal:s2-same-k-non-nhbrs},
the non-neighbours of~$c_1$ in~$K$ are also~$a$ and~$b$.
If $c_1 \in S_1$ then by
Claim~\ref{clm:co-k13+2p1-chordal:s1-s2-common-non-nhbr},
the non-neighbour of~$c_1$ in~$K$ is either~$a$ or~$b$.
Hence, in both these cases, $K\setminus \{a,b\}\cup
\{c_1,c_2,c_3\}$ is a clique of size more that~$|K|$, contradicting the
maximality of~$K$.

\medskip
\noindent
We will consider two cases depending on whether or not~$G[S]$ is $2P_2$-free. We need two more claims to deal with these.
We are going to distinguish between two cases depending on whether or not~$G[S]$ is $2P_2$-free. For treating these two cases we need two more claims.

\clm{\label{clm:co-k13+2p1-chordal:K-comp-most-of-S}
If two vertices $s_1,s_2 \in S$, together with a vertex $w \in K$ form a triangle
then~$w$ is complete to $S \setminus (N(s_1) \cup N(s_2))$.}
Indeed, suppose for contradiction that $t \in S \setminus (N(s_1) \cup N(s_2))$ is not
adjacent to~$w$. Since $|K| \geq 7$, there must be vertices $x,y \in K$ that
are complete to $\{s_1,s_2,t\}$. Since~$t$ is non-adjacent to~$s_1$ and~$s_2$, we find that
$\{t,s_1,s_2,w,x,y\}$ induces a $\overline{K_{1,3}+2P_1}$, a contradiction.

\clm{\label{clm:co-k13+2p1-chordal:non-trivial-s-comp}
For any connected component~$D$ in~$G[S]$ that contains at least one edge,
there exist two vertices~$a$ and~$b$ in~$K$ such that $K\setminus \{a,b\}$
is complete to $S\setminus V(D)$.}
To see this, let~$D$ be a connected component with an edge~$st$. Since~$S_1$ is independent
by Claim~\ref{clm:co-k13+2p1-chordal:s1-independent-set},
we may assume
that $t\in S_2$. Let~$a$ and~$b$ in~$K$ be
the
two non-neighbours of~$t$.
It follows from Claims~\ref{clm:co-k13+2p1-chordal:s2-same-k-non-nhbrs}
and~\ref{clm:co-k13+2p1-chordal:s1-s2-common-non-nhbr} that~$a$ and~$b$ are the only possible
non-neighbours of~$s$ or~$t$ in~$K$. In other words,
$K\setminus \{a,b\}$
is complete to $\{s,t\}$, and hence to $S\setminus V(D)$ by Claim~\ref{clm:co-k13+2p1-chordal:K-comp-most-of-S}.

\medskip
\noindent
We are now ready to consider the two cases.

\thmcase{$G[S]$ contains an induced~$2P_2$.}
First suppose~$G[S]$ has only one connected component that contains an edge.
Then, since~$G[S]$ is a forest
and~$G[S]$ contains an induced~$2P_2$, deleting one vertex from~$S$, which we may do by Fact~\ref{fact:del-vert},
yields two connected components~$D_1$ and~$D_2$ that contain edges~$s_1t_1$ and~$s_2t_2$, respectively.
It follows from Claim~\ref{clm:co-k13+2p1-chordal:non-trivial-s-comp} that there exist vertices~$a$ and~$b$ in~$K$ such that
$S\setminus D_1$ is complete to $K \setminus \{a,b\}$.
In particular,~$s_2$ and~$t_2$ are complete to $K \setminus \{a,b\}$.
Hence, $K \setminus \{a,b\}$ is also complete to~$D_1$, by Claim~\ref{clm:co-k13+2p1-chordal:K-comp-most-of-S}.
Thus, $K \setminus \{a,b\}$ is complete to~$S$.
Deleting~$a$ and~$b$ (which we may do by Fact~\ref{fact:del-vert})
and applying a bipartite complementation between $K \setminus \{a,b\}$ and~$S$
(which we may do by Fact~\ref{fact:bip}) splits the graph into two disjoint parts:
a clique $G[K \setminus \{a,b\}]$, which has clique-width~$2$,
and a forest~$G[S]$, which has clique-width at most~$3$ by Lemma~\ref{lem:tree}.
We conclude that~$G$ has bounded clique-width.\footnote{We mean to say that the clique-width of~$G$
is bounded by a constant that does not depend on the size of~$G$ but only on the class of graphs under consideration. We allow this minor abuse of notation throughout the paper.}

\thmcase{$G[S]$ is $2P_2$-free.}
In this case~$S$ contains at most
one connected component with an edge. If such a connected component exists then it is a $2P_2$-free
tree, and hence it must be a~$P_2,K_{1,r}$ or a
double-star.
In all three cases
deleting at most two vertices from~$S$, which we may do by Fact~\ref{fact:del-vert},
yields a split graph.
If $S_2 \neq \emptyset$ then let~$s$ be a vertex in~$S_2$ and let~$k_1$ and~$k_2$ be its two (only) non-neighbours in~$K$.
By Claim~\ref{clm:co-k13+2p1-chordal:s2-common-non-nhbr}, any other vertex of~$S_2$ is non-adjacent to at least one of $k_1,k_2$.
Hence, after removing~$k_1$ and~$k_2$ (which we may do by Fact~\ref{fact:del-vert}), every vertex of~$S$ is adjacent to all but at most one vertex of~$K$.
(In the case where $S_2 = \emptyset$, we do not need to remove any vertices of~$K$.)
Next, we perform a bipartite complementation between~$K$ and~$S$, which we may do by Fact~\ref{fact:bip}.
This results in a new split graph in which each vertex of~$S$ is adjacent to at most one vertex of~$K$.
Hence, this graph, and consequently~$G$, has bounded clique-width by Fact~\ref{fact:2-conn}.
\qed
\end{proof}

\subsection{The Case $H=P_1+\overline{P_1+P_3}$}

We first prove three useful lemmas.

\begin{lemma}\label{lem:p1+paw-split}
The class of $(P_1+\overline{P_1+P_3})$-free split graphs has bounded clique-width.
\end{lemma}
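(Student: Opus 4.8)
The plan is to use the structural characterization of prime split graphs that avoid certain small graphs, together with the reduction lemmas (Facts~\ref{fact:del-vert}--\ref{fact:subdiv}), in order to reduce an arbitrary $(P_1+\overline{P_1+P_3})$-free split graph to a graph of bounded clique-width. By Lemma~\ref{lem:prime} it suffices to bound the clique-width of prime $(P_1+\overline{P_1+P_3})$-free split graphs; fix such a graph $G$ with split partition $(K,I)$, where $K$ is a clique and $I$ is an independent set. Note that $\overline{P_1+P_3}$ is the paw, so $G$ is $(P_1+\textrm{paw})$-free, meaning that for every vertex $v$ of $G$, the graph $G-N[v]$ (more precisely $G$ restricted to the non-neighbours of $v$ other than $v$ itself, together with $v$ removed) contains no induced paw. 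Equivalently, for each vertex $v$, the non-neighbourhood of $v$ induces a paw-free graph, hence by Lemma~\ref{lem:paw} every connected component of that non-neighbourhood is either complete multipartite or $K_3$-free; since $G$ is a split graph (hence chordal, hence perfect, and in particular $C_4$-free, so the non-neighbourhood is also a split graph) this forces each such component to be very restricted — essentially a complete graph or a star-like structure.

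The key steps, in order, would be: (1) reduce to prime $G$ via Lemma~\ref{lem:prime}; (2) observe that if the clique number of $G$ is bounded by a constant then $G$ has bounded clique-width by Lemma~\ref{lem:clique-chordal}, so we may assume $|K|$ is large, say $|K|\ge 7$; (3) analyse how vertices of $I$ attach to $K$. For a vertex $i\in I$, consider its set of non-neighbours in $K$. Picking a vertex $v\in I$ and looking at its non-neighbourhood (which contains all of $I$ and a chunk of $K$) and forbidding a paw there, combined with the fact that $K$ is a clique and $|K|$ is large, should force all but a bounded number of vertices of $I$ to have either very few neighbours in $K$ or very few non-neighbours in $K$. (4) After deleting the bounded exceptional set (Fact~\ref{fact:del-vert}), partition the remaining vertices of $I$ into a "low-degree" part $I_\ell$ (few neighbours in $K$) and a "high-degree" part $I_h$ (few non-neighbours in $K$); apply a bipartite complementation between $K$ and $I_h$ (Fact~\ref{fact:bip}) so that $I_h$ also becomes low-degree towards $K$. (5) The resulting graph is a split graph in which every vertex of $I$ has bounded degree; but then $G$ itself (modulo the bipartite complementation which is reversible and preserves boundedness) decomposes in a controlled way — in fact a split graph in which the independent side has bounded degree has bounded clique-width, which follows by combining Fact~\ref{fact:del-vert}, Fact~\ref{fact:2-conn} and the elementary lemmas, or can be argued directly by building a $k$-expression for the clique $K$ and attaching each low-degree vertex of $I$ using a bounded number of extra labels.

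The main obstacle I expect is step (3)–(4): controlling the attachment pattern of the independent set $I$ to the large clique $K$ so that only a bounded number of vertices of $I$ are "mixed" (having both many neighbours and many non-neighbours in $K$). The paw-freeness of non-neighbourhoods must be leveraged carefully: if $i\in I$ has two neighbours $a,b\in K$ and two non-neighbours $c,d\in K$, and some other vertex $i'\in I$ is a non-neighbour of $i$ with a neighbour and a non-neighbour among $\{a,b,c,d\}$, one should be able to locate an induced $P_1+\textrm{paw}$, using the extra room in $K$ (since $|K|\ge 7$) to find the isolated vertex $P_1$ as a suitable element of $K$ non-adjacent to nothing relevant — here I must be careful that such a "private" vertex of $K$ actually exists, which is where the bound $|K|\ge 7$ (or a slightly larger explicit constant) gets used. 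An alternative and perhaps cleaner route for this step is to invoke prime structure directly: a prime paw-free split graph is very restricted, and more to the point one could try to show that a prime $(P_1+\textrm{paw})$-free split graph is either a complete split graph with a bounded-size "defect", or has bounded clique number, which again lands us in Lemma~\ref{lem:clique-chordal}. Either way, once the attachment of $I$ to $K$ is shown to be "almost complete or almost empty up to a bounded defect", the clique-width bound follows routinely from the Facts and Lemmas~\ref{lem:tree}--\ref{lem:prime}.
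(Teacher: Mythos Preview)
Your approach is fundamentally different from the paper's, and it has a genuine gap.

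The paper's proof is a three-line reduction: apply a subgraph complementation to the clique side $C$ (Fact~\ref{fact:comp}) to obtain a bipartite graph $G'$; observe that if $G'$ contained an induced $P_2+P_4$ then $G$ would contain an induced $P_1+\overline{P_1+P_3}$, so $G'$ is $(P_2+P_4)$-free and hence $S_{1,2,3}$-free; conclude by Lemma~\ref{lem:bipartite}. No primality, no case analysis on degrees.

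Your plan, by contrast, attempts a direct structural analysis of how $I$ attaches to $K$, and the final step~(5) is wrong as stated. The claim that a split graph in which every vertex of the independent side has bounded degree to the clique has bounded clique-width is \emph{false}. For a counterexample, take any wall $W$ and let $B$ be its $1$-subdivision, viewed as a bipartite graph with parts $X=V(W)$ and $Y=E(W)$; every vertex of $Y$ has degree exactly~$2$. By Fact~\ref{fact:subdiv} the class of such $B$ has unbounded clique-width. Complementing within $X$ (Fact~\ref{fact:comp}) yields a split graph with clique $X$ and independent set $Y$ in which every vertex of $Y$ has degree~$2$, still of unbounded clique-width. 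Neither of your two proposed justifications works: the ``attach each low-degree vertex using a bounded number of extra labels'' idea fails because different vertices of $I$ need different specific vertices of $K$ singled out, and no bounded label set can do this; and such split graphs are typically $2$-connected, so Fact~\ref{fact:2-conn} does not decompose them.

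Even before step~(5), the structural claim in steps~(3)--(4) --- that all but boundedly many vertices of $I$ are either almost complete or almost anticomplete to $K$ --- is only sketched. To salvage the approach you would need this analysis to yield something much sharper (e.g.\ degree or co-degree at most~$1$ after removing boundedly many vertices, so that blocks really are small), and your sketch does not obviously give that. The paper sidesteps all of this by passing to the bipartite complement.
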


\begin{proof}
Let~$G$ be an arbitrary $(P_1+\nobreak \overline{P_1+P_3})$-free split graph with split partition~$(C,I)$.
By Fact~\ref{fact:comp}, we may apply a subgraph complementation on the clique~$C$.
The resulting graph~$G'$ is bipartite. Because~$G$ is $(P_1+\nobreak \overline{P_1+P_3})$-free,~$G'$ is
$(P_2+\nobreak P_4)$-free and thus $S_{1,2,3}$-free.
Then the result follows from the fact that $S_{1,2,3}$-free bipartite graphs have bounded clique-width
by Lemma~\ref{lem:bipartite}.
\qed
\end{proof}

\begin{lemma}\label{lemma:p1+paw-chordal:paw-free}
Every connected $\overline{P_1+P_3}$-free chordal graph is a tree or a complete split graph.
\end{lemma}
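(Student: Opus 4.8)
The plan is to reduce immediately to Olariu's structural dichotomy for paw-free graphs. Recalling that $\overline{P_1+P_3}$ is precisely the paw, Lemma~\ref{lem:paw} tells us that a connected $\overline{P_1+P_3}$-free graph~$G$ is either $K_3$-free or complete multipartite. So I would split the argument into these two cases and show that the first case yields a tree and the second yields a complete split graph.

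In the $K_3$-free case, observe that~$G$, being chordal, has no induced cycle of length at least~$4$, and it has no~$C_3$ by assumption; hence~$G$ contains no cycle at all, so~$G$ is a forest, and since it is connected it is a tree.

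In the complete multipartite case, let $V_1,\dots,V_k$ be the partition classes of~$G$. The key observation is that chordality forbids two of these classes from being large simultaneously: if $V_i$ and $V_j$ with $i\neq j$ each contained two distinct vertices, say $a,b\in V_i$ and $c,d\in V_j$, then $G[a,c,b,d]$ would be an induced $C_4$ (the pairs $ab$ and $cd$ are non-edges while all four ``cross'' pairs are edges), contradicting the fact that~$G$ is chordal. Consequently at most one partition class has size at least~$2$. The vertices lying in the singleton classes then form a clique~$K$, and the vertices of the unique class of size at least~$2$ (if such a class exists) form an independent set~$I$ that is complete to~$K$; therefore~$G$ is a complete split graph. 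One should just check the degenerate subcases: if $k=1$ then $G$ is edgeless and, being connected, is a single vertex (which is both a tree and a complete split graph), and if every class is a singleton then~$G$ is complete, again a complete split graph.

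There is no real obstacle here; the proof is short once Lemma~\ref{lem:paw} is in hand, and the only points requiring a little care are the $C_4$ argument ruling out two large partition classes and the handling of the trivial subcases (complete graph, single vertex).
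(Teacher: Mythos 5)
Your proposal is correct and follows essentially the same route as the paper: apply Lemma~\ref{lem:paw} to get the dichotomy between $K_3$-free (hence, by chordality, a tree) and complete multipartite (hence, since two classes of size at least two would create an induced $C_4$, a complete split graph). The extra detail you supply on the $C_4$ argument and the degenerate subcases is fine but not needed beyond what the paper already records.
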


\begin{proof}
Let~$G$ be a connected $\overline{P_1+P_3}$-free chordal graph.
By Lemma~\ref{lem:paw}, we find that~$G$ is $C_3$-free or complete multipartite. If~$G$ is $C_3$-free, then it must be a tree, since~$G$ is chordal. If~$G$ is complete multipartite, then at most
one partition class of~$G$ can contain more than one vertex, otherwise~$G$
would contain an induced~$C_4$. This means that~$G$ is a complete
split graph.
\qed
\end{proof}

Note that every induced $\overline{P_1+P_3}$ in a $(P_1+\nobreak \overline{P_1+P_3})$-free graph~$G$ is a dominating set of~$G$. The proof of the next lemma, in which disconnected graphs are considered, heavily relies on this fact.
We will also heavily exploit this property in the proof for the general case.

\begin{lemma}\label{lemma:p1+paw-chordal:disconnected}
The class of disconnected $(P_1+\nobreak \overline{P_1+P_3})$-free chordal graphs has bounded clique-width.
\end{lemma}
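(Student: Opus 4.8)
The plan is to exploit the structural observation that any induced $\overline{P_1+P_3}$ in a $(P_1+\nobreak\overline{P_1+P_3})$-free graph is a dominating set, so a disconnected such graph is severely constrained. Let $G$ be a disconnected $(P_1+\nobreak\overline{P_1+P_3})$-free chordal graph with components $D_1,\dots,D_t$ ($t\ge 2$). First I would observe that at most one component can contain an induced $\overline{P_1+P_3}$ (the paw), because a paw in one component together with any vertex in another component would form an induced $P_1+\nobreak\overline{P_1+P_3}$. Hence all but (at most) one component are $\overline{P_1+P_3}$-free chordal graphs, and by Lemma~\ref{lemma:p1+paw-chordal:paw-free} each such component is a tree or a complete split graph.

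Next I would handle the possible ``bad'' component $D_1$ that does contain a paw. Fix an induced paw $P$ in $D_1$ on vertices, say, $\{a,b,c,d\}$ where $abc$ is a triangle and $d$ is adjacent only to $c$. Since $P$ dominates $G$, but $D_2,\dots,D_t$ are nonempty and anti-complete to $P$, this is immediately a contradiction --- so in fact \emph{no} component contains an induced paw when $G$ is disconnected. Therefore \emph{every} component of $G$ is a tree or a complete split graph, by Lemma~\ref{lemma:p1+paw-chordal:paw-free}.

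It now remains to bound the clique-width of a disjoint union of trees and complete split graphs that is, globally, $(P_1+\nobreak\overline{P_1+P_3})$-free. Here I would argue that there can be at most one component that is a complete split graph with both sides nontrivial, and at most a bounded number of components that are not single vertices or single edges: if two distinct components each contain an induced $P_3$ (or a triangle plus a pendant-free structure giving a paw after adding a vertex), one gets a forbidden induced subgraph. More carefully: a $P_3$ in one component plus an edge (i.e.\ a $P_2$) in another component together with... one must check which small configurations spanning two components create a $P_1+\nobreak\overline{P_1+P_3}$; the cleanest route is to note that a $\overline{P_1+P_3}$ needs $4$ vertices within a single component (it is connected), while the lone extra $P_1$ can sit in any other component, so the real constraint is just the paw-freeness already used. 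Thus after deleting a bounded number of vertices (by Fact~\ref{fact:del-vert}) we reduce to a disjoint union where all components are trees and complete split graphs with one trivial side, i.e.\ stars; a disjoint union of forests is a forest, which has clique-width at most $3$ by Lemma~\ref{lem:tree}, and adding back boundedly many vertices keeps clique-width bounded.

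The main obstacle I anticipate is the last step: correctly identifying exactly which disjoint unions of trees and complete split graphs remain $(P_1+\nobreak\overline{P_1+P_3})$-free, and in particular verifying that a complete split graph component $K_{m,n}$-type structure with $m,n\ge 2$ does not combine with vertices in other components to create the forbidden graph --- it does not, since the forbidden graph's non-$P_1$ part is connected, but one must also ensure that within such a component no induced paw appears, which forces at most one side to be nontrivial and hence forces it to be a star or a clique with pendants. Once these small case checks are done, the clique-width bound follows routinely from Lemma~\ref{lem:tree}, Lemma~\ref{l-p4}/Lemma~\ref{lem:clique-chordal} for the few special components, and Fact~\ref{fact:del-vert}.
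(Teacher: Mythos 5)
Your first two steps coincide exactly with the paper's proof: a paw in any component together with a vertex of another component would induce $P_1+\overline{P_1+P_3}$, so every component is $\overline{P_1+P_3}$-free, and Lemma~\ref{lemma:p1+paw-chordal:paw-free} then makes every component a tree or a complete split graph. The paper finishes in one line from there: a complete split graph has clique-width at most~$2$, a tree has clique-width at most~$3$ by Lemma~\ref{lem:tree}, and since disjoint union is itself one of the clique-width operations, the clique-width of a disconnected graph is the maximum of the clique-widths of its components. Nothing more is needed.

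Your final paragraph, by contrast, contains a step that would fail. You claim that paw-freeness inside a complete split graph component ``forces at most one side to be nontrivial'' and that after deleting boundedly many vertices one is left with a forest. Both assertions are wrong: a complete split graph with clique side and independent side both arbitrarily large is paw-free (its only $4$-vertex induced subgraphs are $K_4$, the diamond, $K_{1,3}$ and edgeless graphs), it is a legitimate component here, and it cannot be turned into a star or a forest by deleting a bounded number of vertices. The reduction you propose therefore does not go through, and it also does not need to: the only fact you are missing is the elementary one that complete split graphs have clique-width at most~$2$ and that taking disjoint unions does not increase clique-width. With that observation substituted for your last paragraph, your argument becomes the paper's proof.
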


\begin{proof}
Let~$G$ be a disconnected $(P_1+\nobreak \overline{P_1+P_3})$-free chordal graph. Since~$G$ has at least two connected components and each connected
component therefore contains a~$P_1$, every connected component of~$G$ must be
$\overline{P_1+P_3}$-free. By Lemma~\ref{lemma:p1+paw-chordal:paw-free}, every
connected component of~$G$ must be a complete split graph or a tree. In the first case,
the clique-width of the connected component is readily seen to be at most~$2$. In the second case, the
clique-width of that connected component is at most~$3$ by Lemma~\ref{lem:tree}.
\qed
\end{proof}

We are now ready to prove our second result.

\begin{theorem}\label{thm:p1+paw-chordal}
The class of $(P_1+\nobreak \overline{P_1+P_3})$-free chordal graphs has bounded clique-width.
\end{theorem}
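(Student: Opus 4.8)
The plan is to reduce to the prime case via Lemma~\ref{lem:prime}, handle disconnected graphs separately (already done in Lemma~\ref{lemma:p1+paw-chordal:disconnected}), and then analyse a connected prime $(P_1+\overline{P_1+P_3})$-free chordal graph $G$ using the fact that every induced $\overline{P_1+P_3}$ (paw) in such a graph is a dominating set. First I would split into two cases according to whether $G$ contains an induced paw. If $G$ is $\overline{P_1+P_3}$-free, then by Lemma~\ref{lemma:p1+paw-chordal:paw-free} it is a tree or a complete split graph, both of which have clique-width at most $3$, and we are done. So assume $G$ contains an induced paw $W$ on vertex set $\{a,b,c,d\}$, where $\{a,b,c\}$ induces a triangle and $d$ is adjacent only to $a$ (say); this $W$ dominates $G$.

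The core of the argument is to control the vertices of $G$ by how they attach to the dominating paw $W$. Partition $V(G)\setminus W$ according to the trace of the neighbourhood on $W$; since every vertex sees at least one vertex of $W$, there are only boundedly many such traces, so it suffices to bound the clique-width after a bounded number of vertex deletions, subgraph complementations and bipartite complementations (Facts~\ref{fact:del-vert}--\ref{fact:subdiv}). Chordality severely restricts which traces can occur: for instance a vertex adjacent to $b$ and $c$ but not $a$, together with $a$, would create a $C_4$ with two vertices of $W$; similar $C_4$-avoidance and the absence of an induced $P_1+\overline{P_1+P_3}$ (any extra paw elsewhere must still be dominated) will collapse the possible neighbourhood types. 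I expect that after deleting $O(1)$ vertices and performing $O(1)$ bipartite complementations one is left with a graph that is essentially a split graph (a large clique "around" $\{a,b,c\}$ together with a controlled set attached like the pendant vertex $d$), at which point Lemma~\ref{lem:p1+paw-split} finishes the proof.

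Concretely, the steps I would carry out, in order, are: (i) reduce to $G$ prime and connected; (ii) if $G$ is paw-free, invoke Lemma~\ref{lemma:p1+paw-chordal:paw-free} and Lemma~\ref{lem:tree}; (iii) otherwise fix a dominating induced paw $W=\{a,b,c,d\}$ and classify $V(G)\setminus W$ by neighbourhood trace on $W$; (iv) use chordality ($C_4$-freeness) and $(P_1+\overline{P_1+P_3})$-freeness to eliminate all but a few traces, so that up to deleting a bounded set and performing bounded bipartite complementations the graph becomes a split graph; (v) apply Lemma~\ref{lem:p1+paw-split} to conclude bounded clique-width, and hence, by Facts~\ref{fact:del-vert} and~\ref{fact:bip} and Lemma~\ref{lem:prime}, that $G$ has bounded clique-width.

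The main obstacle I anticipate is step (iv): bookkeeping the neighbourhood types and verifying, via carefully chosen forbidden configurations, that the surviving types are few and ``split-like''. In particular one must be careful that vertices dominated only through the pendant vertex $d$ (or only through a single triangle vertex) do not generate an uncontrolled structure, and that removing the few ``bad'' vertices really does leave a split graph rather than something merely chordal; making the reduction land inside the hypothesis of Lemma~\ref{lem:p1+paw-split} is the delicate part, and it is where the interplay between chordality and the two-component nature of $P_1+\overline{P_1+P_3}$ does the real work.
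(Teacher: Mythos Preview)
Your plan differs substantively from the paper's. The paper neither reduces to prime graphs nor partitions around a dominating paw. Instead it picks a simplicial vertex~$x$ (Lemma~\ref{l-simplicial}), sets $X=N(x)$ and $Y=V(G)\setminus(X\cup\{x\})$, and uses the key observation that, since $x$ is anti-complete to~$Y$, the graph $G[Y]$ is itself $\overline{P_1+P_3}$-free; by Lemma~\ref{lemma:p1+paw-chordal:paw-free} every component of $G[Y]$ is then a tree or a complete split graph. From there the argument splits on whether $G[Y]$ contains an induced $2P_2$, and in each case a bounded number of bipartite complementations separates $G$ into pieces that are $(P_1+\overline{P_1+P_3})$-free split graphs (handled by Lemma~\ref{lem:p1+paw-split}), disconnected $(P_1+\overline{P_1+P_3})$-free chordal graphs (Lemma~\ref{lemma:p1+paw-chordal:disconnected}), or complements of $2P_2$-free bipartite graphs (Lemma~\ref{lem:bipartite}). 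The simplicial vertex is what buys the structure: it forces the entire ``outside'' $Y$ to be paw-free in one stroke.

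Your step~(iv) is where I see a genuine risk, not just bookkeeping. The expectation that finitely many deletions and bipartite complementations land you in a \emph{single} split graph is stronger than what the paper actually establishes: even with the simplicial-vertex decomposition, non-split pieces (tree components of $G[Y]$, and in one subcase a piece whose complement is $2P_2$-free bipartite) must be carved off and bounded via other lemmas, not via Lemma~\ref{lem:p1+paw-split}. With a trace partition on a four-vertex dominating set, an individual trace class can still carry nontrivial paw-free chordal structure internally, and nothing in your outline explains why that structure collapses to a split partition; primeness constrains modules, not neighbourhood traces on a fixed finite set. So to make your route go through you would almost certainly need an analogue of the paper's ``the outside is paw-free'' step, and the proposal does not yet supply one.
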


\begin{proof}
Let~$G$ be a $(P_1+\nobreak \overline{P_1+P_3})$-free chordal graph.
Let~$x$ be a
simplicial vertex in~$G$, which exists by Lemma~\ref{l-simplicial}.
Let $X=N(x)$ and $Y=V(G) \setminus(X \cup \{x\})$.
Note that no vertex of~$Y$ is adjacent to~$x$, so~$G[Y]$ must be
$\overline{P_1+P_3}$-free. By Lemma~\ref{lemma:p1+paw-chordal:paw-free}, every connected
component of~$G[Y]$ is either a tree or complete split graph.
We say that a connected component of~$G[Y]$ is {\em trivial} if it consists of a single
vertex. Otherwise it is {\em non-trivial}.

We will distinguish between two cases depending on whether or not~$G[Y]$ is $2P_2$-free. In the first case we will need the following claim.

\clm{\label{clm:p1+paw-chordal:non-triv-Y}
Suppose~$G[Y]$ contains at least two non-trivial components and $y \in Y$
is in such a component. If~$y$ is adjacent to $z \in X$ then~$y$ is complete
to~$X$ or~$z$ is complete to~$Y$.}
In order to prove this claim, suppose that~$y$ is not complete to~$X$. We will show that~$z$ is
complete to~$Y$. Let~$D$
be the connected component of~$G[Y]$ containing~$y$. Since~$y$
is not complete to~$X$, there must be a vertex $z' \in X$ that is not adjacent
to~$y$. Now $G[z,x,y,z']$ is a $\overline{P_1+P_3}$. Since~$G$ is
$(P_1+\nobreak \overline{P_1+P_3})$-free, we find that $\{x,y,z,z'\}$ must dominate~$G$.
No vertex of $Y \setminus V(D)$ is adjacent to~$x$ or~$y$. Therefore $Y
\setminus V(D)$ is dominated by $\{z,z'\}$.

Let~$y_1y_2$ be an edge in some non-trivial component~$D'$ of~$Y$ other than~$D$ (recall that such a component exists by our assumption).
If~$y_1$ and~$y_2$ are both adjacent to~$z'$ then $G[y,z',y_1,x,y_2]$ would be a $P_1+\nobreak \overline{P_1+P_3}$.
Therefore we may assume without loss of generality that~$y_1$ is not adjacent
to~$z'$. Since $\{z,z'\}$ dominates~$y_1$, we find that~$y_1$ must be adjacent to~$z$. If~$y_2$ is not adjacent to~$z$ then, since $\{z,z'\}$ dominates~$y_2$, we find that~$y_2$ must be adjacent to~$z'$. In this case $G[z,z',y_2,y_1]$ would be
a~$C_4$, contradicting the fact that~$G$ is chordal. Hence both~$y_1$ and~$y_2$ are adjacent to~$z$. Now $G[z,y_1,x,y_2]$ induces a
$\overline{P_1+P_3}$. Therefore~$z$ is complete to $Y\setminus D'$, since~$G$
is $(P_1+\nobreak \overline{P_1+P_3})$-free.
Recall that~$y_1$ is adjacent to~$z$ and
non-adjacent to~$z'$. By the same argument,
with~$y_1$ taking the role of~$y$,
since~$D$ is a non-trivial
component of~$G[Y]$, we find that~$z$ is complete to $Y \setminus V(D)$. Hence,~$z$ is
complete to~$Y$. This completes the proof of Claim~\ref{clm:p1+paw-chordal:non-triv-Y}.

\medskip
\noindent
We are now ready to consider the two possible cases.

\thmcase{\label{case:p1+paw-chordal:2p2}
$G[Y]$ contains an induced~$2P_2$.}
First suppose all vertices of this~$2P_2$ are in the same connected component~$D$ of~$G[Y]$. Since split graphs are $2P_2$-free by Lemma~\ref{lem:split}, we find that~$D$ is a tree by
Lemma~\ref{lemma:p1+paw-chordal:paw-free}. In this case, by
Fact~\ref{fact:del-vert}, we may delete one vertex in~$D$ so that the two edges
of the~$2P_2$ are in two different connected components of~$G[Y]$. We may therefore assume without loss of generality
that~$G[Y]$ contains two non-trivial components.

Let~$Y'$ be the set of vertices in~$Y$ that are in non-trivial components of~$G[Y]$. Let~$Y''$ be the set of vertices in~$Y'$ that are complete to~$X$. Let~$X'$ be the set of vertices in~$X$ that are complete to~$Y$. It follows from
Claim~\ref{clm:p1+paw-chordal:non-triv-Y} that $X\setminus X'$ is anti-complete
to $Y'\setminus Y''$. We can apply two bipartite complementation operations,
one between~$X'$ and $Y'\cup\{x\}$ and the other between $Y''\cup\{x\}$ and $X \setminus X'$.
This will separate $G[Y'\cup\{x\}]$ from the rest of the graph. By
Lemma~\ref{lemma:p1+paw-chordal:disconnected}, we find that $G[Y'\cup\{x\}]$ has bounded clique-width.
Because $G[V\setminus (Y' \cup \{x\})]$ is a $(P_1+\nobreak \overline{P_1+P_3})$-free split graph,
it has bounded clique-width by Lemma~\ref{lem:p1+paw-split}. By
Fact~\ref{fact:bip}, we find that~$G$ has bounded clique-width. This completes
the proof of Case~\ref{case:p1+paw-chordal:2p2}.

\thmcase{$G[Y]$ is $2P_2$-free.}
If~$G[Y]$ contains only trivial components then~$G$
is a $(P_1+\nobreak \overline{P_1+P_3})$-free split
graph, so it has bounded clique-width by
Lemma~\ref{lem:p1+paw-split}. Since~$G[Y]$ is $2P_2$-free, it can contain at
most one non-trivial component. We may therefore assume that~$G[Y]$ contains
exactly one non-trivial component~$D$.

First suppose that~$D$ is a tree. In this case~$G[D]$ must be a~$P_2$,
$K_{1,r}$ or a double-star.
In all three cases, deleting at most two vertices in~$D$ (which we may do by
Fact~\ref{fact:del-vert}) makes~$Y$ an independent set, in which case we argue
as before.
By Lemma~\ref{lemma:p1+paw-chordal:paw-free}, we may therefore assume that~$G[Y]$ is a complete split graph.
We can partition~$V(D)$ into two sets~$D_B$ and~$D_W$ such that~$D_B$ is a clique,~$D_W$
is an independent set and~$D_B$ is complete to~$D_W$ in~$G$.
We may assume $|D_B| \geq 3$. Indeed, if $|D_B| \leq 2$ then by
Fact~\ref{fact:del-vert} we may delete at most two vertices to obtain a graph
in which~$G[Y]$ has only trivial components, in which case we may argue as
before.

Let~$X'$ be the set of vertices in~$X$ that have neighbours in~$D$. We claim
that~$X'$ is complete to $Y \setminus V(D)$. Suppose for contradiction that $x'
\in X'$ is not adjacent to some vertex $y \in Y \setminus V(D)$. Then~$x'$ cannot
have two neighbours $y_1,y_2 \in D_B$ otherwise $G[y,x',y_1,x,y_2]$ would be a
$P_1+\nobreak \overline{P_1+P_3}$. Let $y_1 \in V(D)$ be a neighbour of~$x'$. Since $|D_B|
\geq 3$,~$x'$ must have two non-neighbours $y_2,y_3 \in D_B$. However, now
$G[y,y_1,y_2,x',y_3]$ is a $P_1+\nobreak \overline{P_1+P_3}$. This contradiction means
that~$X'$ is indeed complete to $Y \setminus V(D)$.

As~$X$ is a clique and~$X'$ is complete to $Y\setminus V(D)$,
we find that $(Y \setminus V(D)) \cup (X \setminus X')$ is complete to~$X'$. By
Fact~\ref{fact:bip}, we may apply a bipartite complementation between $(Y
\setminus V(D)) \cup (X \setminus X')$ and~$X'$ and another between $X \setminus
X'$ and~$\{x\}$. This separates $G[(Y \setminus V(D)) \cup (X \setminus X')]$ from
the rest of the graph, which is $G[\{x\} \cup X' \cup V(D)]$. The first graph is a
$(P_1+\nobreak \overline{P_1+P_3})$-free split graph, so it has bounded clique-width by
Lemma~\ref{lem:p1+paw-split}. It remains to show that $G[\{x\} \cup X' \cup V(D)]$
has bounded clique-width.

We partition the vertices of~$X'$ as follows: let~$Z$ be the set of vertices in~$X'$ that are complete to~$D_B$, let~$Z'$ be the set of vertices in $X'
\setminus Z$ that are complete to~$D_W$ and let $Z''=X'\setminus (Z \cup Z')$.
Let~$D_W'$ be the set of vertices in~$D_W$ that are complete to $Z' \cup Z''$
and let $D_W''=D_W \setminus D_W'$.

We claim that~$D_W''$ is anti-complete to~$Z''$. Suppose for contradiction that
$w \in D_W''$ is adjacent to $z \in Z''$. By definition,~$w$ must be
non-adjacent to some vertex $z' \in Z' \cup Z''$ and~$z$ must be non-adjacent to
some vertex $w' \in D_W$. Furthermore,~$z$ must be non-adjacent to some vertex
$b \in D_B$. Note that~$w$ is not adjacent to~$w'$ since~$D_W$ is independent.
Moreover,~$z$ and~$z'$ are adjacent because~$X'$ is a clique, and~$b$ is adjacent to both~$w$ and~$w'$ as~$D$ is a complete split graph.
Then~$b$ and~$z'$ must be non-adjacent, otherwise $G[b,w,z,z']$ would be a~$C_4$. Then~$w'$ must be adjacent to~$z'$, otherwise $G[w',z,x,w,z']$ would be a
$P_1+\nobreak \overline{P_1+P_3}$. However, this means that $G[z',z,w,b,w']$ induces a~$C_5$, contradicting the fact that~$G$ is chordal. Therefore~$D_W''$ is indeed
anti-complete to~$Z''$.

By Fact~\ref{fact:del-vert}, we may delete the vertex~$x$ from~$G$. Now
$D_B\cup Z'$ is complete to $D_W' \cup D_W'' \cup Z$, while~$Z''$ is complete to
$D_W' \cup Z$ and anti-complete to~$D_W''$. By Fact~\ref{fact:bip}, we may
apply two bipartite complementations: one between $Z' \cup D_B$ and $D_W' \cup
D_W'' \cup Z$ and the other between~$Z''$ and $D_W' \cup Z$. The resulting
graph will be partitioned into two disjoint graphs: $G[D_W \cup Z]$ and
$G[D_B \cup Z'\cup Z'']$. The first of these is a
$(P_1+\nobreak \overline{P_1+P_3})$-free split graph, so it has bounded clique-width by
Lemma~\ref{lem:p1+paw-split}.
Taking the complement of $G[D_B \cup Z'\cup Z'']$ (which we may do by
Fact~\ref{fact:comp}) yields the bipartite graph $\overline{G}[D_B \cup\nobreak  Z'\cup\nobreak
Z'']$, which is $2P_2$-free since~$G$ is chordal and therefore has bounded
clique-width by Lemma~\ref{lem:bipartite}. We conclude that~$G$ has bounded
clique-width.
This completes the proof of Theorem~\ref{thm:p1+paw-chordal}.\qed
\end{proof}

\subsection{The Case $H=P_1+\overline{2P_1+P_2}$}
A graph $G=(V,E)$ is {\em quasi-diamond-free} if its vertex set~$V$ can be partitioned into
a clique~$V_1$ and some other (possibly empty) set $V_2=V\setminus V_1$ so that~$G[V_2]$ is a $\overline{2P_1+P_2}$-free chordal graph,
every connected component of which has at most one neighbour in~$V_1$.

We prove the following lemma, which will play an important role in our proof.

\begin{lemma}\label{lem:p1+diamond-chordal:2-conn}
The class of quasi-diamond-free graphs has bounded clique-width.
\end{lemma}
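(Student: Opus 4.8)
The plan is to strip away the clique $V_1$ with a single subgraph complementation and then observe that what remains falls apart into a disjoint union of pieces each of bounded clique-width, using only Facts~\ref{fact:del-vert} and~\ref{fact:comp} together with the theorem of Golumbic and Rotics~\cite{GR99b} that diamond-free chordal graphs have bounded clique-width.

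Concretely, let $G=(V,E)$ be quasi-diamond-free with partition $(V_1,V_2)$ as in the definition. Classify the connected components of $G[V_2]$: for $v\in V_1$ let $C_v\subseteq V_2$ be the union of those components whose unique neighbour in $V_1$ is $v$, and let $W\subseteq V_2$ be the union of those components having no neighbour in $V_1$. Then $V_2$ is the disjoint union of $W$ and the sets $C_v$, and since each of these is a union of connected components of $G[V_2]$, the sets $W$ and $\{C_v : v\in V_1\}$ are pairwise anti-complete, $W$ is anti-complete to $V_1$, and each $v\in V_1$ is anti-complete to $C_{v'}$ for $v'\ne v$ and to $W$; in particular the only edges of $G$ leaving $C_v$ go to $v$. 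First I would apply a subgraph complementation on the clique $V_1$ (Fact~\ref{fact:comp}); call the result $G'$. Complementation on $V_1$ does not affect edges incident to $V_2$, so in $G'$ the set $V_1$ is independent, no edge joins two distinct members of the family $\{W\}\cup\{C_v\cup\{v\} : v\in V_1\}$, and these sets partition $V$; hence
$$G' \;=\; G'[W]\ \oplus\ \bigoplus_{v\in V_1} G'[C_v\cup\{v\}].$$

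It remains to bound the clique-width of each summand by a universal constant. Now $G'[W]=G[W]$ and $G'[C_v]=G[C_v]$ are induced subgraphs of the diamond-free chordal graph $G[V_2]$, hence themselves diamond-free chordal, and so of bounded clique-width by~\cite{GR99b}. Each $G'[C_v\cup\{v\}]$ is therefore a diamond-free chordal graph with one extra vertex of arbitrary adjacency added; by Fact~\ref{fact:del-vert} (deleting that one vertex returns us to the class of diamond-free chordal graphs) the class of all such graphs has bounded clique-width. Since the clique-width of a disjoint union is the maximum of the clique-widths of its parts, $G'$ has clique-width bounded by a universal constant. Finally, $G$ is obtained from $G'$ by a single subgraph complementation (again on $V_1$, as this operation is an involution), so $G$ has bounded clique-width by Fact~\ref{fact:comp}.

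The only genuinely delicate point is the structural claim that after complementing $V_1$ the graph disconnects exactly into the pieces $G'[W]$ and $G'[C_v\cup\{v\}]$; this rests on the hypothesis that every component of $G[V_2]$ has at most one neighbour in $V_1$, which is what prevents two distinct sets $C_v,C_{v'}$ from being glued together through $V_1$ and guarantees that $C_v$'s only outside neighbour is $v$. Once this is in place, everything reduces to routine invocations of the black-box facts, and the degenerate cases $V_1=\emptyset$, $V_2=\emptyset$ or $C_v=\emptyset$ are all immediate, so I do not expect any further obstacle.
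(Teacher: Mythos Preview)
Your proof is correct, but it follows a different route from the paper's. The paper argues via the block decomposition (Fact~\ref{fact:2-conn}): every block of~$G$ is either the clique~$V_1$ itself (clique-width at most~$2$) or meets~$V_1$ in at most one vertex, since any vertex of~$V_1$ adjacent to a component of~$G[V_2]$ is a cut-vertex separating that component from the rest of~$V_1$; deleting that single vertex (Fact~\ref{fact:del-vert}) leaves a $\overline{2P_1+P_2}$-free chordal graph. You instead dissolve~$V_1$ by a single subgraph complementation (Fact~\ref{fact:comp}), after which the hypothesis ``each component of~$G[V_2]$ has at most one neighbour in~$V_1$'' forces the graph to split as the disjoint union $G'[W]\oplus\bigoplus_{v\in V_1}G'[C_v\cup\{v\}]$, and each summand is a diamond-free chordal graph plus at most one extra vertex. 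Both arguments land on the same endgame (diamond-free chordal graphs have bounded clique-width, then invoke Fact~\ref{fact:del-vert} for the stray vertex); your version trades Fact~\ref{fact:2-conn} for Fact~\ref{fact:comp} and is arguably more explicit about the global structure, while the paper's block argument is slightly shorter since it does not need to name the sets~$C_v$ and~$W$. A minor remark: the paper cites its own Theorem~\ref{thm:co-k13+2p1-chordal} for the bounded clique-width of diamond-free chordal graphs (as a special case), whereas you cite~\cite{GR99b} directly; either is fine.
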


\begin{proof}
Let~$G$ be a quasi-diamond-free graph with corresponding clique~$V_1$.
Let~$B$ be a block of~$G$. Then~$B$ is
either equal to~$V_1$ or contains at most one vertex of~$V_1$ with all its other
vertices belonging to~$V_2$. In the first case, the clique-width of~$B$ is at most~$2$. In the second case, we may delete the vertex of $B \cap V_1$ from~$B$ (if such a vertex exists)
by Fact~\ref{fact:del-vert}. This yields a $\overline{2P_1+P_2}$-free chordal graph~$G'$. By
Theorem~\ref{thm:co-k13+2p1-chordal}, we find that~$G'$ has bounded clique-width.
Therefore~$G$ has bounded clique-width by Fact~\ref{fact:2-conn}.\qed
\end{proof}
We are now ready to prove the following result.

\begin{theorem}\label{thm:p1+diamond-chordal}
The class of $(P_1+\nobreak \overline{2P_1+P_2})$-free chordal graphs has bounded clique-width.
\end{theorem}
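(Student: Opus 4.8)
The plan is to mimic the structure of the proof of Theorem~\ref{thm:p1+paw-chordal}: pick a simplicial vertex~$x$, set $X=N(x)$ (a clique, since~$x$ is simplicial) and $Y=V(G)\setminus(X\cup\{x\})$, and analyse how~$Y$ attaches to~$X$. The key observation is that every induced $\overline{2P_1+P_2}$ (diamond) in a $(P_1+\nobreak\overline{2P_1+P_2})$-free graph~$G$ is a dominating set of~$G$, and moreover, since~$x$ is anti-complete to~$Y$, the subgraph $G[Y]$ is itself $(P_1+\nobreak\overline{2P_1+P_2})$-free and, having no vertex to play the role of the isolated~$P_1$ against a diamond, is in fact $\overline{2P_1+P_2}$-free chordal — so Theorem~\ref{thm:co-k13+2p1-chordal} applies to $G[Y]$ directly. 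The goal is to reorganise~$G$, using a bounded number of vertex deletions (Fact~\ref{fact:del-vert}), bipartite complementations (Fact~\ref{fact:bip}) and possibly subgraph complementations (Fact~\ref{fact:comp}), into the disjoint union of a diamond-free chordal graph on (most of)~$Y$ and a graph that is close to a split graph on $X\cup\{x\}$, so that Lemma~\ref{lem:p1+diamond-chordal:2-conn} (quasi-diamond-freeness) and the known boundedness results finish the job.

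First I would handle the case where $G[Y]$ has few non-trivial components. Since $G[Y]$ is diamond-free chordal, each of its connected components — once we understand diamond-free chordal graphs well enough — attaches to~$X$ in a restricted way. The crucial structural claim to prove is: if~$D$ is a non-trivial connected component of~$G[Y]$ and $y\in D$ is adjacent to $z\in X$, then either~$y$ is complete to~$X$ or~$z$ is complete to~$Y$ (or to all of~$Y$ outside a bounded set) — this is the analogue of Claim~\ref{clm:p1+paw-chordal:non-triv-Y}, and it is proved by the same device: if~$y$ misses some $z'\in X$, then $\{x,y,z,z'\}$ induces a diamond (here using that~$z,z'\in N(x)$ are adjacent, so $G[x,z,z',y]$ with~$y$ adjacent to~$z$ but not~$z'$ is indeed $\overline{2P_1+P_2}$), hence dominates~$G$, and one pushes this through the other non-trivial component to force~$z$ complete to~$Y$. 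Using this claim, one separates a "complete" part of~$X$ from the rest via bipartite complementations, leaving $G[Y\cup\{x\}]$ — which is diamond-free chordal up to adding the isolated vertex~$x$, hence bounded by Theorem~\ref{thm:co-k13+2p1-chordal} and Fact~\ref{fact:del-vert} — and a split-like remainder that is $(P_1+\nobreak\overline{2P_1+P_2})$-free, hence bounded (a split graph has a clique and an independent set; with one forbidden induced subgraph this follows from Lemma~\ref{lem:split-classification} together with Facts~\ref{fact:comp} and~\ref{fact:bip}, since $\overline{2P_1+P_2}\ssi F_4,F_5$ or reduces to the bipartite classification of Lemma~\ref{lem:bipartite}).

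The main obstacle, as in the $P_1+\overline{P_1+P_3}$ case, is the sub-case where~$G[Y]$ has exactly one large non-trivial component~$D$: here one needs a finer analysis of how the simplicial-neighbourhood vertices of~$X$ see the clique part and the independent part of the diamond-free chordal graph~$D$, partitioning~$X'=\{x'\in X: x'$ has a neighbour in~$D\}$ by the pattern of adjacency to~$D$, and showing (via $C_4$- and $C_5$-avoidance, since~$G$ is chordal, and via the forbidden $P_1+\nobreak\overline{2P_1+P_2}$) that after a bounded number of deletions and bipartite complementations the graph splits into a diamond-free chordal piece (bounded by Theorem~\ref{thm:co-k13+2p1-chordal}) and a $2P_2$-free bipartite piece (bounded by Lemma~\ref{lem:bipartite}, after complementing by Fact~\ref{fact:comp}). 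I expect this case analysis — in particular, verifying that the various "anti-complete after complementation" claims hold and that only boundedly many exceptional vertices need to be deleted — to be the most delicate part, and I would lean on Lemma~\ref{lem:p1+diamond-chordal:2-conn} to absorb the block-decomposition bookkeeping rather than tracking connectivity by hand. Throughout, the recurring trick is that a single diamond, once present, dominates the whole graph, which sharply limits the interaction between~$X$ and the non-trivial components of~$G[Y]$.
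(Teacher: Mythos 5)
There is a genuine gap at the heart of your plan: the claimed analogue of Claim~\ref{clm:p1+paw-chordal:non-triv-Y} rests on a miscounted induced subgraph. If $y\in Y$ is adjacent to $z\in X$ but not to $z'\in X$ (and, being in $Y$, not to $x$), then $G[x,z,z',y]$ has exactly four edges ($xz$, $xz'$, $zz'$, $yz$): it is a paw $\overline{P_1+P_3}$, not a diamond $\overline{2P_1+P_2}$, which has five edges. So this configuration is not forbidden-minus-$P_1$ and does not dominate $G$; the ``one neighbour, one non-neighbour in the clique'' device that drives the $P_1+\overline{P_1+P_3}$ proof simply produces nothing here. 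To manufacture a dominating diamond through the simplicial vertex you need a vertex of $Y$ with \emph{two} neighbours $z,z'$ in $X$ (then $G[x,z,z',y]$ is a diamond, missing only the edge $xy$). This is exactly the pivot of the paper's proof: it sets $\Delta=\max_{u\in L_2}|N_{L_1}(u)|$, disposes of $\Delta=1$ via the quasi-diamond-free Lemma~\ref{lem:p1+diamond-chordal:2-conn}, and for $\Delta\geq 2$ only concludes that $V=\{v\}\cup L_1\cup L_2\cup L_3$ (a three-layer structure, not your two-layer $X$/$Y$ split). The workhorse claim it then proves is of a different shape from yours: if $s,t\in L_2\cup L_3$ are non-adjacent then $s$ is adjacent to all but at most one vertex of $N_{L_1}(t)$, obtained from the five-vertex configuration $G[s,a,b,t,v]$ with $a,b\in N_{L_1}(t)$.

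Because your key claim fails, the downstream separation of $G[Y\cup\{x\}]$ from a ``split-like remainder'' is not justified, and the remaining plan is too schematic to recover it: the hard content of the paper's proof is precisely the case analysis on $\Delta$ versus $|L_1|$ (the cases $\Delta\leq|L_1|-3$ and $\Delta\geq|L_1|-2$, the latter with three subcases on whether $x$ dominates $L_2$ and whether $G[L_2]$ is a clique), together with the handling of the third layer $L_3$, none of which appears in your outline. The parts of your proposal that are sound --- $G[Y]$ (equivalently $G[L_2\cup L_3]$) is $\overline{2P_1+P_2}$-free chordal and hence covered by Theorem~\ref{thm:co-k13+2p1-chordal}, any induced diamond dominates $G$, and Lemma~\ref{lem:p1+diamond-chordal:2-conn} absorbs the block bookkeeping --- are all ingredients the paper also uses, but they do not by themselves bridge the gap left by the paw-versus-diamond confusion.
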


\begin{proof}
Let $G=(V,E)$ be a $(P_1+\nobreak \overline{2P_1+P_2})$-free chordal graph.
We may assume without loss of generality that~$G$ is connected. Let~$v$ be a simplicial vertex in~$G$,
which exists by Lemma~\ref{l-simplicial}.
Let $L_1=N(v)$, $L_2=N(L_1)\setminus (L_1\cup \{v\})$ and $L_3=N(L_2)\setminus (L_2\cup L_1\cup \{v\})$.
Note that~$L_1$ is a clique, because~$v$ is simplicial.

\clm{\label{clm:p1+diamond-chordal:st-in-L23-nonadj}
If $s,t \in L_2 \cup L_3$ are non-adjacent then~$s$ is
adjacent to all but at most one vertex of~$N_{L_1}(t)$.}
Indeed, suppose for
contradiction that~$s$ is non-adjacent to distinct vertices $a,b \in N_{L_1}(t)$. Then
$G[s,a,b,t,v]$ is a $P_1+\nobreak \overline{2P_1+P_2}$, a contradiction.

\medskip
\noindent
Let~$x$ be a vertex of~$L_2$ such that $\Delta = |N_{L_1}(x)|$ is maximised.
Note that $G[V\setminus (\{v\}\cup L_1)]$ is $\overline{2P_1+P_2}$-free and that $\{v\} \cup L_1$ is a clique. Hence,
if $\Delta=1$ then we can apply Lemma~\ref{lem:p1+diamond-chordal:2-conn} to~$G$ with $V_1= \{v\} \cup L_1$. Thus,
from now on we may assume that $\Delta\ge 2$.
This means that~$x$ and~$v$ have at least two common neighbours in~$L_1$.
Hence, as~$G$ is $(P_1+\nobreak \overline{2P_1+P_2})$-free, we find that $$V=\{v\}\cup L_1 \cup L_2 \cup L_3.$$

\clm{\label{clm:p1+diamond-chordal:v=1}
Without loss of generality, every vertex in~$L_1$ has a neighbour in~$L_2$.}
In order to show this, let $L_1'\subseteq L_1$ be the set of vertices with no neighbour in~$L_2$.
We apply a bipartite complementation between $(L_1\setminus L_1')\cup \{v\}$ and~$L_1'$. We may do
so due to Fact~\ref{fact:bip}. As~$G[L_1']$ is a clique, it has clique-width at most~$2$, and we are left to consider
$G[V\setminus L_1']$.

\medskip
\noindent
As $ \Delta=|N_{L_1}(x)|$, we find that $\Delta\leq |L_1|$. We now consider two cases, depending on the difference between~$|L_1|$ and~$\Delta$.

\thmcase{\label{case:p1+diamond-chordal:delta-big} $\Delta\leq |L_1|-3$.}
Let~$z$ be an arbitrary vertex in $L_1\setminus N_{L_1}(x)$.
Let~$A_z$ be the set of neighbours of~$z$ in~$L_2$.
By Claim~\ref{clm:p1+diamond-chordal:v=1}, we find that $A_z\neq \emptyset$.

Let~$u$ be an arbitrary vertex in~$A_z$.
By our choice of~$x$, we have that $|N_{L_1}(u)| \leq |N_{L_1}(x)|$, and so~$u$ must have a non-neighbour $y_u \in N_{L_1}(x)$.
Then~$u$ is non-adjacent to~$x$ otherwise $G[u,x,y_u,z]$ would be a~$C_4$,
contradicting the fact that~$G$ is chordal. Now by Claim~\ref{clm:p1+diamond-chordal:st-in-L23-nonadj},
we find that
$$N_{L_1}(u)=(N_{L_1}(x)\setminus \{y_u\}) \cup \{z\}.$$
The above implies that $A_z\cap A_{z'}=\emptyset$ for all $z,z'\in L_1\setminus N_{L_1}(x)$.

We now show that $y_u=y_{u'}$ for any two vertices $u\in A_z$ and $u'\in A_{z'}$ and for any two (not necessarily distinct) vertices $z,z'\in L_1\setminus N_{L_1}(x)$.
First, suppose $z,z' \in L_1\setminus N_{L_1}(x)$ are distinct. Such vertices exist since $\Delta\leq |L_1|-3$. Let $u\in A_z$ and $u'\in A_{z'}$. We may assume such vertices exist since~$A_z$ and~$A_{z'}$ are not empty because of Claim~\ref{clm:p1+diamond-chordal:v=1}.
If $y_u \neq y_{u'}$ then, since 
$y_u,y_{u'} \in N_{L_1}(x)$ and $z,z' \in L_1 \setminus N_{L_1}(x)$, we find
that $y_u, y_{u'},z$ and $z'$ are distinct vertices in~$L_1$.
Since $N_{L_1}(u)=(N_{L_1}(x)\setminus \{y_u\}) \cup
\{z\}$ and $N_{L_1}(u')=(N_{L_1}(x)\setminus \{y_{u'}\}) \cup \{z'\}$, we find that~$u$ is adjacent
to~$y_{u'}$ and~$z$, but~$u'$ is non-adjacent to both~$y_{u'}$ and~$z$.
Therefore Claim~\ref{clm:p1+diamond-chordal:st-in-L23-nonadj} implies that~$u$ and~$u'$ must be adjacent;
however then $G[u,u',y_u,y_{u'}]$ is a~$C_4$, a contradiction.
Hence, $y_u=y_{u'}$.
Since the $u$-vertices in different sets~$A_z$ and~$A_{z'}$ share the same~$y$ vertex, and there are at least two such sets, this immediately implies that $u$-vertices from the same set~$A_z$ also share the same $y$-vertex.
Thus there exists a vertex $y^*\in N_{L_1}(x)$
such that for every $u\in A_z$ and every $z\in L_1\setminus N_{L_1}(x)$, we have
$$N_{L_1}(u)=(N_{L_1}(x)\setminus \{y^*\}) \cup \{z\}.$$
Let $A=N_{L_1}(x)\setminus \{y^*\}$. Let~$A_{y^*}$ be the set of vertices in~$L_2$ whose neighbourhood in~$L_1$ is~$N_{L_1}(x)$ (so $x\in A_{y^*})$.
Now for each vertex $z \in L_1\setminus A$ (including the case where $z=y^*$) and every $u\in A_z$, we have
$$N_{L_1}(u)=A \cup \{z\}.$$
Let~$X$ be the set of vertices~$u \in L_2 \cup L_3$
whose neighbourhood in~$L_1$ is properly contained in~$N_{L_1}(x)$, that is, for which
$N_{L_1}(u) \subsetneq N_{L_1}(x)=A\cup \{y^*\}$.
Note that, as no vertex in~$L_3$ has a neighbour in~$L_1$, we have $L_3\subseteq X$. Also note
that the sets~$X$ and~$A_z$, $z\in L_1\setminus A$ form a partition of $L_2\cup L_3$.

Consider two distinct vertices $w_1,w_2 \in L_1 \setminus A$.
Note that~$w_1$ and~$w_2$ are not necessarily distinct from~$y^*$, but
at least one of $w_1,w_2$ is distinct from~$y^*$.
Also note that if a vertex $u \in X$ is adjacent to~$w_i$ $(i=1,2)$
then $w_i=y^*$.

Suppose there is a path~$P$ in $G[L_2 \cup L_3]$ from some
vertex $t_1 \in A_{w_1}$ to some vertex $t_2 \in A_{w_2}$. We shall choose~$P$ such that~$|V(P)|$ is minimum, where the minimum is taken over all choices of
$w_1,w_2,t_1,t_2$ and~$P$. It follows from the minimality of~$P$ that
$V(P)\setminus \{t_1,t_2\}\subseteq X$.
Moreover, since $N_{L_1}(t_1)=A \cup \{w_1\}$ and $N_{L_1}(t_2)=A \cup \{w_2\}$, it follows that~$w_1$ and~$w_2$ are non-adjacent to~$t_2$ and~$t_1$, respectively.
Thus,~$t_1$ and~$t_2$ must be non-adjacent, as otherwise $G[t_1,t_2,w_2,w_1]$ would be a~$C_4$.

Without loss of generality, we may assume $w_1 \neq y^*$.
Since $V(P)\setminus \{t_1,t_2\}\subseteq X$, we find that~$w_1$ must be anti-complete to $V(P) \setminus \{t_1,t_2\}$.
Let~$t_3$ be the neighbour of~$w_2$ on~$V(P)$ that is nearest to~$t_1$.
(If~$w_2$ has no neighbours in $V(P) \setminus \{t_1,t_2\}$ then $t_3=t_2$.)
Note that $t_3 \neq t_1$, since~$w_2$ is not adjacent to~$t_1$.
Let~$P'$ be the part of the path~$P$ from~$t_1$ to~$t_3$.
The only neighbour of~$w_1$ in~$V(P')$ is~$t_1$.
The only neighbour of~$w_2$ in~$V(P')$ is~$t_3$.
Since~$w_1$ and~$w_2$ are adjacent and~$P'$ is an induced path on at least two vertices it follows that $G[V(P') \cup \{w_1,w_2\}]$ is a cycle on at least four vertices, contradicting the fact that~$G$ is chordal.
We have so far shown that
for any two distinct vertices $w_1,w_2 \in L_1 \setminus A$, there is no path in
$G[L_2 \cup L_3]$ from any vertex of~$A_{w_1}$ to any vertex of~$A_{w_2}$.

Now suppose that $u\in X$. As $\Delta\leq |L_1|-3$, there exist three
pairwise distinct vertices $w_1,w_2,w_3 \in L_1 \setminus A$.
By Claim~\ref{clm:p1+diamond-chordal:v=1}, there exist three vertices $t_1 \in
A_{w_1}$, $t_2 \in A_{w_2}$ and $t_3 \in A_{w_3}$. Because the sets~$A_{w_i}$ are mutually disjoint,
$t_1,t_2$ and~$t_3$ are also distinct. It follows from the conclusion above
that~$u$ can be adjacent to at most one of $t_1,t_2$ and~$t_3$.
Without loss of generality, assume that~$u$ is non-adjacent to~$t_1$ and~$t_2$
and that $w_1\neq y^*$. Then~$u$ cannot be adjacent to~$w_1$.
By Claim~\ref{clm:p1+diamond-chordal:st-in-L23-nonadj},~$u$ must be adjacent to every vertex of~$A$.
Since $N_{L_1}(u) \subsetneq N_{L_1}(x)$, it follows that $N_{L_1}(u)=A$.
Since~$u$ was an arbitrary vertex in~$X$, together with the observations made earlier, this shows that every vertex in $L_2\cup L_3$
is adjacent to every vertex of~$A$ and at most one other vertex in~$L_1$.
Since $\Delta \geq 2$, we have that $|A| \geq 1$, and so~$L_3$ must be empty.
By Fact~\ref{fact:bip}, we may apply a bipartite complementation between~$A$ and~$L_2$
after which we may apply Lemma~\ref{lem:p1+diamond-chordal:2-conn}.
This completes the proof of Case~\ref{case:p1+diamond-chordal:delta-big}.

\thmcase{\label{case:p1+diamond-chordal:delta-small} $\Delta\geq |L_1|-2$.}
Since $|L_1| \geq |N_{L_1}(x)|=\Delta$, there are at most two vertices in
$L_1\setminus N_{L_1}(x)$. By Fact~\ref{fact:del-vert}, we may delete these
vertices, if they exist. Note that this changes neither the value of~$\Delta$
nor the choice of~$x$. Therefore we may assume that $L_1 = N_{L_1}(x)$.

Then $N_{L_1}(w) \subseteq N_{L_1}(x)$ for all $w \in L_2$.
If $\Delta=|L_1| \leq 3$
then by deleting at most two vertices of~$L_1$ (which we may do by Fact~\ref{fact:del-vert})
we obtain a new graph for which we may apply Lemma~\ref{lem:p1+diamond-chordal:2-conn}.
We may therefore assume without loss of generality that $\Delta \geq 4$.

We distinguish three subcases depending on whether or not~$x$ dominates~$L_2$
and whether or not~$G[L_2]$ is a clique.

\thmsubcase{\label{subcase:p1+diamond-chordal:x-not-dom-L_2} $x$ does not dominate~$L_2$.}
Let $y \in L_2$ be a non-neighbour of~$x$. Recall
that $N_{L_1}(x)=L_1$. By Claim~\ref{clm:p1+diamond-chordal:st-in-L23-nonadj} we find that~$y$ must be adjacent to all but at most one vertex of~$L_1$. If~$y$ is not adjacent
to some vertex of~$L_1$, we may delete this vertex by Fact~\ref{fact:del-vert}.
We may therefore assume that $\Delta \geq 3$ and that~$y$ is complete to~$L_1$.

Suppose $w \in L_2$ has two non-neighbours $a,b \in N_{L_1}(x)$.
As $\{x,y\}$ is complete to~$L_1$, it follows that~$w$ is adjacent to
both~$x$ and~$y$ by Claim~\ref{clm:p1+diamond-chordal:st-in-L23-nonadj}. However, then $G[x,w,y,a]$ is a~$C_4$, contradicting
the fact that~$G$ is chordal. Therefore, every vertex in~$L_2$ is adjacent to
all but at most one vertex of~$L_1$. In particular,
as $\Delta\geq 3$,
every vertex in~$L_2$ has at least two neighbours in~$L_1$. This fact,
together with the fact that no vertex in~$L_3$ has neighbours in~$L_1$ and
Claim~\ref{clm:p1+diamond-chordal:st-in-L23-nonadj}, implies that
every vertex of~$L_2$ is adjacent to every vertex of~$L_3$.
By applying a bipartite complementation between~$L_2$ and~$L_3$,
we separate~$G[L_3]$ from
$G[V\setminus L_3]=G[\{v\}\cup L_1 \cup L_2]$.
Note that~$G[L_3]$ is a $\overline{2P_1+P_2}$-free chordal graph, so it has bounded
clique-width by Theorem~\ref{thm:co-k13+2p1-chordal}.
By Fact~\ref{fact:bip}, we may therefore assume that $L_3=\emptyset$.

Let~$X$ be the set of vertices in~$L_2$ that are complete to~$L_1$.
For $z \in L_1$, let~$U_z$ be the set of vertices in~$L_2$ that
are complete to $L_1 \setminus \{z\}$ and non-adjacent to~$z$.
As every vertex in~$L_2$ is adjacent to all but at most one vertex of~$L_1$, we find that the sets~$X$ and~$U_z$, $z\in L_1$, form a partition of vertices of~$L_2$.

Suppose there are at most six vertices $z \in L_1$ such that~$U_z$ is not empty.
By Facts~\ref{fact:del-vert} and~\ref{fact:bip}, we may apply a
bipartite complementation between~$L_1$ and~$L_2$ and then delete these vertices.
In the resulting graph, no vertex of~$L_2$ has a neighbour in~$L_1$ and we
can apply Lemma~\ref{lem:p1+diamond-chordal:2-conn}. We may therefore assume
that there are at least seven vertices $z \in L_1$ such that~$U_z$ is not empty.

Consider two distinct vertices $z_1,z_2 \in L_1$. We claim that~$U_{z_1}$ must be
anti-complete to~$U_{z_2}$. Indeed, if $y_1 \in U_{z_1}$ were adjacent to $y_2
\in U_{z_2}$ then $G[y_1,y_2,z_1,z_2]$ would be a~$C_4$, contradicting the fact
that~$G$ is chordal.

We will now show that by deleting at most one vertex from~$L_2$ (which we may
do by Fact~\ref{fact:del-vert}), we can make~$G[L_2]$ into a $P_3$-free graph.
Indeed, suppose that~$G[L_2]$ contains an induced~$P_3$ on vertices $v_1,v_2,v_3$.

First, consider a vertex $z \in L_1$ such that $v_1,v_2,v_3 \notin U_z$ and $U_z$ is non-empty. Suppose
$y \in U_z$. Then~$y$ must have at least one neighbour in $\{v_1,v_2,v_3\}$,
otherwise $G[y,v_2,z,v_1,v_3]$ would be a $P_1+\nobreak \overline{2P_1+P_2}$. Since
there are at least seven non-empty sets~$U_z$, there must be at least four non-empty sets~$U_z$
that do not contain a vertex in $\{v_1,v_2,v_3\}$.
Therefore there must be two sets~$U_{z_1}$ and~$U_{z_2}$ containing vertices~$y_1$ and~$y_2$, respectively, such that~$y_1$ and~$y_2$ are adjacent to the same vertex
in $\{v_1,v_2,v_3\}$, say~$v_i$.
Since~$U_{z_1}$ and~$U_{z_2}$ are anti-complete,~$y_1$ and~$y_2$ are non-adjacent. Hence,
$G[y_1,v_i,y_2]$ is a~$P_3$.
Also note that $v_i \in X$ since~$v_i$ has neighbours in both~$U_{z_1}$ and~$U_{z_2}$.

Now
let $z_3 \in L_1 \setminus \{z_1,z_2\}$
and suppose $y_3 \in U_{z_3}$. By the same argument as above,~$y_3$ must have a
neighbour in $\{y_1,v_i,y_2\}$. Moreover,
as~$U_{z_3}$ is anti-complete to both~$U_{z_1}$ and~$U_{z_2}$,
we find that~$y_3$ is non-adjacent to both~$y_1$ and~$y_2$. Hence,~$y_3$ must be adjacent to~$v_i$.
Now choose $z_4,z_5 \in L_1 \setminus \{z_1,z_2\}$ with $y_4 \in U_{z_4}$ and $y_5 \in
U_{z_5}$. Such vertices exist by our earlier assumption. By the same argument,
$G[y_4,v_i,y_5]$ is a~$P_3$, so~$v_i$ is complete to~$U_{z_3}$ for every
$z_3 \in L_1 \setminus \{z_4,z_5\}$. Hence,~$v_i$ is complete to~$U_z$ for every
$z \in L_1$. This implies that, if~$G[L_2]$ contains a~$P_3$, then some vertex of this~$P_3$
is adjacent to every vertex of every set~$U_z$.

Suppose that there exist two vertices $v',v'' \in L_2$ that are both complete to every vertex of every set~$U_z$.
Choose $y_1 \in U_{z_1}$ and $y_2 \in U_{z_2}$ with~$z_1$ and~$z_2$ distinct.
Note that~$y_1$ and~$y_2$ are non-adjacent and so $y_i\notin \{v',v''\}$ for $i=1,2$.
So, $\{v',v''\}$ is complete to $\{y_1,y_2\}$ by the assumption on~$v'$ and~$v''$.
If~$v'$ and~$v''$ are non-adjacent, then $G[v',y_1,v'',y_2]$ is a~$C_4$; if~$v'$ and~$v''$ are adjacent, then $G[v,v',v'',y_1,y_2]$
is a $P_1+\nobreak \overline{2P_1+P_2}$. In either case we have a contradiction, since~$G$
is a $(P_1+\nobreak \overline{2P_1+P_2})$-free chordal graph.
We have thus showed that there exists
at most one vertex that is complete to all~$U_z$.
This implies that if~$G[L_2]$ contains an induced~$P_3$ then there is a unique vertex in~$L_2$ that is present in every induced~$P_3$ in~$G[L_2]$.

By Fact~\ref{fact:del-vert} we may delete the vertex that is on every induced~$P_3$ (if~$G$ is not~$P_3$-free already). In this way we change~$G[L_2]$ into a $P_3$-free graph, which means that each connected component of~$G[L_2]$ is now a complete graph.

Consider an arbitrary connected component~$K$ of~$G[L_2]$.
As every vertex in~$L_2$, and thus in~$V(K)$, is adjacent to all but at most one vertex in~$L_1$ and
as~$G$ is $C_4$-free, we find that either~$V(K)$ is complete to~$L_1$ or to $L_1\setminus z$ for some
$z\in L_1$. Let~$G'$ be the graph obtained from~$G$ by performing a bipartite
complementation between~$L_1$ and~$L_2$.
Then every component in~$G[L_2]$ has, in~$G'$, at most one neighbour in~$L_1$.
Case~\ref{subcase:p1+diamond-chordal:x-not-dom-L_2} now follows directly from Fact~\ref{fact:bip}
and Lemma~\ref{lem:p1+diamond-chordal:2-conn}.

\thmsubcase{\label{subcase:p1+diamond-chordal:L_2-is-a-clique} $G[L_2]$ is a clique.}
In this case we may assume that there is a vertex $x' \in L_2 \setminus \{x\}$ that
has at least two neighbours in~$L_1$, as otherwise we could delete~$x$
(which we may do by Fact~\ref{fact:del-vert}) and apply Lemma~\ref{lem:p1+diamond-chordal:2-conn}.
Recall that, by definition,~$L_3$ has no neighbours in~$L_1$. Because both~$x$ and~$x'$ have at least
two neighbours in~$L_1$, Claim~\ref{clm:p1+diamond-chordal:st-in-L23-nonadj}
tells us that $\{x,x'\}$ is complete to~$L_3$.

If $y \in L_2$ is non-adjacent to $z \in L_3$ then $G[v,x,x',y,z]$ is a $P_1+\nobreak \overline{2P_1+P_2}$,
since~$G[L_2]$ is a clique. So,~$L_2$ is complete to~$L_3$.
By Fact~\ref{fact:bip} we may apply a bipartite complementation between~$L_2$ and~$L_3$, after which~$G[L_3]$ will be disconnected from the rest of the graph (since $V=\{v\}\cup L_1\cup L_2 \cup L_3$ and~$L_3$ is anti-complete to $\{v\}\cup L_1$).
Since~$G[L_3]$ is a $\overline{2P_1+P_2}$-free chordal graph, it has bounded clique-width.
So, it remains to show that $G[V\setminus L_3]=G[\{v\}\cup L_1\cup L_2]$ has bounded clique-width.
Now $G[\{v\} \cup L_1]$ and~$G[L_2]$ are cliques. Moreover, as~$G$ is chordal,~$G$ is $C_4$-free.
Applying a complementation to the whole graph (which we may do by Fact~\ref{fact:comp}) gives a $2P_2$-free bipartite graph, which has bounded clique-width by Lemma~\ref{lem:bipartite}.

\thmsubcase{ $x$ dominates~$L_2$, but~$G[L_2]$ is not a clique.}
Since~$G[L_2]$ is $\overline{2P_1+P_2}$-free, $G[L_2 \setminus \{x\}]$ must be
$P_3$-free. In other words, each connected component of $G[L_2\setminus
\{x\}]$ is a clique.

Since~$G[L_2]$ is not a clique, $G[L_2 \setminus \{x\}]$ must contain at least
two cliques, so deleting~$x$ from~$G$ (which we may do by
Fact~\ref{fact:del-vert}) means that~$G[L_2]$ no longer has a dominating
vertex. Note that this deletion may change the value of~$\Delta$. By the
same arguments as at the start of the proof, we may assume that $\Delta \geq 2$
and so $V(G)=\{v\} \cup L_1 \cup L_2 \cup L_3$. Again, by
Claim~\ref{clm:p1+diamond-chordal:v=1}, we may assume that every vertex of~$L_1$ has a neighbour in~$L_2$ in~$G$. Then if $\Delta\leq |L_1|-3$, we may
apply Case~\ref{case:p1+diamond-chordal:delta-big}. We may therefore assume
that $\Delta\geq |L_1|-2$. By the same arguments as at the start of
Case~\ref{case:p1+diamond-chordal:delta-small}, we may assume that
$|L_1|=\Delta$ and $\Delta \geq 4$. To make this assumption, we may have to
delete vertices from~$L_1$, which could cause vertices that were in~$L_2$
previously to now be in~$L_3$ for this modified graph. However, at no point
above do we add vertices to~$L_2$, so it is still the case that every component
of~$G[L_2]$ is a clique. Therefore
Case~\ref{subcase:p1+diamond-chordal:L_2-is-a-clique} or
Case~\ref{subcase:p1+diamond-chordal:x-not-dom-L_2} applies, depending on
whether~$G[L_2]$ now contains one or more components, respectively. This
completes the proof of Theorem~\ref{thm:p1+diamond-chordal}.\qed
\end{proof}

\subsection{The Case $H=\overline{S_{1,1,2}}$}\label{sec:co-chair}

We now show that the clique-width of $\overline{S_{1,1,2}}$-free chordal graphs
is bounded. Switching to the complement, we study $S_{1,1,2}$-free co-chordal
graphs, which are a subclass of $(2P_2,C_5,S_{1,1,2})$-free graphs.
First, in Lemma~\ref{lem:thinspider}, we show that prime $(2P_2,C_5,S_{1,1,2})$-free graphs are thin spiders if they contain an induced net.
We then use this lemma in combination with the two prime extension lemmas from Section~\ref{sec:prelim} (Lemmas~\ref{lem:fig:diamond-prime-ext} and~\ref{lem:cogemprimeext})
to provide, in Lemma~\ref{lem:primecochairfrchordal}, a structural description of prime $\overline{S_{1,1,2}}$-free chordal graphs.
Finally, in Theorem~\ref{thm:co-chair}, we use this structural description
to show boundedness of the clique-width of $\overline{S_{1,1,2}}$-free chordal graphs.

\begin{lemma}\label{lem:thinspider}
If a prime $(2P_2,C_5,S_{1,1,2})$-free graph~$G$ contains an induced subgraph
isomorphic to the net (see \figurename~\ref{fig:co-gem-prime-ext}) then~$G$ is a thin spider.
\end{lemma}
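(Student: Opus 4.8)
The plan is to reduce the statement to Lemma~\ref{lem:chair-split-spider}, which says that a prime $S_{1,1,2}$-free split graph is a spider. Since $G$ is $(2P_2,C_5)$-free, by Lemma~\ref{lem:split} it suffices to show that $G$ is $C_4$-free; then $G$ is split, and as it is also prime and $S_{1,1,2}$-free, Lemma~\ref{lem:chair-split-spider} tells us that $G$ is a spider. To see that it must be a \emph{thin} spider, recall that $G$ contains an induced net. A prime thick spider contains no induced net: writing such a graph as $\overline{H}$, where $H$ is a prime thin spider with partition $(K,I,R)$ (so $|R|\le 1$), one checks that in $H$ no vertex is adjacent to exactly two vertices of any triangle — every triangle of $H$ lies in the clique $K\cup R$, and every vertex of $H$ is adjacent to all, one, or none of $K$ and to all or none of $R$ — whereas the complement of the net (the $3$-sun) does contain a vertex adjacent to exactly two vertices of a triangle. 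Hence an induced net in $\overline H$ would force an induced $3$-sun in $H$, a contradiction, so $G$ is a thin spider.

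It remains to prove that $G$ is $C_4$-free, and here the hypothesis that $G$ is prime is essential: the graph obtained from a net by adding an induced $C_4$ that is complete to the central triangle and anti-complete to the three pendant vertices is $(2P_2,C_5,S_{1,1,2})$-free and contains both a net and a $C_4$, but is not prime (the $C_4$ is a non-trivial module). So I would argue by contradiction. Assume $\{a,b,c,d\}$ induces a $C_4$ with edges $ab,bc,cd,da$ and non-edges $ac,bd$, and let $N$ be the induced net, with central triangle $\{t_1,t_2,t_3\}$ and pendants $p_1,p_2,p_3$ (where $t_ip_i\in E(G)$). First I would use $2P_2$-freeness: each edge of the $C_4$ must have an edge joining it to every edge of $N$ disjoint from it, and symmetrically for the edges of $N$, which severely restricts how $\{a,b,c,d\}$ attaches to $V(N)$ and how the $p_i$ attach to the $C_4$. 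Then I would eliminate the surviving attachment patterns using $C_5$-freeness and $S_{1,1,2}$-freeness — a $C_4$ together with a vertex that is neither complete nor anti-complete to it tends to create a $C_5$ or an $S_{1,1,2}$ — and for the few remaining configurations I would show that $\{a,b,c,d\}$, or some other subset of $V(N)\cup\{a,b,c,d\}$, is a non-trivial module, contradicting primality of $G$.

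The main obstacle is precisely this $C_4$-freeness argument. The case analysis is the technical core: I would organise it by pinning down $N(v)\cap\{a,b,c,d\}$ for each $v\in\{t_1,t_2,t_3,p_1,p_2,p_3\}$ up to the automorphisms of $N$ and of the $C_4$, and then checking that each outcome either produces a forbidden induced subgraph somewhere in $G$ or exhibits a non-trivial module. The delicate point is that one cannot finish purely locally — as the module example above shows, neither the net nor primality can be dispensed with — so the local forbidden-subgraph obstructions have to be combined with a global module argument. Once $C_4$-freeness is established, the conclusion is immediate from Lemmas~\ref{lem:split} and~\ref{lem:chair-split-spider} together with the observation about thick spiders in the first paragraph.
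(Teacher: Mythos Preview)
Your endpoint coincides with the paper's: both arguments finish by verifying that $G$ is a split graph and then invoking Lemma~\ref{lem:chair-split-spider}. The difference is in how splitness is reached. The paper does not argue by contradiction from a hypothetical $C_4$; instead it classifies every vertex of $G$ according to its neighbourhood in the fixed net~$N$, shows via a sequence of claims that most neighbourhood types are impossible (using the forbidden subgraphs and primality), and proves that the surviving vertices partition into a clique $\{b_1,b_2,b_3\}\cup Mid_3$ and an independent set $\{a_1,a_2,a_3\}\cup Z_1$. Splitness---and hence $C_4$-freeness---then drops out as a consequence rather than being the target.

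Your route via direct $C_4$-freeness is plausible but has a real gap at the module step. You propose that, after the local case analysis on $V(N)\cup\{a,b,c,d\}$, the surviving configurations will exhibit a non-trivial module contained in $V(N)\cup\{a,b,c,d\}$. But being a module is a global condition: you must check that \emph{every} vertex of $G$ is complete or anti-complete to the candidate set, not just the six vertices of~$N$. Your own example (a $C_4$ complete to the central triangle and anti-complete to the pendants) illustrates this: the $C_4$ is a module there only because no further vertices are present to distinguish its elements. In a larger $G$ the same local picture can occur with additional vertices breaking the module, so primality is not yet contradicted. To close the argument you would have to analyse how an arbitrary vertex attaches simultaneously to the $C_4$ and to~$N$---and that is essentially the paper's vertex-by-vertex classification in disguise. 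So the proposed shortcut does not obviously save work, and as written the key step is not carried out. Your justification that the resulting spider must be thin rather than thick is correct and in fact more detailed than the paper's, which simply notes that a spider containing an induced net is thin.
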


\begin{proof}
Suppose that~$G$ is a prime $(2P_2,C_5,S_{1,1,2})$-free graph and suppose
that~$G$ contains a net, say~$N$ with vertices $a_1,a_2,a_3,b_1,b_2,b_3$ such
that $a_1,a_2,a_3$ is an independent set (the {\em end-vertices} of~$N$),
$b_1,b_2,b_3$ is a clique (the {\em mid-vertices} of~$N$), and the only edges
between $a_1,a_2,a_3$ and $b_1,b_2,b_3$ are $a_ib_i \in E(G)$ for $i \in\{1,2,3\}$.

Let $M=V(G) \setminus V(N)$. We partition~$M$ as follows: For $i \in
\{1,\dots,5\}$, let~$M_i$ be the set of vertices in~$M$ with exactly~$i$
neighbours in~$V(N)$.
Let~$U$ be the set of vertices in~$M$ adjacent to every vertex
of~$V(N)$. 
Let~$Z$ be the set of vertices in~$M$ with no neighbours
in~$V(N)$. 
Note that~$Z$ is an independent set in~$G$, since~$G$ is
$2P_2$-free.

We now analyse the structure of~$G$ through a series of claims.

\clm{\label{no125v} $M_1 \cup M_2 \cup M_5 = \emptyset$.}
First suppose $x \in M_1 \cup M_2$. By symmetry, we may assume that~$x$ is
adjacent to at least one vertex in $\{a_1,b_1\}$ and anti-complete to
$\{a_2,b_2\}$. If~$x$ is adjacent to~$a_1$ then $G[x,a_1,a_2,b_2]$ is a~$2P_2$.
Therefore~$x$ is adjacent to~$b_1$, but not to~$a_1$. However this means that
$G[b_1,a_1,x,b_2,a_2]$ is an~$S_{1,1,2}$.
We conclude that $M_1 \cup M_2=\emptyset$.

Now suppose $x \in M_5$. We may assume by symmetry that~$x$ is non-adjacent to~$a_1$ or~$b_1$. Then $G[x,a_2,a_3,b_1,a_1]$ is an~$S_{1,1,2}$. It follows that $M_5=\emptyset$, completing the proof of Claim~\ref{no125v}.

\medskip
\noindent
Next, we prove that the vertices in~$M_3$ and~$M_4$ have a restricted type of neighbourhood in~$V(N)$:

\clm{\label{restricted3v}
Every $x \in M_3$ is adjacent to either exactly one end-vertex~$a_i$ and its two opposite mid-vertices~$b_j$ and~$b_k$ ($j \neq i$, $k \neq i$) or to all three mid-vertices of~$N$.}
Suppose that $x\in M_3$ is non-adjacent to at least one mid-vertex.
If~$x$ is adjacent to at least two end-vertices, say~$a_1$ and $a_2$, then~$x$ must be adjacent to~$b_1$ or~$b_2$, otherwise $G[x,a_1,b_1,b_2,a_2]$ would be a~$C_5$.
By symmetry we may assume that~$x$ is adjacent to~$b_1$.
As $x\in M_3$, this means that $G[x,a_1,b_2,b_3]$ is a~$2P_2$.
Hence, by symmetry,~$x$ must be adjacent to exactly one end-vertex, say~$a_1$, and two mid-vertices.
If~$x$ is non-adjacent to~$b_2$ then $G[a_1,x,a_2,b_2]$ is a~$2P_2$.
By symmetry,~$x$ must therefore be adjacent to $b_2$ and~$b_3$, completing the proof of Claim~\ref{restricted3v}.

\medskip
\noindent
The situation for~$M_4$ is similar to that of~$M_3$, as shown in the following claim.

\clm{\label{restricted4v} If $x \in M_4$ then it is adjacent to exactly one end-vertex and all mid-vertices.}
Let~$x \in M_4$. Without loss of generality,~$x$ must be adjacent to an end-vertex, say~$a_1$.
If~$x$ is adjacent to all three end-vertices $a_1,a_2,a_3$ and, say,~$b_1$ then $G[x,a_2,b_2,b_3,a_3]$ is a~$C_5$.
If~$x$ is adjacent to exactly two end-vertices, say~$a_1$ and~$a_2$, then $G[x,a_1,a_2,b_3,a_3]$ is an~$S_{1,1,2}$ unless $x$ is non-adjacent to $b_3$.
However, if~$x$ is non-adjacent to~$b_3$ then $G[a_1,x,b_3,a_3]$ is a~$2P_2$. 
Hence, $x$ must be adjacent to exactly one end-vertex. Consequently, as $x\in M_4$, we find that $x$ is adjacent to all three mid-vertices of~$N$.
This completes the proof of Claim~\ref{restricted4v}.

\medskip
\noindent
Let~$Mid_3$ denote the set of vertices in~$M_3$ that are adjacent to all three mid-vertices of~$N$ (and non-adjacent to any end-vertex of $N$).

\clm{\label{UM34join} $U$ is complete to $(M_3 \cup M_4)$.}
Suppose that $u \in U$ and $x \in (M_3 \cup M_4)$ are not adjacent.
If $x \in Mid_3$ then $G[u,a_2,a_3,b_1,x]$ is an~$S_{1,1,2}$.
If $x \in (M_3 \cup M_4) \setminus Mid_3$, then without loss of generality~$x$
is adjacent to~$a_1$ and $G[u,a_2,a_3,a_1,x]$ is an~$S_{1,1,2}$. This completes the proof of Claim~\ref{UM34join}.

\medskip
\noindent
Let~$Z_1$ denote the set of vertices in~$Z$ that have a neighbour in $M_3 \cup M_4$, and let $Z_0=Z \setminus Z_1$.

\clm{\label{Z1cojoin} $Z_1$ is anti-complete to $((M_3 \cup M_4) \setminus Mid_3)$.}
Suppose that $z \in Z_1$ and $x \in (M_3 \cup M_4) \setminus Mid_3$ are adjacent.
Without loss of generality, we may assume that~$x$ is adjacent to~$a_1$ and~$b_3$.
Then $G[x,a_1,z,b_3,a_3]$ is an~$S_{1,1,2}$. This completes the proof of Claim~\ref{Z1cojoin}.

\medskip
\noindent
Thus, the only possible neighbours of~$Z_1$ vertices in $M_3 \cup M_4$ are the vertices in~$Mid_3$.

\clm{\label{UZ1join} $U$ is complete to $Z_1$.}
Suppose $u \in U$ and $z \in Z_1$ are non-adjacent. 
By the definition of~$Z_1$, the vertex~$z$ has a neighbour $x \in M_3\cup M_4$.
By Claim~\ref{Z1cojoin}, it follows that $x\in Mid_3$.
By Claim~\ref{UM34join},~$x$ must be adjacent to~$u$.
Then $G[u,a_2,a_3,x,z]$ is an~$S_{1,1,2}$. This completes the proof of Claim~\ref{UZ1join}.

\medskip
\noindent
Let $X=V(N) \cup M_3 \cup M_4 \cup Z_1$.
Then~$X$ is a module: every vertex in~$U$ is complete to~$X$ (due to the definition of~$U$, together with Claims~\ref{UM34join} and \ref{UZ1join})
and every vertex in~$Z_0$ is anti-complete to~$X$ (due to the definitions of $Z,Z_0$ and~$Z_1$, together with the fact that~$Z$ is an independent set).
Since~$G$ is prime,~$X$ must be a trivial module.
Since~$X$ contains more than one vertex, it follows that $V(G)=X=V(N) \cup M_3 \cup M_4 \cup Z_1$.
Hence $U \cup Z_0=\emptyset$.

It remains to show that $G=G[V(N) \cup M_3 \cup M_4 \cup Z_1]$ is a thin spider.
For $i \in \{1,2,3\}$ let $M'_i=(M_3 \cup M_4) \cap N(a_i)$. Note that $M_3\cup M_4=Mid_3\cup M_1'\cup M_2'\cup M_3'$.
The next two claims show how each~$M'_i$ is connected to other subsets of~$V(G)$.

\clm{\label{Mijjoin} For $i \neq j$, $M'_i$ is complete to $M'_j$.}
By symmetry we may assume that $i=1$ and $j=2$.
If $x \in M'_1$ is non-adjacent to $y \in M'_2$ then, by Claims~\ref{restricted3v} and~\ref{restricted4v}, we find that $G[x,a_1,y,a_2]$ is a~$2P_2$.
This completes the proof of Claim~\ref{Mijjoin}.

\clm{\label{MijoinMid3} For every $i=1,2,3$, $M'_i$ is complete to $Mid_3$.}
By symmetry we may assume that $i=1$.
If $x \in M'_1$ is non-adjacent to $y \in Mid_3$ then, by
Claims~\ref{restricted3v} and~\ref{restricted4v}, we find that
$G[b_2,a_2,y,x,a_1]$ is an~$S_{1,1,2}$.
This completes the proof of Claim~\ref{MijoinMid3}.

\medskip
\noindent
By Claims~\ref{restricted3v}, \ref{restricted4v}, \ref{Z1cojoin},~\ref{Mijjoin}
and~\ref{MijoinMid3} we find that, for every $i\in\{1,2,3\}$, $M'_i \cup
\{b_i\}$ is a module, so $M'_i = \emptyset$ (since~$G$ is prime).
Consequently, $V(G)=V(N) \cup Mid_3 \cup Z_1$.
Next, we show the following:

\clm{\label{Mid3clique} $Mid_3$ is a clique.}
Suppose that~$Mid_3$ is not a clique.
Let~$Q$ be the vertex set of a component of~$\overline{G[Mid_3]}$,
such that~$\overline{G[Q]}$, contains an edge (so~$G[Q]$ contains a non-edge).
Since~$G$ is prime,~$Q$ cannot be a module in~$G$.
Note that, in~$G$, the set $Mid_3 \setminus Q$ is complete to~$Q$.
Moreover, every vertex in $Q\subseteq Mid_3$ is adjacent to every mid-vertex of~$N$ and non-adjacent to every end-vertex of~$N$ (by definition).
Hence there must be vertices $x,y \in Q$ and $z \in Z_1$ such that~$z$ distinguishes~$x$ and~$y$, say~$z$ is adjacent to~$x$ in~$G$, but not to~$y$.
Because~$\overline{G[Q]}$ is connected, we may assume that~$x$ and~$y$ are adjacent in~$\overline{G}$, in which case~$x$ and~$y$ are non-adjacent in~$G$.
However, then $G[b_3,a_3,y,x,z]$ is an~$S_{1,1,2}$. This completes the proof of Claim~\ref{Mid3clique}.

\medskip
\noindent
By Claim~\ref{Mid3clique} and the definition of~$Mid_3$, we find that $\{b_1,b_2,b_3\} \cup Mid_3$ is a clique.
By the definition of $Z$ and the fact that $Z$ is independent, $\{a_1,a_2,a_3\} \cup Z_1$ is
an independent set. Therefore~$G$ is a split graph. By
Lemma~\ref{lem:chair-split-spider}, since~$G$ is prime and $S_{1,1,2}$-free, it
must be a spider. Since~$G$ contains an induced net, it must be a thin
spider.
\qed
\end{proof}

\begin{lemma}\label{lem:primecochairfrchordal}
If~$G$ is a prime $\overline{S_{1,1,2}}$-free chordal graph then it is either a
$\overline{2P_1+P_2}$-free graph or a thick spider.
\end{lemma}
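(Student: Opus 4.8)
The plan is to prove the following reformulation of the statement: \emph{if $G$ is a prime $\overline{S_{1,1,2}}$-free chordal graph that is not $\overline{2P_1+P_2}$-free, then $G$ is a thick spider.} So suppose $G$ contains an induced $\overline{2P_1+P_2}$. Since $G$ is prime, Lemma~\ref{lem:fig:diamond-prime-ext} tells us that $G$ contains an induced $\overline{P_1+P_4}$, $d$-$\mathbb{A}$, or $d$-domino. Now the $d$-domino contains an induced~$C_4$, which is impossible since $G$ is chordal; and the $d$-$\mathbb{A}$ contains an induced $\overline{S_{1,1,2}}$ (it is a diamond together with a pendant vertex attached to one of the degree-$2$ vertices of the diamond, plus a second, irrelevant pendant), which is impossible since $G$ is $\overline{S_{1,1,2}}$-free. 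Hence $G$ must contain an induced $\overline{P_1+P_4}$.

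Next I would switch to the complement. Let $G'=\overline{G}$. Then $G'$ is prime (a graph is prime if and only if its complement is prime), $G'$ is $(2P_2,C_5)$-free because $G$ is chordal (note $2P_2=\overline{C_4}$ and $C_5=\overline{C_5}$), $G'$ is $S_{1,1,2}$-free because $G$ is $\overline{S_{1,1,2}}$-free, and $G'$ contains an induced $P_1+P_4$ because $G$ contains an induced $\overline{P_1+P_4}$. Thus $G'$ is a prime $(2P_2,C_5,S_{1,1,2})$-free graph containing an induced $P_1+P_4$, so Lemma~\ref{lem:cogemprimeext} applies: $G'$ contains one of $X_1$, xbull, $\net$, $X_2$, $P_6$, $S_{1,2,2}$, $X_3$, $\mathbb{A}$, $d$-$\mathbb{A}$ (see \figurename~\ref{fig:co-gem-prime-ext}) as an induced subgraph. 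I would then rule out all of these except the net: $X_1$ contains an induced~$C_5$; each of xbull, $P_6$ and $S_{1,2,2}$ contains an induced~$2P_2$; and each of $X_2$, $X_3$, $\mathbb{A}$ and $d$-$\mathbb{A}$ contains an induced~$S_{1,1,2}$ (deleting a suitable single vertex from each of $X_3$, $\mathbb{A}$, $d$-$\mathbb{A}$, respectively two vertices from $X_2$, leaves a chair). Since $G'$ is $(2P_2,C_5,S_{1,1,2})$-free, it must therefore contain an induced net.

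Finally, $G'$ is a prime $(2P_2,C_5,S_{1,1,2})$-free graph containing an induced net, so Lemma~\ref{lem:thinspider} gives that $G'$ is a thin spider, and hence $G=\overline{G'}$ is a thick spider, as required.

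The unavoidable work, and the main obstacle, lies in the two elimination steps above: checking that the $d$-domino has an induced~$C_4$ and the $d$-$\mathbb{A}$ an induced $\overline{S_{1,1,2}}$, and that eight of the nine prime extensions of $P_1+P_4$ contain~$C_5$, $2P_2$ or~$S_{1,1,2}$. These are finite checks on graphs with at most seven vertices; the only slightly delicate cases are exhibiting the explicit induced chair inside $X_2$, $X_3$, $\mathbb{A}$ and $d$-$\mathbb{A}$, since these four graphs are themselves co-chordal (so cannot be discarded by a $C_4$ or a non-edge argument) and are not spiders.
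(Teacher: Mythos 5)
Your proposal is correct and follows essentially the same route as the paper's proof: use Lemma~\ref{lem:fig:diamond-prime-ext} to reduce the diamond-containing case to the gem-containing case (the paper phrases this contrapositively, noting that $\overline{S_{1,1,2}}$-freeness already excludes both $d$-$\mathbb{A}$ and $d$-domino, whereas you exclude the $d$-domino via chordality — both work), then pass to the complement, apply Lemma~\ref{lem:cogemprimeext} to extract a net as the only $(2P_2,C_5,S_{1,1,2})$-free prime extension of $P_1+P_4$, and conclude with Lemma~\ref{lem:thinspider}. Your explicit finite checks on the nine prime extensions (and the induced chairs inside $X_2$, $X_3$, $\mathbb{A}$ and $d$-$\mathbb{A}$) are all accurate; the paper simply leaves them implicit.
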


\begin{proof}
Let~$G$ be a prime $\overline{S_{1,1,2}}$-free chordal graph.
Note that since~$G$ is $\overline{S_{1,1,2}}$-free, it cannot contain $d$-$\mathbb{A}$ or
$d$-domino as an induced subgraph (see also \figurename~\ref{fig:diamond-prime-ext}).
If~$G$ is $\overline{P_1+P_4}$-free then, by
Lemma~\ref{lem:fig:diamond-prime-ext}, it must therefore be
$\overline{2P_1+P_2}$-free.

Now suppose that~$G$ contains an induced copy of $\overline{P_1+P_4}$.
Since~$G$ is prime,~$\overline{G}$ is also prime.
Furthermore,~$\overline{G}$ is $(2P_2,C_5,S_{1,1,2})$-free.
By Lemma~\ref{lem:cogemprimeext}, $\overline{G}$ must
contain one of the graphs in
\figurename~\ref{fig:co-gem-prime-ext}.
The only graph in \figurename~\ref{fig:co-gem-prime-ext} which is
$(2P_2,C_5,S_{1,1,2})$-free is the net, so~$\overline{G}$ must contain a
net.
By Lemma~\ref{lem:thinspider},~$\overline{G}$ is a thin spider, so~$G$ is a
thick spider, completing the proof.\qed
\end{proof}

As a corollary of the above lemma, we get the following:
\begin{theorem}\label{thm:co-chair}
Every $\overline{S_{1,1,2}}$-free chordal graph has clique-width at most~$4$.
\end{theorem}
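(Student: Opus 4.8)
The plan is to derive the theorem directly from Lemma~\ref{lem:primecochairfrchordal} together with the reduction to prime graphs. Since the class of $\overline{S_{1,1,2}}$-free chordal graphs is closed under taking induced subgraphs, Lemma~\ref{lem:prime} tells us it suffices to show that every \emph{prime} $\overline{S_{1,1,2}}$-free chordal graph~$G$ has clique-width at most~$4$. By Lemma~\ref{lem:primecochairfrchordal}, such a~$G$ is either $\overline{2P_1+P_2}$-free or a thick spider, and I would treat these two cases separately.

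In the first case I would observe that $\overline{P_1+P_4}$ contains an induced $\overline{2P_1+P_2}$: equivalently, if $P_1+P_4$ consists of an isolated vertex~$e$ and a path on vertices $a,b,c,d$ in that order, then $\{e,a,c,d\}$ induces $2P_1+P_2$. Hence a $\overline{2P_1+P_2}$-free graph is $\overline{P_1+P_4}$-free, so in this case~$G$ is a $\overline{P_1+P_4}$-free chordal graph and Lemma~\ref{lem:gem-chordal} already gives $\cw(G)\le 3\le 4$.

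In the second case $G$ is a thick spider, say $G=\overline{S}$ where~$S$ is a thin spider with clique~$K$, independent set~$I$ and set~$R$; write $A$ for the (now independent) set~$K$ and $B$ for the (now clique) set~$I$, and enumerate $A=\{a_1,\dots,a_n\}$, $B=\{b_1,\dots,b_n\}$ so that $a_jb_\ell\in E(G)$ if and only if $j\neq \ell$. Since~$R$ is a module of~$S$ and hence of $G=\overline{S}$, primeness of~$G$ forces $|R|\le 1$, and if $R=\{r\}$ then~$r$ is complete to~$B$ and anti-complete to~$A$. I would then exhibit an explicit $4$-expression for~$G$ maintaining the invariant that, after processing the ``pairs'' $a_1b_1,\dots,a_mb_m$, all of $a_1,\dots,a_m$ carry label~$1$ and all of $b_1,\dots,b_m$ carry label~$2$: to add the next pair, introduce $b_{m+1}$ with label~$3$, apply $\eta_{3,1}$ and $\eta_{3,2}$, then introduce $a_{m+1}$ with label~$4$ and apply $\eta_{4,2}$ (this joins $a_{m+1}$ to the old $B$-vertices but not to $b_{m+1}$, which still has label~$3$, nor to the old $A$-vertices), and finally apply $\rho_{3\rightarrow 2}$ and $\rho_{4\rightarrow 1}$; once all pairs have been processed, if $R=\{r\}$ we add~$r$ with label~$3$ and apply $\eta_{3,2}$. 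This construction uses only four labels, so $\cw(G)\le 4$, and combining the two cases with Lemma~\ref{lem:prime} finishes the proof.

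There is essentially no deep obstacle remaining at this point, since the substantive work already lies in Lemmas~\ref{lem:thinspider} and~\ref{lem:primecochairfrchordal}. The only things needing care in the theorem itself are the bookkeeping in the $4$-expression for thick spiders -- in particular checking that one never needs a fifth label, and that the leftover vertex~$r$ is absorbed at no extra cost -- and the small observation that the $\overline{2P_1+P_2}$-free case reduces to the already-established $\overline{P_1+P_4}$-free case.
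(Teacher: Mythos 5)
Your proposal is correct and follows essentially the same route as the paper: reduce to prime graphs via Lemma~\ref{lem:prime}, invoke Lemma~\ref{lem:primecochairfrchordal} to split into the $\overline{2P_1+P_2}$-free case (handled by Lemma~\ref{lem:gem-chordal}, since the diamond is an induced subgraph of the gem) and the thick-spider case, and then give an explicit inductive $4$-expression for thick spiders with the leftover module vertex absorbed by one final join. Your $4$-expression differs from the paper's only in the order in which the two new vertices of each pair are introduced and joined; both versions are valid.
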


\begin{proof}
Let~$G$ be an $\overline{S_{1,1,2}}$-free chordal graph.
By Lemma~\ref{lem:prime}, we may assume that~$G$ is prime.
If~$G$ is $\overline{2P_1+P_2}$-free then it has clique-width at most~$3$ by
Lemma~\ref{lem:gem-chordal}. By Lemma~\ref{lem:primecochairfrchordal}, we may therefore assume that~$G$ is a thick spider.
Note that since a thick spider is the complement of a thin spider (see also the 
definition of a \hyperref[def:thin-spider]{thin spider}),~$K$ is an independent set,~$I$ is a clique and~$R$ is complete to~$I$ and anti-complete to~$K$.
Every vertex in~$K$ has exactly one non-neighbour in~$I$ and vice versa. Since~$G$ is prime and~$R$ is a module,~$R$ contains at most one vertex.

Let $i_1,\ldots,i_p$ be the vertices in~$I$ and let $k_1,\ldots,k_p$ be the
vertices in~$K$ such that for each~$j \in \{1,\ldots,p\}$, the vertex~$i_j$ is the unique
non-neighbour of~$k_j$ in~$I$.
Let~$G_j$ be the labelled copy of $G[i_1,\ldots,i_j,k_1,\ldots,k_j]$ where every~$i_h$ is labelled~$1$ and every~$k_h$ is labelled~$2$.
Now $G_1 = 1(i_1) \oplus 2(k_1)$ and for $j \in \{1,\ldots,p-1\}$ we can construct~$G_{j+1}$ from~$G_j$ as follows:
$$
G_{j+1}=\rho_{3\rightarrow 1}(\rho_{4\rightarrow 2}(\eta_{1,3}(\eta_{1,4}(\eta_{2,3}(G_j\oplus 3(i_{j+1})\oplus 4(k_{j+1})))))).
$$
If~$R = \emptyset$ then using the above recursively we get a 4-expression for~$G_p$ and therefore for~$G$.
If~$R=\{x\}$ then we obtain a 4-expression for~$G$ using $\eta_{1,4}(G_p \oplus\nobreak 4(x))$.
Therefore~$G$ indeed has clique-width at most~$4$.
This completes the proof.\qed
\end{proof}

\section{The Classifications}\label{sec:chordal-classification}

In this section we first prove our main result, Theorem~\ref{thm:chordal-classification}, which was presented in Section~\ref{sec:intro}. 
Recall that~$F_1$ and~$F_2$ are the graphs shown in \figurename~\ref{fig:open-chordal}.

\medskip
\noindent
\faketheorem{Theorem~\ref{thm:chordal-classification} (restated).}
{\em Let~$H$ be a graph with $H\notin \{F_1,F_2\}$. The class of $H$-free chordal graphs has bounded clique-width if and only if
\begin{itemize}
\item $H=K_r$ for some $r\geq 1$;
\item $H\ssi \bull$;
\item $H\ssi P_1+P_4$;
\item $H\ssi \overline{P_1+P_4}$;
\item $H\ssi \overline{K_{1,3}+2P_1}$;
\item $H\ssi P_1+\overline{P_1+P_3}$;
\item $H\ssi P_1+\overline{2P_1+P_2}$ or
\item $H\ssi \overline{S_{1,1,2}}$.
\end{itemize}
}

\begin{proof}
Let~$H$ be a graph with $H \notin \{F_1,F_2\}$.
If $H=K_r$ for some $r\geq 1$ then we use Lemma~\ref{lem:clique-chordal}.
If~$H$ is an induced subgraph of a graph in $\{\bull,\allowbreak P_1+\nobreak
P_4,\allowbreak \overline{P_1+P_4}\}$ then we use
Lemmas~\ref{lem:bull-chordal},~\ref{lem:cogem-chordal}
or~\ref{lem:gem-chordal}, respectively.
If~$H$ is an induced subgraph of a graph in $\{\overline{K_{1,3}+2P_1},\allowbreak 
P_1+\nobreak \overline{P_1+P_3},\allowbreak P_1+\nobreak \overline{2P_1+P_2},\allowbreak \overline{S_{1,1,2}}\}$, then we use
Theorems~\ref{thm:co-k13+2p1-chordal},~\ref{thm:p1+paw-chordal},~\ref{thm:p1+diamond-chordal} 
or~\ref{thm:co-chair}, respectively.

We now prove the reverse direction of the theorem.
Let $H\notin \{F_1,F_2\}$ be a graph such that the class of $H$-free chordal graphs has bounded clique-width.
We first prove two useful claims, which show that we are done in some special cases. 

\clm{\label{clm:f1-f2} If~$H$ is a proper induced subgraph of~$F_1$ or~$F_2$ then~$H$
is an induced subgraph of a graph in $\{\bull, \overline{K_{1,3}+2P_1}, P_1+\nobreak \overline{P_1+P_3},\allowbreak P_1+\nobreak \overline{2P_1+P_2},\allowbreak
\overline{S_{1,1,2}}\}$} 

\noindent
We prove Claim~\ref{clm:f1-f2} as follows.
Note that~$F_1$ and~$F_2$ are six-vertex graphs. The five-vertex induced
subgraphs of~$F_1$ are $\bull, \overline{K_{1,3}+P_1}$, and~$P_1+\nobreak
\overline{P_1+P_3}$. The five-vertex induced subgraphs of~$F_2$ are
$\bull,\allowbreak \overline{K_{1,3}+P_1},\allowbreak P_1+\nobreak \overline{2P_1+P_2},\allowbreak \overline{2P_1+P_3},\allowbreak 
\overline{S_{1,1,2}}$.
Since $\overline{K_{1,3}+P_1}$ and $\overline{2P_1+P_3}$ are induced subgraphs of~$\overline{K_{1,3}+2P_1}$, this completes the proof of the Claim~\ref{clm:f1-f2}.

\clm{\label{clm:6-vertex}
If~$H$ is an induced subgraph of a graph in 
$\{\overline{\bull+\nobreak P_1},\allowbreak F_3,\allowbreak Q, \overline{Q}\}$ (see \figurename~\ref{fig:clm:6-vertex})
then~$H$ must be an induced subgraph of a graph in
$\{
\bull,\allowbreak
P_1+\nobreak P_4,\allowbreak
\overline{P_1+P_4},\allowbreak
\overline{K_{1,3}+ 2P_1},\allowbreak 
P_1+\nobreak \overline{P_1+P_3},\allowbreak
\overline{S_{1,1,2}}\}$.}
\begin{figure}
\begin{center}
\begin{tabular}{ccccc}
\begin{minipage}{0.21\textwidth}
\centering
\scalebox{0.7}{
{\begin{tikzpicture}[scale=1,rotate=30]
\GraphInit[vstyle=Simple]
\SetVertexSimple[MinSize=6pt]
\Vertex[a=0,d=0.57735026919]{a}
\Vertex[a=120,d=0.57735026919]{b}
\Vertex[a=240,d=0.57735026919]{c}
\Vertex[a=60,d=1.15470053838]{d}
\Vertex[a=180,d=1.15470053838]{e}
\Vertex[a=300,d=1.15470053838]{f}
\Edges(a,b,c,a,d,b,e)
\Edge(c)(f)
\Edge(c)(e)
\Edge(a)(f)
\Edge(c)(d)
\end{tikzpicture}}}
\end{minipage}
&
\begin{minipage}{0.21\textwidth}
\centering
\scalebox{0.7}{
{\begin{tikzpicture}[xscale=-1,rotate=135]
\GraphInit[vstyle=Simple]
\SetVertexSimple[MinSize=6pt]
\Vertices{circle}{a,b,c,d}
\Vertex[x=1,y=2]{y}
\Vertex[x=2,y=1]{z}
\Edges(a,b,c,d,a,c)
\Edges(b,d)
\Edges(a,z,b,y)
\end{tikzpicture}}}
\end{minipage}
&
\begin{minipage}{0.21\textwidth}
\centering
\scalebox{0.7}{
{\begin{tikzpicture}[scale=1,rotate=30]
\GraphInit[vstyle=Simple]
\SetVertexSimple[MinSize=6pt]
\Vertex[a=0,d=0.57735026919]{a}
\Vertex[a=120,d=0.57735026919]{b}
\Vertex[a=240,d=0.57735026919]{c}
\Vertex[a=60,d=1.15470053838]{d}
\Vertex[a=180,d=1.15470053838]{e}
\Vertex[a=300,d=1.15470053838]{f}
\Edges(a,b,c,a,d,b,e)
\Edge(c)(f)
\end{tikzpicture}}}
\end{minipage}
&
\begin{minipage}{0.21\textwidth}
\centering
\scalebox{0.7}{
{\begin{tikzpicture}[scale=1,rotate=30]
\GraphInit[vstyle=Simple]
\SetVertexSimple[MinSize=6pt]
\Vertex[a=0,d=0.57735026919]{a}
\Vertex[a=120,d=0.57735026919]{b}
\Vertex[a=240,d=0.57735026919]{c}
\Vertex[a=60,d=1.15470053838]{d}
\Vertex[a=180,d=1.15470053838]{e}
\Vertex[a=300,d=1.15470053838]{f}
\Edges(a,b,c,a,d,b,e)
\Edge(c)(f)
\Edge(c)(e)
\end{tikzpicture}}}
\end{minipage}\\
\\
$\overline{\bull+\nobreak P_1}$ &
$F_3$ & $Q$ & $\overline{Q}$
\end{tabular}
\caption{\label{fig:clm:6-vertex} The graphs $\overline{\bull+\nobreak P_1},\allowbreak F_3, Q$ and~$\overline{Q}$ from Claim~\ref{clm:6-vertex}.}
\end{center}
\end{figure}

\noindent
We prove Claim~\ref{clm:6-vertex} as follows.
If $H \in \{\overline{\bull+\nobreak P_1},\allowbreak F_3, Q, \overline{Q}\}$ then~$H$ contains an induced~$K_{1,3}$.
By Lemma~\ref{lem:claw-chordal}, since the class of~$H$-free chordal graphs has bounded clique-width, $H$ must be $K_{1,3}$-free.
Hence~$H$ must be a $K_{1,3}$-free induced subgraph of $\overline{\bull+\nobreak P_1},\allowbreak F_3, Q$ or~$\overline{Q}$.
We list the maximal $K_{1,3}$-free induced subgraphs of $\overline{\bull+\nobreak P_1},\allowbreak F_3, Q$ and~$\overline{Q}$, respectively, in
Table~\ref{t-tablefree}.
Since $\overline{K_{1,3}+P_1}$ and $\overline{2P_1+P_3}$ are induced subgraphs of~$\overline{K_{1,3}+2P_1}$, this completes the proof of Claim~\ref{clm:6-vertex}.

\begin{table}
\begin{center}
{
\setlength{\tabcolsep}{1em}
\begin{tabular}{c|c}
$H$ & Maximal $K_{1,3}$-free induced subgraphs of~$H$\\
\hline\\[-1em]
$\overline{\bull+\nobreak P_1}$ &
$\bull,\overline{P_1+P_4}, \overline{2P_1+ P_3}$\\
$F_3$ &
$\overline{K_{1,3}+ P_1}, P_1+\nobreak \overline{P_1+P_3}, \overline{2P_1+\nobreak P_3}$\\
$Q$ &
$\bull, P_1+\nobreak P_4, P_1+\nobreak \overline{P_1+P_3}, \overline{S_{1,1,2}}$\\
$\overline{Q}$ &
$\bull, \overline{P_1+P_4},P_1+\nobreak \overline{P_1+P_3},\overline{S_{1,1,2}}$
\end{tabular}
}
\end{center}
\caption{The maximal $K_{1,3}$-free induced subgraphs of $\overline{\bull+P_1},F_3,Q$ and~$\overline{Q}$. }\label{t-tablefree}
\end{table}

\medskip
\noindent
Due to Claims~\ref{clm:f1-f2} and~\ref{clm:6-vertex},
if~$H$ is an induced subgraph of a graph in
$\{\overline{\bull+\nobreak P_1},\allowbreak F_1,\allowbreak F_2,\allowbreak F_3,\allowbreak Q, \overline{Q}\}$ then we are done.

Since the class of split graphs is contained
in the class of chordal graphs, the class of $H$-free split graphs must also
have bounded clique-width. By Lemma~\ref{lem:split-classification}, the
graph~$H$ must therefore be a clique, an independent set or an induced subgraph of
a graph in $\{F_4,\overline{F_4},F_5,\overline{F_5}\}$ (see \figurename~\ref{fig:open-split}).
If~$H$ is a clique then we are done.
If~$H$ is an independent set then Lemma~\ref{lem:4p1-chordal} tells us that~$H$ can have at most three
vertices, in which case~$H$ is an induced subgraph of the $\bull$ and we are done.
We may therefore assume that~$H$ is an induced
subgraph of a graph in $\{F_4,\overline{F_4},F_5,\overline{F_5}\}$ and we will consider each of these possibilities in turn.
Furthermore, $H$ must be 
$4P_1$-free and $K_{1,3}$-free, 
otherwise the clique-width of
$H$-free chordal graphs would be unbounded 
(by Lemmas~\ref{lem:4p1-chordal} and~\ref{lem:claw-chordal}, respectively).

\thmcase{$H \ssi F_4$.}
Since~$F_4$ contains an independent set on five vertices and~$H$ is $4P_1$-free, two of these vertices
must be deleted in~$H$. Therefore~$H$ must be an induced subgraph of $\bull,\allowbreak
P_1+\nobreak P_4,\allowbreak P_1+\nobreak \overline{P_1+P_3},\allowbreak P_1+\nobreak \overline{2P_1+P_2}$ or
$\overline{P_1+\overline{P_1+P_3}}$. In the first four cases we are done immediately. 
The graph $\overline{P_1+\overline{P_1+P_3}}$ (also known as the dart) is
an induced subgraph of~$F_3$, so in the fifth case we are done by
Claim~\ref{clm:6-vertex}.

\thmcase{$H \ssi \overline{F_4}$.}
The graph~$\overline{F_4}$ contains two induced copies of~$K_{1,3}$ (which are not vertex-disjoint). 
Since~$H$ is $K_{1,3}$-free, it follows that~$H$ is an induced subgraph of
$F_1, \overline{K_{1,3}+2P_1}$ or $\overline{P_1+P_4}$.
In the first case, we are done by Claim~\ref{clm:f1-f2}.
In the other two cases we are done immediately.

\thmcase{$H \ssi F_5$.}
Since~$F_5$ contains an independent set on four vertices, one of these vertices
must be deleted in~$H$. Therefore~$H$ must be an induced subgraph of $F_1, F_2, F_3$ or~$Q$.
In the first two cases we apply Claim~\ref{clm:f1-f2} and in the other two we apply Claim~\ref{clm:6-vertex}.

\thmcase{$H \ssi \overline{F_5}$.}
Since~$\overline{F_5}$ contains an independent set on four vertices, one of these vertices
must be deleted in~$H$. Therefore~$H$ must be an induced subgraph of $\overline{\bull+\nobreak P_1}, F_2, F_3$ or~$\overline{Q}$.
In each of these cases, we are done by Claims~\ref{clm:f1-f2} or~\ref{clm:6-vertex}.

This completes the proof of Theorem~\ref{thm:chordal-classification}.\qed
\end{proof}

\noindent
We now prove our dichotomy for $H$-free weakly chordal graphs, which we recall below.

\medskip
\noindent
\faketheorem{Theorem~\ref{t-weakly-chordal} (restated).}
{\em Let~$H$ be a graph. The class of
$H$-free weakly chordal graphs has bounded clique-width if and only if~$H$ is an induced subgraph of~$P_4$.
}

\begin{proof}
Let~$H$ be a graph. First suppose that~$H$ is an induced subgraph of~$P_4$. 
Then the class of $H$-free weakly chordal graphs is contained in the class of $P_4$-free graphs,
which have bounded clique-width by Lemma~\ref{l-p4}.
Now suppose that~$H$ is not an induced subgraph of~$P_4$. 
Below we show that the class of $H$-free weakly chordal graphs has unbounded clique-width.

Suppose that~$H$ is not a split graph. Then the class of $H$-free weakly chordal graphs contains the class of split graphs, which has unbounded clique-width by Lemma~\ref{lem:split-chordal}
(or Lemma~\ref{lem:split-classification}).
From now on assume that~$H$ is a split graph.
Suppose that~$H$ contains a cycle~$C$. As~$H$ is a split graph, it is $(C_4,C_5,2P_2)$-free by Lemma~\ref{lem:split}.
Hence,~$C$ is isomorphic to~$C_3$. Then the class of $H$-free weakly chordal graphs contains the class
of bipartite weakly chordal graphs, which contains the class of bipartite permutation graphs, which has unbounded clique-width by Lemma~\ref{l-bippermut}.
From now on assume that~$H$ contains no cycle.

We claim that~$H$ has an induced~$3P_1$. For contradiction, suppose~$H$ is $3P_1$-free.
Then every connected component of~$H$ is a path. As~$H$ is $3P_1$-free,~$H$ has at most two connected components, each of which is a path on at most four vertices. Because~$H$ is not an induced subgraph of~$P_4$, this means that~$H$ has exactly two connected components. As~$H$ is $3P_1$-free, each of
these components is a path on at most two vertices.
As~$H$ is $2P_2$-free, at most one of the components contains an edge. However, then~$H$ is an induced subgraph of~$P_4$, a contradiction.
Now, as~$H$ has an induced~$3P_1$, the class of complements of $H$-free weakly chordal graphs contains the class of $C_3$-free weakly chordal graphs, which has unbounded clique-width, as shown above.
Applying Fact~\ref{fact:comp} completes the proof.\qed
\end{proof}

\section{An Application}\label{sec:consec}

In this section we give an application of Theorem~\ref{thm:chordal-classification} by showing how to use it to prove that the class of $(K_4,\allowbreak 2P_1+\nobreak P_3$)-free graphs has bounded clique-width (see also \figurename~\ref{fig:2-forb-case}), which means that only~$13$ (non-equivalent) cases 
remain open~\cite{DP15}.
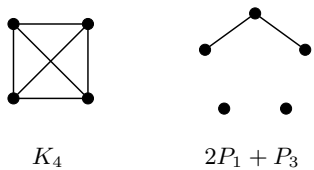
\begin{figure}
\begin{center}
\begin{tabular}{cc}
\begin{minipage}{0.20\textwidth}
\centering
\scalebox{0.7}{
{\begin{tikzpicture}[scale=1,rotate=45]
\GraphInit[vstyle=Simple]
\SetVertexSimple[MinSize=6pt]
\Vertices{circle}{a,b,c,d}
\Edges(a,b,c,d,a,c)
\Edges(b,d)
\end{tikzpicture}}}
\end{minipage}
&
\begin{minipage}{0.20\textwidth}
\centering
\scalebox{0.7}{
{\begin{tikzpicture}[scale=1,rotate=90]
\GraphInit[vstyle=Simple]
\SetVertexSimple[MinSize=6pt]
\Vertices{circle}{a,b,c,d,e}
\Edges(e,a,b)
\end{tikzpicture}}}
\end{minipage}\\
& \\
$K_4$ & $2P_1+P_3$
\end{tabular}
\end{center}
\caption{The graphs~$K_4$ and $2P_1+\nobreak P_3$.}\label{fig:2-forb-case}
\end{figure}

\begin{theorem}\label{thm:K4-2p1+p3-chordal}
The class of $(K_4,\allowbreak 2P_1+\nobreak P_3)$-free graphs has bounded clique-width.
\end{theorem}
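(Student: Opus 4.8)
The plan is to reduce $(K_4,\allowbreak 2P_1+\nobreak P_3)$-free graphs to $K_4$-free chordal graphs, which have bounded clique-width by Lemma~\ref{lem:clique-chordal} (the case $r=4$ of Theorem~\ref{thm:chordal-classification}), applying the clique-width-preserving facts from Section~\ref{sec:prelim} along the way. By Lemma~\ref{lem:prime} we may assume $G$ is prime, and by Fact~\ref{fact:2-conn} we may assume $G$ is $2$-connected (in particular connected). Since $G$ is $K_4$-free, the neighbourhood of every vertex induces a triangle-free graph. The obstruction to chordality is an induced cycle $C_k$ with $k\ge 4$; I would first argue that in a $(K_4,2P_1+P_3)$-free graph, such a long induced cycle forces enough independent-set structure to violate $2P_1+P_3$-freeness unless $k$ is small. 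Concretely, if $G$ contains an induced $C_k$ with $k\ge 7$, pick three pairwise non-adjacent vertices on the cycle that are pairwise at distance $\ge 2$; together with an edge of the cycle far from all three, this yields an induced $2P_1+P_3$ (or we find an induced $C_4$ elsewhere), a contradiction. This should leave us to deal with $G$ containing an induced $C_4$, $C_5$ or $C_6$ as the only possible ``holes''.

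Next I would handle these short holes. If $G$ is $C_4$-free (i.e.\ weakly chordal plus a bit more) we are in good shape: a $(K_4,2P_1+P_3,C_4,C_5,C_6,\dots)$-free graph is $K_4$-free chordal and we invoke Lemma~\ref{lem:clique-chordal} directly. So suppose $G$ contains an induced $C_4$ on vertices $a,b,c,d$ (with $ac,bd$ non-edges). The key observation is that $2P_1+P_3$-freeness severely restricts how vertices attach to this $C_4$: any vertex $v$ outside must, together with the $C_4$, avoid creating $2P_1+P_3$; combined with $K_4$-freeness this forces $N(v)\cap\{a,b,c,d\}$ to be one of a short list of patterns (e.g.\ $v$ sees a consecutive pair, or is anti-complete, and the anti-complete vertices form a module-like set). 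I would use this to show that the vertex set splits into a bounded number of parts each of which, after suitable bipartite complementations (Fact~\ref{fact:bip}) and subgraph complementations (Fact~\ref{fact:comp}) and a bounded number of vertex deletions (Fact~\ref{fact:del-vert}), becomes either a graph with no induced $C_4$ — hence $K_4$-free chordal — or a graph of bounded clique-width for an elementary reason (e.g.\ bounded degree, or complete multipartite via Lemma~\ref{lem:paw}, or a structure already classified). The same strategy, with a slightly longer case analysis, should dispatch an induced $C_5$ or $C_6$: here $2P_1+P_3$-freeness is even more restrictive because a $C_5$ or $C_6$ already contains two non-adjacent vertices, so most outside vertices must be (anti-)complete to large portions of the hole, and one extracts a bounded-clique-width decomposition after a constant number of the operations in Facts~\ref{fact:del-vert}--\ref{fact:subdiv}.

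The main obstacle I anticipate is the induced-$C_4$ case: unlike $C_5$ and $C_6$, a $C_4$ has independence number only $2$, so $2P_1+P_3$-freeness does not immediately force outside vertices to be complete or anti-complete to it, and one must carefully track the interaction between several $C_4$'s (which may overlap) while keeping the number of ``exceptional'' vertices removed bounded by a constant independent of $|V(G)|$. I would control this by choosing $G$ to be prime and using primeness to collapse the module-like sets produced by the attachment analysis, so that the exceptional sets are genuinely small. Once the local structure around every $C_4$ is pinned down, the graph should decompose — via a constant number of bipartite/subgraph complementations and vertex deletions — into pieces that are $K_4$-free chordal (Lemma~\ref{lem:clique-chordal}) or otherwise of bounded clique-width, and Facts~\ref{fact:del-vert}--\ref{fact:bip} then give bounded clique-width for $G$, completing the proof.
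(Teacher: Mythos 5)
There is a genuine gap, and also a factual error in the one step you do carry out. First, the error: your claim that an induced $C_k$ with $k\ge 7$ forces an induced $2P_1+P_3$ is false. The graph $C_7$ is itself $(K_4,2P_1+P_3)$-free (any induced $P_3$ in $C_7$ has exactly two common non-neighbours on the cycle, and they are adjacent), so $C_7$ holes cannot be excluded. The correct statement, used in the paper, is that a \emph{minimal} induced cycle has length at most $7$, witnessed by the induced $2P_1+P_3$ on $v_1,v_3,v_5,v_6,v_7$ when $k\ge 8$. (Also, the configuration you describe --- three pairwise non-adjacent vertices together with a far-away edge --- is $3P_1+P_2$, not $2P_1+P_3$.) Your case analysis therefore omits $C_7$, and your plan to ``reduce to $K_4$-free chordal graphs'' only genuinely covers the branch where $G$ is already chordal.

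The more serious gap is that the entire hard case is left as ``should decompose''. You correctly anticipate that the $C_4$ case is the obstacle, because a vertex outside a $C_4$ is not forced to be complete or anti-complete to it, but you do not resolve it, and the resolution is not a reduction to chordal graphs at all. In the paper's proof one fixes a minimal induced cycle $C$ ($4\le k\le 7$), partitions the remaining vertices into sets $V_S$ by their neighbourhood on $C$, and shows: sets indexed by $S$ containing two consecutive cycle indices are independent; sets with $|S|\le 1$ have at most $3$ vertices; no four large independent sets can coexist ($K_4$-freeness); and three independent sets can be handled by deleting $O(1)$ vertices and bipartite complementations to reach maximum degree $2$. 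The crux is when some large $V_S$ is \emph{not} independent: minimality of $C$ then forces $k=4$ and $S\in\{\{1,3\},\{2,4\}\}$, the graphs $G[V_{\{1,3\}}]$ and $G[V_{\{2,4\}}]$ are disjoint unions of $P_1$'s and $P_2$'s, all other attachment sets are empty, and each vertex of $V_{\{1,3\}}$ has at most two non-neighbours in $V_{\{2,4\}}$ (and vice versa); a single bipartite complementation then yields a graph of maximum degree $3$ whose components contain at most four vertices of degree $3$, finished by Fact~\ref{fact:del-vert} and Lemma~\ref{lem:atmost-2}. This bounded-degree endgame, together with the argument that the non-independent attachment sets force $k=4$, is the substance of the proof and is entirely missing from your proposal.
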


\begin{proof}
Suppose~$G$ is a $(K_4,\allowbreak 2P_1+\nobreak P_3)$-free graph. If~$G$ is chordal then it is a
$K_4$-free chordal graph, in which case it has bounded clique-width by
Lemma~\ref{lem:clique-chordal}. We may therefore assume that~$G$ contains
an induced cycle~$C$ with
vertices $v_1,v_2,\ldots,v_k$ in that order, such that $k \geq 4$. We may also
assume that this induced cycle is chosen such that~$k$ is minimal. Note that $k \leq
7$, otherwise $G[v_1,v_3,v_5,v_6,v_7]$ would be a $2P_1+\nobreak P_3$.

We partition the vertices not on the cycle~$C$ as follows. For $S \subseteq
\{1,\ldots,k\}$, let~$V_S$ contain those vertices $x \in V(G) \setminus C$ such
that $N_C(x) = \{v_i \; | \; i \in S\}$. We say that a set~$V_S$ is {\em large}
if it contains at least seven vertices, otherwise we say that it is {\em
small}. We now prove some useful properties about these sets.

\clm{\label{clm:K4-2p1+p3:no-4p1}
Suppose $S, T \subseteq \{1,\ldots,k\}$ with $S \neq T$. If $x,x' \in V_S$
and $y,y' \in V_T$ then $G[x,x',y,y']$ is not a~$4P_1$.}
Indeed, suppose that $G[x,x',y,y']$ is a~$4P_1$. Without loss of generality, we
may assume $i \in T \setminus S$. Then $G[x,x',y,v_i,y']$ is a $2P_1+\nobreak P_3$.

\clm{\label{clm:K4-2p1+p3:V12-indep}
If~$v_i$ and~$v_j$ are consecutive vertices of the cycle and $\{i,j\}
\subseteq S \subseteq \{1,\ldots,k\}$ then~$G[V_S]$ is independent.}
Indeed, if $x,x' \in S$ were adjacent then $G[x,x',v_i,v_j]$ would be a~$K_4$.

\clm{\label{clm:K4-2p1+p3:s-bigger-than-1}
Suppose $S\subseteq \{1,\ldots,k\}$. If $|S| \leq 1$ then~$V_S$ is small.}
Indeed, suppose $S= \emptyset$ or $S=\{1\}$. If $x,y \in V_S$ then $x,y$ must
be adjacent, otherwise $G[x,y,v_2,v_3,v_4]$ would be a $2P_1+\nobreak P_3$. Therefore~$V_S$ is a clique in~$G$. Since~$G$ is $K_4$-free, $|V_S| \leq 3$.

\clm{\label{clm:K4-2p1+p3:2-indep-V_S}
Suppose $S,T \subseteq \{1,\ldots,k\}$ with $S\neq T$.
If~$V_S$ and~$V_T$ are independent sets in~$G$ and~$V_T$ is large then at most one vertex of~$V_S$ has
more than one non-neighbour in~$V_T$.}
Indeed, since $|V_T| \geq 7 \geq 4$, by Claim~\ref{clm:K4-2p1+p3:no-4p1} for
any pair of vertices $x,x' \in V_S$, at least one of these vertices must have
at least two neighbours in~$V_T$.
Therefore every vertex of~$V_S$ except perhaps one has at least two neighbours in~$V_T$.
Consider a vertex $x \in V_S$ that has two neighbours $y,y' \in V_T$. The vertex~$x$ cannot have two non-neighbours $z,z' \in V_T$,
otherwise $G[z,z',y,x,y']$ would be a $2P_1+\nobreak P_3$.
Therefore every vertex of~$V_S$ except perhaps one has at most one non-neighbour in~$V_T$. Hence,
at most one vertex of~$V_S$ has more than one non-neighbour in~$V_T$.

\clm{\label{clm:K4-2p1+p3:3-indep-V_S}
Suppose $S,T,U \subseteq \{1,\ldots,k\}$ are pairwise distinct.
If $V_S, V_T$ and~$V_U$ are independent sets in~$G$ then $G[V_S \cup V_T \cup V_U]$ has bounded
clique-width.}
Indeed, if any set in $\{V_S, V_T, V_U\}$ is small then by
Fact~\ref{fact:del-vert} we may assume it is empty. By
Claim~\ref{clm:K4-2p1+p3:2-indep-V_S} and Fact~\ref{fact:del-vert}, we may
delete at most two vertices from each of $V_S, V_T, V_U$ after which every
vertex in each of these sets will have at most one non-neighbour in each of the
other two sets. In other words, every vertex in one of these sets will have at most
two non-neighbours in total in the other two sets. Applying a bipartite
complementation between each pair of sets (which we may do by
Fact~\ref{fact:bip}) yields a graph of maximum degree at most~2. This graph has
bounded clique-width by Lemma~\ref{lem:atmost-2}.

\clm{\label{clm:K4-2p1+p3:4-indep-V_S}
Suppose $R,S,T,U \subseteq \{1,\ldots,k\}$ are pairwise distinct.
If $V_R,V_S,V_T,V_U$ are all independent sets in~$G$ then at least one of $V_R,V_S, V_T, V_U$
is small.}
Indeed, suppose for contradiction that all of $V_R,V_S, V_T, V_U$ are large.
Let $V'_R,\allowbreak V'_S,\allowbreak V'_T$ and~$V'_U$ be the sets of those vertices in $V_R,V_S, V_T$ and~$V_U$, respectively, that do not have two non-neighbours in any of the three
other sets. By Claim~\ref{clm:K4-2p1+p3:2-indep-V_S}, each of
$V'_R,V'_S, V'_T$ and~$V'_U$
has at least $7-3=4$ vertices. Let $r \in V'_R$. Since $|V'_S| \geq 2$,
there must be a vertex $s \in V'_S$ adjacent to~$r$. Since $|V'_T|
\geq 3$, there must be a vertex $t \in V'_T$ adjacent to~$r$ and~$s$.
Since $|V'_U| \geq 4$, there must be a vertex $u \in V'_U$ adjacent to
$r,s$ and~$t$. Now $G[r,s,t,u]$ is a~$K_4$, a contradiction.

\medskip
If any set~$V_S$ is small then, by Fact~\ref{fact:del-vert}, we may assume it is
empty. We may therefore assume that every set~$V_S$ is either large or empty.
Furthermore, we may assume that some large
set~$V_S$ is not an independent set,
otherwise we can apply Claim~\ref{clm:K4-2p1+p3:4-indep-V_S},
to find that at most three sets~$V_S$ are non-empty and then,
after deleting the $k\leq 7$ vertices of~$C$
(which we may do by Fact~\ref{fact:del-vert}),
we can apply Claim~\ref{clm:K4-2p1+p3:3-indep-V_S}
to find that the clique-width of~$G$ is bounded.

We claim that $k=4$. For contradiction, suppose that $5\leq k\leq 7$.
Let $S \subseteq \{1,\ldots,k\}$ be a set such that~$G[V_S]$ is large and not independent.
By Claim~\ref{clm:K4-2p1+p3:s-bigger-than-1}, it follows that $|S|
\geq 2$. By Claim~\ref{clm:K4-2p1+p3:V12-indep}, the vertices of~$V_S$ cannot
be adjacent to two consecutive vertices of~$C$.
Without loss of generality, assume that $1 \in S$, which implies that $2, k \notin S$. Then there must be
a number $j \in \{3,\ldots,k-1\}$ such that $j \in S$, and $2,\ldots,j-1 \notin
S$. If $j \leq k-2$ then choosing $x \in V_S$ we find that
$G[x,v_1,\ldots,v_j]$ is a~$C_{j+1}$, contradicting the minimality of~$k$. If
$j=k-1$ then choosing $x \in V_S$ we find that $G[v_{k-1},v_k,v_1,x]$ is a~$C_4$, contradicting the minimality of~$k$.
Hence, we conclude that indeed $k=4$.

Again, let $S \subseteq \{1,\ldots,k\}$ be a set such that~$G[V_S]$ is large and not independent.
By Claims~\ref{clm:K4-2p1+p3:V12-indep}
and~\ref{clm:K4-2p1+p3:s-bigger-than-1},
we find that $S=\{1,3\}$ or $S=\{2,4\}$.
If there exist vertices $x,y,z \in V_{\{1,3\}}$ that induce a~$P_3$ then $G[v_2,v_4,x,y,z]$ would be a
$2P_1+\nobreak P_3$, which is not possible. Therefore~$G[V_{\{1,3\}}]$ must be $P_3$-free, so it must be
a disjoint union of cliques. If~$G[V_{\{1,3\}}]$ contained a~$K_3$ on vertices
$x,y,z$ then $G[v_1,x,y,z]$ would be a~$K_4$, which is not possible. Thus every component of~$G[V_{\{1,3\}}]$ and (by symmetry)~$G[V_{\{2,4\}}]$ must be
isomorphic to either~$P_1$ or~$P_2$.

If~$G[V_{\{1,3\}}]$ and~$G[V_{\{2,4\}}]$ each contain at most one edge then,
by deleting at most one vertex from each of~$V_{\{1,3\}}$ and~$V_{\{2,4\}}$ (which
we may do by Fact~\ref{fact:del-vert}), we obtain a graph in which every
set~$V_S$ is independent, in which case we find that~$G$ has bounded clique-width by
proceeding as before: we first apply Claim~\ref{clm:K4-2p1+p3:4-indep-V_S}, then delete the vertices of~$C$ by Fact~\ref{fact:del-vert} and finally apply Claim~\ref{clm:K4-2p1+p3:3-indep-V_S}.
Without loss of generality, we may therefore assume that~$G[V_{\{1,3\}}]$ contains two
edges~$xx'$ and~$yy'$ (which together induce a~$2P_2$).

We claim that every set~$V_T$
other than~$V_{\{1,3\}}$ and~$V_{\{2,4\}}$ is empty. Indeed, for contradiction, suppose such a set~$V_T$ is non-empty. Then, as stated above,~$V_T$ must be independent and large.
By Claim~\ref{clm:K4-2p1+p3:s-bigger-than-1}, $|T| \geq 2$. By symmetry we may
therefore assume that $\{1,2\} \subseteq T$. If $z \in V_T$ is adjacent to
both~$x$ and~$x'$ then $G[x,x',v_1,z]$ would be a~$K_4$, which is not possible. Therefore any vertex
in~$V_T$ can be adjacent to at most one vertex in each of~$\{x,x'\}$ and
$\{y,y'\}$. Since $|V_T| \geq 7 \geq 5$,
we find that~$V_T$ contains two vertices~$z,z'$, which are not adjacent to each other (as~$V_T$ is independent) and which are both non-adjacent to
the same vertex in~$\{x,x'\}$ and to the same vertex in~$\{y,y'\}$. By Claim~\ref{clm:K4-2p1+p3:no-4p1}, this is a
contradiction, so~$V_T$ must indeed be empty.

Recall that by Fact~\ref{fact:del-vert} we may delete the four vertices of~$C$.
We are therefore reduced to proving that $G[V_{\{1,3\}} \cup V_{\{2,4\}}]$ has
bounded clique-width. Note that if $x \in V_{\{1,3\}}$ is non-adjacent to two
vertices~$y$ and~$y'$ in~$V_{\{2,4\}}$ then~$y$ and~$y'$ must be adjacent,
otherwise $G[y,y',v_1,x,v_3]$ would be a $2P_1+\nobreak P_3$ (which is not possible). This, together with the fact that~$G$ is $K_4$-free, implies that any vertex in~$V_{\{1,3\}}$ has at most
two non-neighbours in~$V_{\{2,4\}}$, and vice versa. Let~$G'$ be the graph obtained from
$G[V_{\{1,3\}} \cup V_{\{2,4\}}]$ by applying a bipartite complementation
between~$V_{\{1,3\}}$ and~$V_{\{2,4\}}$. Then~$G'$ has maximum degree
at most~3. By Fact~\ref{fact:bip}, it remains to show that
every connected component of~$G'$ has bounded clique-width.

Consider a connected component~$D$ of~$G'$.
We first prove that~$D$ contains at most four vertices of degree~3.
Let $x\in D$ be a vertex that has degree~3 in~$D$.
Without loss of generality assume that $x \in V_{\{1,3\}}$. Then~$x$ has two
neighbours~$y,y' \in V_{\{2,4\}}$ and one neighbour $x' \in V_{\{1,3\}}$.
Recall that~$y$ is adjacent to~$y'$ due to the fact that~$G$ is $2P_1+\nobreak P_3$-free.
For the same reason and because~$G[V_{\{1,3\}}]$ only has connected components isomorphic to~$P_1$ or~$P_2$, we find that~$y$ and~$y'$ are adjacent to~$x'$ in~$D$ if they have degree~3 in~$D$.
Hence either $V(D)=\{x,x',y,y'\}$ or~$y,y'$ each have degree~2 in~$D$ and~$x'$ is a cut-vertex of~$D$. In the first case,~$D$ has at most four vertices of degree~3.
In the second case, we note that~$x'$ is adjacent to neither~$y$ nor~$y'$ in~$D$ (otherwise, for the same reason as before,~$x'$ would be adjacent to both of them if it had degree~3 in~$D$, so~$V_D$ would only contain the vertices $x,x',y,y'$).
We then find that~$D$ is either obtained by identifying a vertex of a triangle and the end-vertex of a path, meaning that~$D$ has only one vertex of degree~3 (namely~$x$),
or else by connecting two vertex-disjoint triangles via
a path between one vertex of one triangle and one of the other, meaning that~$D$ has exactly two vertices of degree~3.

Because~$D$ has at most four vertices of degree~3, we may remove these vertices by Fact~\ref{fact:del-vert} and then apply Lemma~\ref{lem:atmost-2} to find that~$D$ has bounded clique-width. This completes the proof of Theorem~\ref{thm:K4-2p1+p3-chordal}.\qed
\end{proof}

\section{Concluding Remarks}\label{sec:conclusions}

In our main result we characterized all but two graphs~$H$ for which the class of $H$-free chordal graphs has bounded clique-width. In particular we identified four new graph classes of bounded clique-width, namely the classes of $H$-free chordal graphs with $H\in \{\overline{K_{1,3}+2P_1},\allowbreak P_1+\nobreak \overline{P_1+P_3},\allowbreak P_1+\nobreak \overline{2P_1+P_2},\allowbreak \overline{S_{1,1,2}}\}$.
We also showed that the restriction from $H$-free graphs to $H$-free perfect graphs does not yield any new classes of bounded clique-width.
Moreover, we determined a new class of $(H_1,H_2)$-free graphs, namely the class of $(K_4,2P_1+P_3)$-free graphs, that has bounded clique-width via a reduction to chordal graphs. The latter means that only the following~$13$ cases, up to an equivalence relation,\footnote{For graphs $H_1,\ldots, H_4$, the classes of
$(H_1,H_2)$-free graphs and $(H_3,H_4)$-free graphs are
equivalent if $\{H_3,H_4\}$ can be obtained from
$\{H_1,H_2\}$ by some combination of the two operations: complementing both graphs in the pair; or if one of the graphs in the pair is~$K_3$,
replacing it with $\overline{P_1+P_3}$ or vice versa.
If two classes are equivalent then one has bounded clique-width if and only if the other one does (see e.g.~\cite{DP15}).}
are open in the classification for $(H_1,H_2)$-free graphs (see~\cite{DP15}).
\begin{enumerate}
\item \label{oprob:twographs:3P_1} $H_1=3P_1, \overline{H_2} \in \{P_1+P_2+P_3,P_1+2P_2,P_1+P_5,P_1+S_{1,1,3},P_2+P_4,\allowbreak S_{1,2,2},\allowbreak S_{1,2,3}\}$;
\item \label{oprob:twographs:2P_1+P_2} $H_1=2P_1+P_2, \overline{H_2} \in \{P_1+P_2+P_3,P_1+2P_2,P_1+P_5\}$;
\item \label{oprob:twographs:P_1+P_4} $H_1=P_1+P_4, \overline{H_2} \in \{P_1+2P_2,P_2+P_3\}$ or
\item \label{oprob:twographs:2P_1+P_3} $H_1=\overline{H_2}=2P_1+P_3$.
\end{enumerate}

We identify the following three main directions for future work.

\medskip
\noindent
{\em 1. Determine whether or not the class of $H$-free chordal graphs has bounded clique-width when $H \in \{F_1,F_2\}$.}

\medskip
\noindent
For this purpose, we recently managed to show that the class of
$H$-free split graphs has bounded clique-width in both these cases~\cite{BDHP15b} and we are currently exploring whether it is
possible to generalize the proof of this result to the class of $H$-free
chordal graphs. 
This seems to be a challenging task, as clique-width has a subtle transition from bounded to unbounded
even if the class of graphs under consideration has a ``slight'' enlargement. 
For instance, we showed that the class of $(P_1+\nobreak \overline{P_1+P_3})$-free chordal graphs has bounded clique-width, whereas 
the class of $(P_1+\nobreak \overline{2P_1+P_3})$-free chordal graphs, or even
$(2P_1+\nobreak \overline{3P_1})$-free split graphs (see Lemma~\ref{lem:split-classification}) already has unbounded clique-width. 

\medskip
\noindent
{\em 2. Exploit 
the techniques developed in this paper to attack
some of the other open cases in the classification for $(H_1,H_2)$-free graphs.}

\medskip
\noindent
In particular the case $H_1=2P_1+\nobreak P_3$, $H_2=\overline{2P_1+P_3}$ seems a good candidate for
a possible proof of bounded clique-width via a reduction to $\overline{2P_1+P_3}$-free chordal graphs (this subclass
of chordal graphs has bounded clique-width by Theorem~\ref{thm:chordal-classification}).
For this direction we also note that it may be worthwhile to more closely
examine the relationship between our study and the one on the computational
complexity of the {\sc Graph Isomorphism} problem (GI) for classes of
$(H_1,H_2)$-free graphs, which was initiated by Kratsch and Schweitzer~\cite{KS12}.
Recently, Schweitzer~\cite{Sc15} proved that for this study the number of open
cases is finite and pointed out similarities between classifying
boundedness of clique-width and solving GI for special graph classes.
Indeed, Grohe and Schweitzer~\cite{GS15} recently proved that {\sc Graph Isomorphism} is polynomial-time solvable on graphs of bounded clique-width.

\medskip
\noindent
{\em 3. Determine
whether or not the class of $H$-free split graphs has bounded clique-width when $H \in \{F_4,F_5\}$.}

\medskip
\noindent
The
fact that the (un)boundedness of the clique-width of the class of
$H$-free split graphs is known for so many graphs~$H$ raises the question
whether we can obtain a full classification of all graphs~$H$ for which the
class of $H$-free split graphs has bounded clique-width.
We recently 
reduced~\cite{BDHP15b} this to two problematic cases, namely the graphs~$F_4$ and~$F_5$ displayed in Figure~\ref{fig:open-split}.

\medskip
\noindent
Finally we pose the question of whether it is possible to extend the four newly found classes of $H$-free chordal graphs (when
$H\in \{\overline{K_{1,3}+2P_1},\allowbreak P_1+\nobreak \overline{P_1+P_3},\allowbreak P_1+\nobreak \overline{2P_1+P_2},\allowbreak \overline{S_{1,1,2}}\}$) to larger classes
of graphs for which {\sc Hamilton Cycle} is polynomial-time solvable.
\bibliography{mybib-chordal}

\end{document}